\theoremstyle{plain}
\newtheorem{thm}{Theorem}
\newtheorem{prop}{Proposition}
\newtheorem{lem}[thm]{Lemma}
\theoremstyle{definition}
\newtheorem{dfn}{Definition}
\DeclareMathOperator*{\argmin}{argmin}
\DeclareMathOperator{\Var}{Var}
\begin{document}

\date{}

\title{\Large \bf N-output Mechanism: Estimating Statistical Information from Numerical Data under Local Differential Privacy}

\author{
{\rm Incheol \ Baek}\\
Korea University
\and
{\rm Yon Dohn Chung}\\
Korea University
} 

\maketitle

\begin{abstract}
Local Differential Privacy (LDP) addresses significant privacy concerns in sensitive data collection. In this work, we focus on numerical data collection under LDP, targeting a significant gap in the literature: existing LDP mechanisms are optimized for either a very small ($|\Omega| \in \{2, 3\}$) or infinite output spaces. However, no generalized method for constructing an optimal mechanism for an arbitrary output size $N$ exists. To fill this gap, we propose the \textbf{N-output mechanism}, a generalized framework that maps numerical data to one of $N$ discrete outputs. 

We formulate the mechanism's design as an optimization problem to minimize estimation variance for any given $N \geq 2$ and develop both numerical and analytical solutions. This results in a mechanism that is highly accurate and adaptive, as its design is determined by solving an optimization problem for any chosen $N$.  Furthermore, we extend our framework and existing mechanisms to the task of distribution estimation. Empirical evaluations show that the N-output mechanism achieves state-of-the-art accuracy for mean, variance, and distribution estimation with small communication costs.
\end{abstract}

\section{Introduction}

In today's interconnected world, collecting and analyzing data from various devices, such as computers, phones, and IoT devices, has become increasingly important. However, these data often contain sensitive information, raising significant privacy concerns. Local Differential Privacy (LDP)~\cite{kasiviswanathan2011can} has emerged as a solution to address these concerns. In LDP environments, clients perturb their data locally before sending them to a server (aggregator). This process ensures that the true data remain hidden from the server, which only has access to the perturbed data. Despite this, the server can still estimate statistical information, such as means, variances, and distributions, from the perturbed data. Given these benefits, big tech companies have adopted LDP techniques: Google in its Chrome browser \cite{erlingsson2014rappor}, Apple for emoji usage analysis~\cite{Apple}, and Microsoft for telemetry data collection~\cite{ding2017collecting}.

An LDP mechanism, formally denoted as $\mathcal{M}: \mathcal{X} \to \Omega$, is a function that maps a client's true value to a perturbed values in an output space $\Omega$. LDP mechanisms have been developed specifically for the type of data being collected, which defines the input space $\mathcal{X}$. Much of the initial research focused on categorical data, where the goal is to estimate the frequency of various categories~\cite{wang2017locally, erlingsson2014rappor, bassily2015local, wang2019consistent}. While it is possible to handle numerical data by discretizing them into categories, this approach often fails to preserve inherent properties of numerical data. To overcome this limitation, subsequent research has focused on developing mechanisms specifically for numerical data collection~\cite{duchi2018minimax, wang2019collecting, zhao2020local, li2020estimating}, which can accurately estimate statistics like mean and distribution while providing rigorous privacy guarantees.

Existing numerical LDP mechanisms can be characterized by the size of their output space, $|\Omega|$. For example, Duchi's mechanism~\cite{duchi2018minimax} randomly maps a value to one of two possible outputs ($|\Omega|=2$), while the Three-output mechanism~\cite{zhao2020local} uses three ($|\Omega|=3$). In contrast, others like Piecewise Mechanisms (PM)~\cite{zhao2020local,wang2019collecting} map to a continuous range ($|\Omega|=\infty$). The size of this output space is a critical design choice, especially concerning the privacy budget, $\epsilon$. A smaller $\epsilon$ provides stronger privacy but requires adding more noise, which can harm data utility. For small $\epsilon$ (in the high-privacy regime), it is known that mechanisms with a smaller number of outputs tend to offer better utility ~\cite{zhao2020local, wang2019collecting}.

While optimized mechanisms have been developed for $|\Omega|= 2, 3, \infty$, there has been limited exploration of mechanisms designed for a finite number of outputs. This remains an problem in the field. Motivated by this gap, we propose a novel and generalized LDP mechanism called the \textbf{N-output mechanism}, which maps an input value to one of $N$ discrete outputs. We formulate a general construction method that optimizes the mechanism's utility for given number of outputs, $N \geq 2$. This allows our mechanism to achieve high performance for various statistical estimation tasks---such as mean, variance, and distribution estimation---across a wide spectrum of privacy settings.

Furthermore, the N-output mechanism offers several key advantages. First, it is bit-efficient, reducing communication costs compared to PMs~\cite{zhao2020local, wang2019collecting} that rely on float-point representations. Second, by allowing for a sufficiently large yet finite $N$, it enables accurate distribution estimation, a task that is challenging for mechanisms with a minimal number of outputs like Duchi's~\cite{duchi2018minimax} or the Three-output mechanism~\cite{zhao2020local}. Consequently, our proposed mechanism not only fills the existing gap but also generalizes prior work into a single, flexible, and powerful framework.

Our contributions are as follows:
\begin{itemize}
    \item \textbf{A novel, generalized mechanism.} We introduce the \textbf{N-output mechanism}, a new LDP framework for numerical data collection. We formulate its design as a solvable optimization problem to find the optimal mechanism for any given number of outputs, $N \geq 2$.
  
\item \textbf{Dual optimization solutions.} We provide both a highly accurate numerical solution that achieves state-of-the-art (SOTA) utility and a theoretically-grounded analytical solution that generalizes prior works~\cite{duchi2018minimax, zhao2020local}.

\item \textbf{Advancing distribution estimation.} We enable distribution estimation for numerical LDP mechanisms, which has not been fully addressed. We perform LDP distribution estimation~\cite{li2020estimating} using the Expectation-Maximization (EM) algorithm~\cite{dempster1977maximum}, for which we also derive the required general probability transition matrices.

\item \textbf{Empirical evaluation.} We empirically demonstrate that our mechanism achieves superior accuracy for mean, variance, and distribution estimation across a wide range of privacy settings, while being up to 16 times more communication-efficient than baselines.
\end{itemize}

\section{Related Work}
LDP has been extensively studied for different data types and estimation tasks. In this section, we review the foundational work in both categorical and numerical data collection under LDP.

\subsection{LDP for categorical data}
The pioneering work in LDP largely focused on frequency estimation for categorical data. A foundational technique is Randomized Response (RR)~\cite{warner1965randomized}, where users respond truthfully to a question with a certain probability. Direct Encoding (DE)~\cite{wang2017locally}, a generalized randomized response, is the most straightforward approach where the categorical value itself is perturbed and reported. 

Another widely adopted approach is Unary Encoding (UE), also known as one-hot encoding. In UE, a categorical value $v$ from a domain of size $d$ is encoded into a $d$-dimensional bit vector where only the $v$-th bit is 1. The perturbation then happens on this vector, where each bit is flipped with specific probabilities ($p$ for a 1-bit, $q$ for a 0-bit). Symmetric Unary Encoding (SUE), Google's RAPPOR~\cite{erlingsson2014rappor}, uses symmetric flipping probabilities where $p + q = 1$. Optimized Unary Encoding (OUE)~\cite{wang2017locally} improves upon SUE by selecting $p$ and $q$ to specifically minimize the estimation variance. 

While effective for their intended purpose, these methods are not directly suitable for numerical data. Although these categorical mechanisms can be adapted for numerical data through discretization, this process often discards numerical characteristics, motivating the need for specialized numerical mechanisms.

\subsection{LDP for numerical data}
Duchi's Mechanism~\cite{duchi2018minimax} is a foundational work for mean estimation over a continuous domain such as $[-1, 1]$. It is characterized by an extremely small output space, mapping any input value to one of only two possible outputs $(|\Omega|=2)$. While highly communication-efficient and effective in high privacy (small $\epsilon$) regimes, its minimal output space makes it unsuitable for more complex tasks like distribution estimation and leads to a significant drop in accuracy in low privacy regimes.

The Three-output mechanism~\cite{zhao2020local} was proposed as an extension, increasing the output space to three discrete values $(|\Omega|=3)$. This addition allows better utility in certain privacy regimes compared to Duchi's mechanism~\cite{duchi2018minimax}. However, it still faces significant limitations in accuracy and applicability to complex tasks due to its small output space.

PM~\cite{wang2019collecting} represents a different approach, mapping a numerical input to a continuous output range $(|\Omega|=\infty)$. Later work by~\cite{zhao2020local} further optimized PM into sub-optimal (PM-sub) version. PMs allow for high utility, especially in low privacy regimes. However, this comes at the cost of higher communication overhead, as transmitting a continuous (floating-point) value requires significantly more bits than transmitting a discrete value.

Our proposed N-output mechanism is motivated by the gap between these two extremes: the highly constrained, discrete outputs of Duchi's~\cite{duchi2018minimax} and the Three-output mechanisms~\cite{zhao2020local}, and the infinite, continuous output of PMs~\cite{zhao2020local,wang2019collecting}. Our objective is to achieve high utility for various complex tasks while maintaining low communication costs.

\section{Preliminaries}
\subsection{Local Differential Privacy}
 LDP stems from Differential Privacy (DP)~\cite{dwork2006differential}, a widely recognized standard for preserving privacy in data publishing and analysis. In DP models, a data curator releases statistical information and analysis, perturbed by a randomized algorithm, preventing adversaries from inferring individuals' information. However, this model has a vulnerability: the data curator initially has access to raw data, posing a risk if the data publisher is untrustworthy or if adversaries gain access to the raw data. LDP, defined in Definition \ref{def:LDP}, addresses this risk by having each client perturb their data before sending it to the aggregator.

\begin{dfn} \label{def:LDP}
A randomized mechanism $\mathcal{M}: \mathcal{X} \to \Omega$ satisfies $\epsilon$-LDP if and only if, for any two input values $x, x' \in \mathcal{X}$ and an output $y \in \Omega$, the following inequality holds:
\begin{equation*}
    \Pr[\mathcal{M}(x) = y] \leq e^{\epsilon}\Pr[\mathcal{M}(x') = y]
\end{equation*}
\end{dfn}

For the additional preliminaries for Duchi's~\cite{duchi2018minimax}, three-output~\cite{zhao2020local}, and PMs~\cite{wang2019collecting, zhao2020local} are in Appendix~\ref{sec:add_prelim}.

\subsection{Mean estimation under LDP}
Let the input domain be $\mathcal{X} = [-1, 1]$ and the output space be $\Omega$. Let $x \in \mathcal{X}$ be the true value of a user, and $y \in \Omega$ be the perturbed value produced by an LDP mechanism. Let $X$ and $Y$ be the random variables representing the true value and the perturbed value, respectively. The probability of reporting $y$ given that the true value is $x$ is denoted as $\Pr[Y=y|X=x]$, which we simplify as $\Pr[y|x]$. The expected value and variance of the perturbed value $Y$ given the true value $x$ are denoted as $\mathbb{E}[Y|x]$ and $\text{Var}[Y|x]$, respectively. For correct mean estimation, the unbiasedness constraint, $\mathbb{E}[Y|x] = x$ must hold.

\section{N-output Mechanism}
In this section, we introduce the \textbf{N-output mechanism}. We begin by defining the general optimization problem that numerical LDP mechanisms~\cite{duchi2018minimax, zhao2020local, wang2019collecting} aim to solve. We then formulate this problem for a discrete output space of size $N$, transforming an intractable problem into a solvable form. Subsequently, we present both a numerical solution that achieves high accuracy and an analytical solution that, while simplified, offers theoretical reliability and generalizes previous works. Lastly, we provide a comprehensive analysis of our mechanism and compare its performance with existing mechanisms.

\subsection{Problem definition}
A client has a value $x \in [-1, 1]$ and the randomly perturbed value is $y = \mathcal{M}(x)$, where $y \in \Omega$. For simplicity, we assume that each client send a scalar value. The goal is to find $\mathcal{M}$ minimizing a cost function subject to certain constraints. The cost function is typically defined using the noise variance, $\text{Var}[Y|x]$. Since numerical LDP mechanisms require the unbiasedness constraint $\mathbb{E}[Y|x]=x$, the variance becomes equivalent to the Mean Squared Error (MSE):
\begin{equation}
    \Var[Y|x] = \int_{\Omega} (y-x)^2 \Pr[y|x] dy
\end{equation}
Thus, the noise variance directly measures how far the perturbed value can deviate form the original value. To evaluate the overall performance of the mechanism across all inputs, we define the cost function using the $r$-norm of the noise variance. This provides a generalized measure of the error. Let $f(x)$ be the probability distribution of the input $x$. The cost function to be minimized is:
\begin{equation}
    J_\mathcal{M}^{(r)} = \left( \int_{\mathcal{X}} (\Var[Y|x])^r f(x) dx \right)^{1/r}
\end{equation}
Our goal is to find $\mathcal{M}$ that minimizes this cost function while satisfying LDP and other probability constraints:
\begin{equation} \label{LDP_problem}
\begin{aligned}
\textrm{Minimize} \quad &  \boldsymbol{\left( \int_{\mathcal{X}} \left( \text{Var}[Y|x] \right)^r f(x) dx \right)^{1/r}}\\
\textrm{subject to} \quad & \text{Pr}[y | x] \leq e^{\epsilon}\text{Pr}[y | x']\\
  & \mathbb{E}[Y|x] = x \\
  & \int_{\Omega} \text{Pr}[y | x] dy = 1 \\
  & \text{Pr}[y | x] \geq 0 \\
\end{aligned}
\end{equation}

\begin{figure*}[ht!]
    \centering
    \begin{subfigure}[b]{0.495\textwidth}
        \centering
        \includegraphics[width=\textwidth]{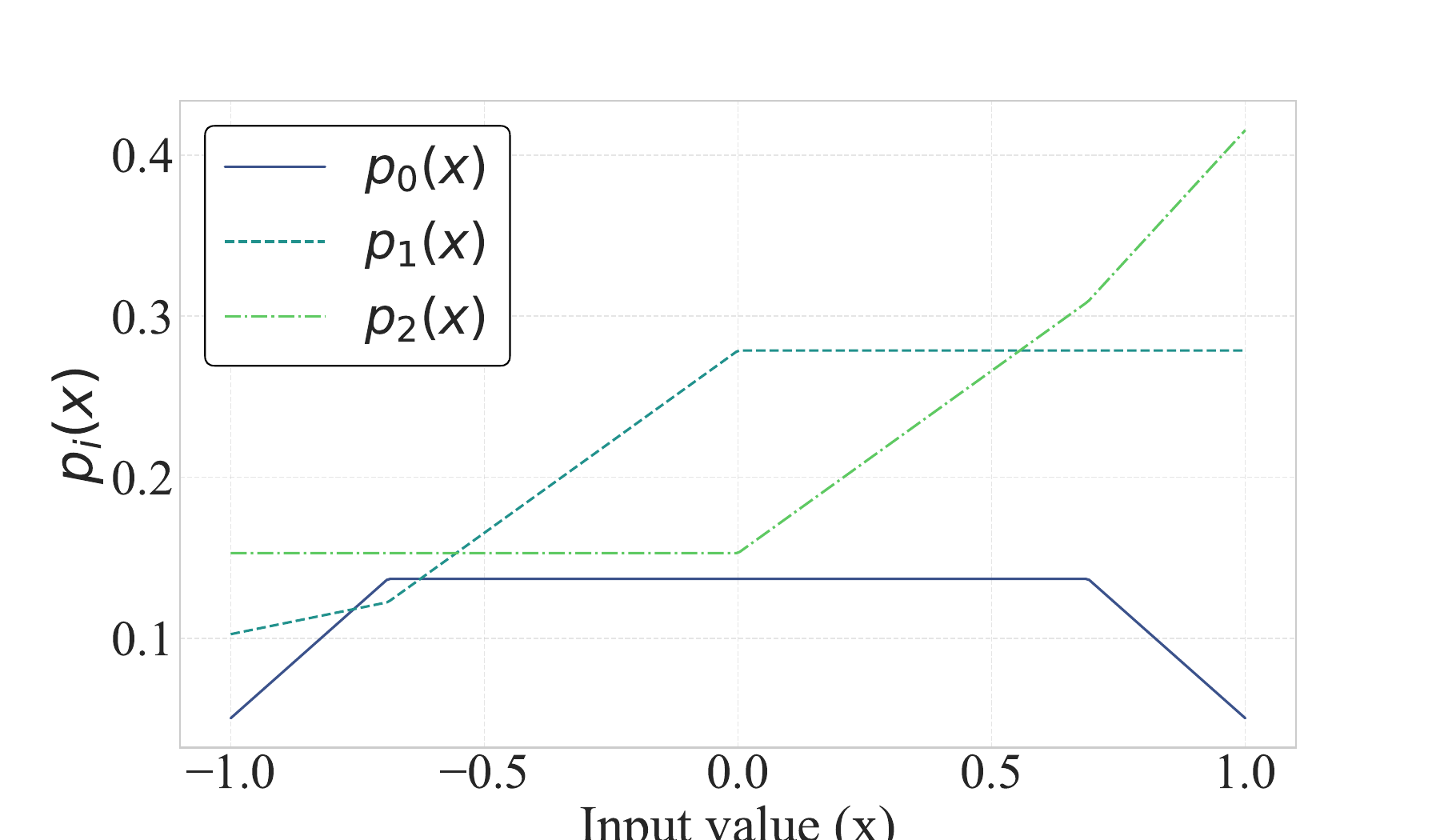} 
        \caption{Probability functions $p_i(x)$}
        \label{fig:pr_ai}
    \end{subfigure}
    \begin{subfigure}[b]{0.495\textwidth}
        \centering
        \includegraphics[width=\textwidth]{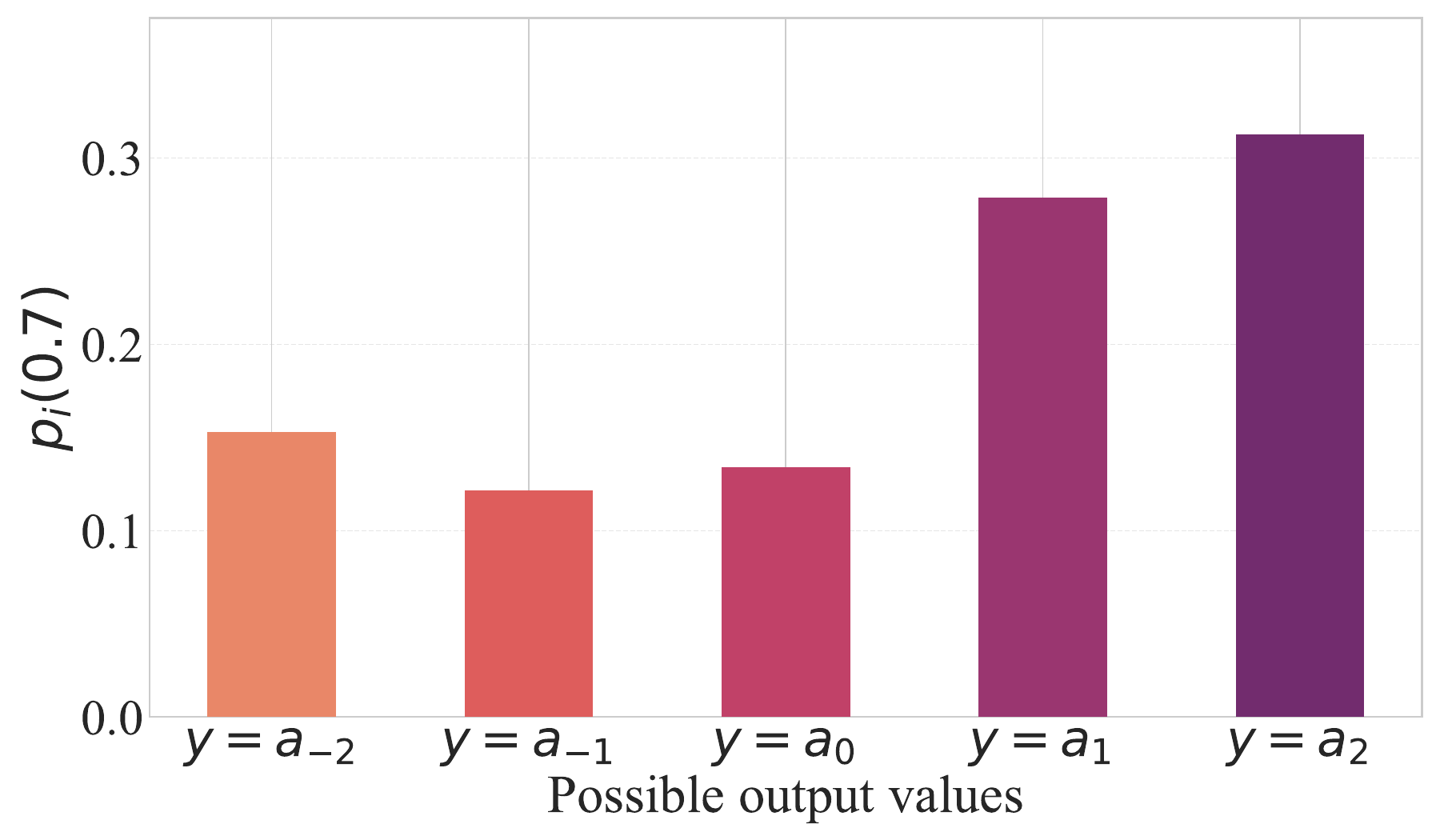} 
        \caption{Output probability distribution for input $x=0.7$.}
        \label{fig:pr_y}
    \end{subfigure}
    \caption{An example of the N-output mechanism for $N=5$ and $\epsilon=1$. The output values are $a_1 \approx 1.87$ and $a_2 \approx 2.57$. The symmetric outputs $a_{-1}, a_{-2}$ are omitted for clarity in (a).}
    \label{fig:pr}
\end{figure*}

\subsection{Problem formulation}
In this work, our goal is to find an adaptive mechanism $\mathcal{M}$ that optimizes problem \eqref{LDP_problem} for a given output size $|\Omega|=N$. We define a symmetric, discrete output space $\Omega$ as follows:
\begin{gather*}
\Omega= \begin{cases}
    \{ a_i\}_{i=-n}^{n} \setminus \{a_0\} & \text{if } N = 2n \\ 
    \{ a_i\}_{i=-n}^{n} & \text{if } N = 2n + 1 \\ 
\end{cases}
\\
\text{where } a_{-i} = -a_i, \quad a_0 = 0,  \quad \text{and } a_i < a_{i+1} \text{ for all } i.
\end{gather*}
A discrete-output LDP mechanism $\mathcal{M}$ maps an input $x \in \mathcal{X}$ to an output $a_i \in \Omega$ with a randomization probability, which we denote as $p_i(x) \triangleq \Pr[a_i|x]$. Our objective is to determine the output values $\{a_i\}_{i=-n}^{i=n}$ and the probability functions $\{p_i(x) \}_{i=-n}^{i=n}$ that minimize the cost function while satisfying $\epsilon$-LDP. 

However, finding the optimal functions $p_i(x)$ for all $x \in [-1, 1]$ is an infinite-dimensional optimization problem and thus computationally intractable. To make the problem solvable, we model each $p_i(x)$ as a piecewise linear functions, which is flexible enough to approximate complex distributions and also adopted in the previous works~\cite{duchi2018minimax, zhao2020local}. This reduces the problem to a finite-dimensional optimization task, as we only need to determine the probability values at a finite number of points.

Specifically, we divide the interval $[-1, 1]$ into sub-intervals $[x_{j-1}, x_j]$. The restriction of $p_i(x)$ to any such sub-interval is a single linear function, which we denote by $p_{i,j}(x)$. In other words, $p_i(x) = p_{i,j}(x)$ for $x \in [x_{j-1}, x_j]$. This linear function $p_{i,j}(x)$ is defined by the interpolation between its values at the endpoints. We denote the probability value $p_i(x)$ at an endpoint $x_j$ as $P_{i,j}$. The function $p_{i,j}(x)$ for $x \in [x_{j-1} , x_j]$ is then given by for $i=-n, \ldots, n$ and $j=-n+1, \ldots, n$:
\begin{equation}\label{pij}
    p_{i,j}(x) = \frac{P_{i, j} - P_{i, j-1} }{x_j - x_{j-1}} (x - x_{j}) + P_{i,j}
\end{equation}

The variance of the output $Y$ for a given input $x$ is defined as:
\begin{equation}
    \Var[Y|x] = \sum_{i=-n}^n a_i^2 p_i(x) - x^2
\end{equation}

To illustrate the behavior of the N-output mechanism, Figure~\ref{fig:pr} shows an example of our mechanism with $N=5$ and $\epsilon=1$. Figure~\ref{fig:pr_ai} plots the probability $p_i(x)$. As shown, the probability of selecting a particular output $a_i$ changes with $x$, and an input is most likely to be mapped to the nearest output value. For a specific instance, Figure~\ref{fig:pr_y} shows the probability distribution for the input $x=0.7$.

\textbf{Challenges.} While this formulation is tractable, its primary challenge lies in the high dimensionality of the optimization problem. The optimization variables ($P_{i,j}$ and $a_i$) are subject to several equality constraints given in \eqref{LDP_problem}. The number of variables not fixed by these constraints---the free variables---determines the complexity of the optimization.

Specifically, the number of free variables for determining $P_{i,j}$ grows quadratically with $N$. If $N=2n$, there are $n(N-2)$ free variables, and if $N=2n+1$, there are $n(N - 1) - 1$ free variables. This complexity explains why prior works were limited to simple cases. For example, Duchi's mechanism ($N=2$) \cite{duchi2018minimax} has no such variables, while the three-output mechanism ($N=3$) \cite{zhao2020local} has only one. In contrast, the N-output mechanism requires optimizing $\mathcal{O}(N^2)$.

Furthermore, there is an inherent interdependence between $p_{i, j}(x)$ and $a_i$. Adjusting one set of variables invariably affects the other, making it infeasible to optimize them independently. This interdependence adds a layer of nonlinearity and non-convexity to the optimization problem.

In summary, the combination of non-convexity, non-linearity, and a large number of interdependent variables makes the optimization problem hard. Thus, we numerically solve the optimization problem using Sequential Least Squares Programming (SLSQP) algorithm. Moreover, we find that by appending certain constraints, we can simplify the optimization problem to a form that not only generalizes existing mechanisms but also allows for an analytical solution. We present the numerical solution first, followed by this analytical approach for the simplified problem.

\subsection{Numerical solution}
To solve the optimization problem numerically, we employ the SLSQP algorithm, designed to solve non-linear programming problems with both equality and inequality constraints. At each iteration, it approximates the original problem with a quadratic subproblem to find an optimal search direction.

\textbf{Variables.} The optimization variables involve the output space $\Omega = \{a_i\}_{i=-n}^n$ and the set of endpoint probabilities $\mathcal{P} = \{P_{i,j}\}$ for $i,j \in \{-n,\ldots, n\}$. Especially, $N=2n$, $P_{0,j}=0$ fixed.  In the following, we detail the formulation for an odd number of outputs ($N=2n+1$). Note that the endpoints $x_j$ are not variables to optimize. By unbiasedness $x_j$ is defined as,
\begin{equation}
    x_j = \sum_{i=-n}^n a_i P_{i,j}
\end{equation}

\textbf{Initialization.} A well-chosen initial point is crucial for stable convergence and finding a high-quality solution. We initialize our variables by adapting PM-sub~\cite{zhao2020local}. Specifically, we discretize the continuous output of PM-sub~\cite{zhao2020local} into $N$ values and derive the corresponding probabilities, providing a strong starting point for the optimization.

\textbf{Objective functions.} The objective function is the $r$-norm of the noise variance. $r=1$, it becomes the average-case noise variance, and for $r=\infty$, it becomes the worst-case noise variance. Existing works~\cite{wang2019collecting, duchi2018minimax, zhao2020local} mainly focus on optimizing the worst-case noise variance. Thus, we first address the $r \to \infty$ case and then the $r=1$ case.

As $r \to \infty$, the cost function becomes the maximum variance over the entire domain, $\max_{x \in [-1,1]}\Var[Y|x]$. However, it is computationally intractable to check the variance at every point $x \in [-1, 1]$. The following proposition states that the maximum variance within each piecewise linear segment $[x_{j-1}, x_j]$ can only occur at one of three candidate points.
\begin{prop} \label{prop:1}
For $[x_{j-1}, x_j]$, its maximum value within this interval must occur at one of the endpoints, $x_{j-1}$ or $x_j$, or at its vertex, $\bar{x}_j$, if the vertex lies within the interval. The vertex $\bar{x}_j$ is defined as:
\begin{equation}
    \bar{x}_j = \frac{\sum_{i=-n}^n (P_{i,j} - P_{i,j-1}) a_i^2}{2\sum_{k=-n}^n (P_{k,j} - P_{k, j-1})a_k}
\end{equation}
    \begin{proof}
        The proof is provided in the Appendix~\ref{pf:prop:1}.
    \end{proof}
\end{prop}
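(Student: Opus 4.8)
The plan is to exploit the fact that, when restricted to a single linear segment $[x_{j-1}, x_j]$, the variance is a concave quadratic function of $x$, so its maximum on that segment is governed by at most three candidate points: the two endpoints and the unique stationary point.

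First I would substitute the piecewise-linear form \eqref{pij} into $\Var[Y|x] = \sum_{i=-n}^n a_i^2 p_i(x) - x^2$. On $[x_{j-1}, x_j]$ each $p_{i,j}(x)$ is affine in $x$, hence $\sum_i a_i^2 p_{i,j}(x)$ is affine in $x$, and therefore $V(x) \triangleq \Var[Y|x]$ restricted to this segment is a quadratic polynomial whose only second-order contribution is the $-x^2$ term. Its leading coefficient is thus $-1 < 0$, so $V$ is strictly concave on $[x_{j-1}, x_j]$.

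Next I would invoke the elementary fact that a strictly concave differentiable function on a compact interval attains its maximum either at the unique point where its derivative vanishes, provided that point lies in the interval, or otherwise at one of the two endpoints. It then remains to identify the stationary point. Differentiating the affine-plus-$(-x^2)$ expression gives $V'(x) = \sum_i a_i^2 \frac{P_{i,j} - P_{i,j-1}}{x_j - x_{j-1}} - 2x$, which vanishes at $x = \frac{\sum_i (P_{i,j} - P_{i,j-1}) a_i^2}{2(x_j - x_{j-1})}$. To rewrite this as the stated $\bar x_j$, I would use the unbiasedness relation $x_j = \sum_{i=-n}^n a_i P_{i,j}$ from the earlier display, which yields $x_j - x_{j-1} = \sum_k (P_{k,j} - P_{k,j-1}) a_k$; substituting this into the denominator produces exactly $\bar x_j$. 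Combining the two pieces, the maximum of $V$ over $[x_{j-1}, x_j]$ is attained at $x_{j-1}$, at $x_j$, or at $\bar x_j$ when $\bar x_j$ lies in the interval, which is the claim.

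The only delicate point I anticipate is bookkeeping rather than conceptual: checking that the partition is non-degenerate so that $x_{j-1} < x_j$ (guaranteeing the denominator of $\bar x_j$ is nonzero and the differentiation is legitimate), and tracking the index ranges carefully so that the identity $x_j - x_{j-1} = \sum_k (P_{k,j} - P_{k,j-1}) a_k$ genuinely follows from the constraints in \eqref{LDP_problem}. Everything else reduces to a one-line concavity argument together with an elementary derivative computation.
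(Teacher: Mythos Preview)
Your proposal is correct and mirrors the paper's own argument almost exactly: both observe that on $[x_{j-1},x_j]$ the variance is a downward-opening quadratic (leading coefficient $-1$), so its maximum lies at an endpoint or at the vertex, and both compute that vertex and invoke $x_j=\sum_i a_iP_{i,j}$ to rewrite the denominator. If anything, your write-up is slightly more careful, making explicit the substitution $x_j-x_{j-1}=\sum_k(P_{k,j}-P_{k,j-1})a_k$ that the paper only hints at.
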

Therefore, the objective function can be reformulated into a tractable problem by checking only a finite set of points:
\begin{equation}
    \underset{\Omega, \mathcal{P}}{\text{Minimize}} \quad \max_{j \in \{-n+1, \ldots, n\}, x \in \{x_{j-1}, \bar{x}_j, x_j\}} \Var[Y|x]
\end{equation}

\textbf{Constraints.} There are four types of constraints: unbiasedness, LDP, and probability validity. An excessive number of constraints can increase computational cost and lead to convergence issues. Therefore, we simplify them where possible without loss of generality.

\textbf{Unbiasedness.} For the mechanism to be unbiased, the expected value must equal the input, $\mathbb{E}[Y|x] = x$. Our piecewise linear formulation is designed to inherently satisfy this condition, as stated in the following proposition.
\begin{prop} \label{prop:2}
Let the following conditions hold:
\begin{enumerate}
    \item Strictly ordered outputs: $a_{i-1} < a_i$ \, $\forall i$.
    \item Appropriate domain boundaries: $x_{-n} = -1$, $x_n = 1$.
\end{enumerate}
Then, for any input $x \in [-1, 1]$, the mechanism is unbiased:
\begin{equation}
    \mathbb{E}[Y|x] = x
\end{equation}

\begin{proof}
    The proof is provided in the Appendix~\ref{pf:prop:2}.
\end{proof}
\end{prop}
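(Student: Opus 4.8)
The plan is to exploit the fact that both the target $x\mapsto x$ and the conditional mean $\mathbb{E}[Y\mid x]=\sum_{i=-n}^{n}a_i\,p_i(x)$ are \emph{piecewise-affine} functions of $x$ sharing the same knots $\{x_j\}$, so that coincidence on $[-1,1]$ follows from coincidence at the knots. First I would note that, since each $p_i$ is obtained by linear interpolation of the endpoint values $P_{i,j}$ via \eqref{pij} and adjacent pieces share those endpoint values, $\mathbb{E}[Y\mid x]$ is a continuous, piecewise-affine function of $x$ whose breakpoints are exactly the $x_j$. Fixing one sub-interval $[x_{j-1},x_j]$, the restriction of $\mathbb{E}[Y\mid x]$ to it is affine, and its endpoint values are $\sum_{i=-n}^{n}a_iP_{i,j-1}$ and $\sum_{i=-n}^{n}a_iP_{i,j}$; by the defining relation $x_j=\sum_{i=-n}^{n}a_iP_{i,j}$ these equal $x_{j-1}$ and $x_j$ respectively. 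Hence on $[x_{j-1},x_j]$ the affine function $\mathbb{E}[Y\mid x]$ agrees with the affine function $x\mapsto x$ at two distinct points, so the two coincide throughout the sub-interval. (The even-$N$ case is identical, taking $a_0=0$ and $P_{0,j}=0$ so that index $0$ contributes nothing.)

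To glue the pieces I would then invoke the boundary hypothesis $x_{-n}=-1$, $x_n=1$: the sub-intervals $[x_{j-1},x_j]$ for $j=-n+1,\dots,n$ then tile $[-1,1]$, and combining this with the per-segment identity gives $\mathbb{E}[Y\mid x]=x$ for every $x\in[-1,1]$, which is the claim.

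The step that requires real care — and the one for which the ordering hypothesis $a_{i-1}<a_i$ is invoked — is justifying that the knots are genuinely increasing, $x_{j-1}<x_j$, so that "two distinct endpoints" above is legitimate and the sub-intervals are non-degenerate. Writing $d_i:=P_{i,j}-P_{i,j-1}$, probability normalization at the knots gives $\sum_{i}d_i=0$, hence $x_j-x_{j-1}=\sum_i a_id_i=\sum_i(a_i-a_t)\,d_i$ for any fixed index $t$. Choosing $t$ at the index where the construction shifts probability mass from the lower outputs to the higher ones as the input grows — i.e.\ using the monotone mass-transport behaviour of the mechanism, so that $d_i\le 0$ for $i<t$ and $d_i\ge 0$ for $i\ge t$ — the strict ordering $a_{i-1}<a_i$ forces $(a_i-a_t)d_i\ge 0$ for every $i$, with a strict inequality whenever the $P_{i,\cdot}$ actually change between $x_{j-1}$ and $x_j$; thus $x_{j-1}<x_j$. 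I expect that single-crossing "mass-transport" property of the constructed $P_{i,j}$ to be the one nontrivial fact that has to be read off the mechanism's definition; once it is in hand, the proposition reduces to the elementary observation that a continuous piecewise-affine interpolant of the identity at an ordered set of knots spanning $[-1,1]$ is the identity.
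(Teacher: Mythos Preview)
Your core argument is exactly the paper's: on each sub-interval $[x_{j-1},x_j]$ the conditional mean $\sum_i a_i p_{i,j}(x)$ is affine in $x$, and by the defining relation $x_j=\sum_i a_iP_{i,j}$ it hits $x_{j-1}$ and $x_j$ at the endpoints, hence equals $x$ throughout. The paper does this by a one-line substitution of \eqref{pij} into $\sum_i a_ip_{i,j}(x)$ and cancelling $(x_j-x_{j-1})/(x_j-x_{j-1})$; your ``two affine functions agreeing at two points'' is the same computation said in words.

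Your final paragraph, however, is a detour the paper does not take and that you cannot complete from the stated hypotheses. The paper never derives $x_{j-1}<x_j$ from $a_{i-1}<a_i$; non-degeneracy of the sub-intervals is already implicit in the mechanism being well-defined, since $x_j-x_{j-1}$ sits in the denominator of \eqref{pij}. In particular, hypothesis~(1) is not actually used in the paper's proof --- it is a standing structural assumption on $\Omega$, not an ingredient of the unbiasedness argument. The ``single-crossing mass-transport'' property you invoke is neither stated nor derivable for general $P_{i,j}$, which are free optimization variables with no a~priori monotonicity in $j$. The clean fix is simply to note that well-definedness of \eqref{pij} already forces $x_{j-1}\neq x_j$, and drop the mass-transport discussion entirely.
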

Therefore, to guarantee unbiasedness, the following constraints are enforced in the optimization problem:
\begin{gather}
    a_{i-1} < a_i \quad \forall i \\
    \sum_{i=-n}^n a_i P_{i,-n} = -1 , \quad \sum_{i=-n}^n a_i P_{i,n} = 1
\end{gather}

\textbf{LDP constraint.} A direct implementation of the LDP constraint, $p_i(x) \leq e^\epsilon \min_{x'} p_i(x')$, is problematic because the $\min$ operator is non-differentiable and it could fail to detect violation of the constraint. Fortunately, the structure of the probability functions simplifies this, as shown in the following proposition.
\begin{prop} \label{prop:3}
    For any output $a_i$, the minimum probability of $p_i(x)$ occurs at an endpoint of the domain, as follows:
    \begin{enumerate}
        \item If $i < 0$, then $\min_{x \in [-1, 1]} p_i(x) = p_i(1)$.
        \item If $i > 0$, then $\min_{x \in [-1, 1]} p_i(x) = p_i(-1)$.
        \item If $i = 0$, then $\min_{x \in [-1, 1]} p_i(x) = p_i(-1) = p_i(1)$.
    \end{enumerate}
    \begin{proof}
        The proof is provided in Appendix~\ref{pf:prop:3}.
    \end{proof}
\end{prop}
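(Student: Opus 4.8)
The plan is to convert the statement about the infimum of $p_i$ over the continuum $[-1,1]$ into one about the finite list of endpoint values $\{P_{i,j}\}_{j=-n}^{n}$, and then to read off the minimizing index from the monotone shape of $j \mapsto P_{i,j}$, using the symmetry $a_{-i}=-a_i$, $a_0=0$ for the $i=0$ case.

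First I would observe that, by the piecewise-linear ansatz \eqref{pij}, each $p_i$ is continuous on $[-1,1]$ and affine on every sub-interval $[x_{j-1},x_j]$ of the partition $-1 = x_{-n} < \dots < x_{n} = 1$. An affine function on a compact interval attains its minimum at one of the two endpoints, so $\min_{x\in[-1,1]} p_i(x)=\min_{-n\le j\le n} p_i(x_j)=\min_{-n\le j\le n}P_{i,j}$, and it suffices to locate the smallest term of the sequence $(P_{i,j})_{j}$. Next I would invoke the monotonic shape of this sequence that the mechanism's formulation enforces on the probability functions (the ``an input is most likely mapped to the nearest output'' behaviour visible in Figure~\ref{fig:pr_ai}): for $i>0$ the map $j\mapsto P_{i,j}$ is non-decreasing, for $i<0$ it is non-increasing, and for $i=0$ it is non-decreasing on $\{-n,\dots,0\}$ and, by the reflection symmetry $p_0(-x)=p_0(x)$ that follows from $a_{-i}=-a_i$ and $a_0=0$, non-increasing on $\{0,\dots,n\}$. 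Given this, the three cases follow immediately: for $i>0$, $\min_j P_{i,j}=P_{i,-n}=p_i(-1)$; for $i<0$, $\min_j P_{i,j}=P_{i,n}=p_i(1)$; and for $i=0$ the minimum is attained at both ends, $P_{0,-n}=P_{0,n}$, i.e.\ $p_0(-1)=p_0(1)$.

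The reduction to the knots is routine; the content lies in the monotonicity of $j\mapsto P_{i,j}$, and this is where I expect the main obstacle, because monotonicity is a structural property of the intended mechanism rather than a consequence of unbiasedness, $\epsilon$-LDP and probability-validity in isolation. I would therefore make the relevant shape constraints explicit (equivalently, constraints on the signs of the slopes $P_{i,j}-P_{i,j-1}$), check that they are consistent with unbiasedness (Proposition~\ref{prop:2}) and with the symmetric output space, and then the three displayed identities drop out. The $i=0$ case is the most delicate, since it additionally requires gluing the two one-sided monotonicities together through the reflection symmetry of the construction.
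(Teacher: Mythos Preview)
Your reduction of the continuous minimum to the finite set of knot values $\{P_{i,j}\}$ via piecewise linearity is correct and cleaner than anything the paper writes down. You also put your finger precisely on the real difficulty: monotonicity of $j\mapsto P_{i,j}$ is not implied by unbiasedness, sum-to-one, non-negativity, or the $\epsilon$-LDP ratio bound on their own, so as a bare proposition about all admissible piecewise-linear mechanisms the statement cannot be proved without an extra structural hypothesis. Your proposal to make that hypothesis explicit (sign constraints on the slopes $P_{i,j}-P_{i,j-1}$, glued via the reflection symmetry for $i=0$) is the honest fix.

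The paper, however, takes a different route entirely. Its appendix argument is not a structural proof but an optimality heuristic: since $\Var[Y\mid x]=\E[(Y-x)^2\mid x]$, a variance-minimizing mechanism should push $p_i(x)$ down wherever the squared error $(a_i-x)^2$ is large; for $a_i>0$ that error is maximal at $x=-1$, so the optimum places $p_i(-1)$ at the minimum, and symmetrically for $a_i<0$. In other words, the paper reads Proposition~\ref{prop:3} not as a theorem about every feasible mechanism but as a restriction one may impose without loss of optimality when searching for the minimizer of the objective. This sidesteps the monotonicity question you identified, at the price of being informal: no exchange argument is actually carried out, and the $i=0$ case is not treated.

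So your approach is more rigorous but needs the added shape constraint you anticipated; the paper's is shorter but is really a design rationale rather than a proof of the proposition as literally stated.
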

\noindent Thus, the LDP constraints is reformulated into:
\begin{gather}
    P_{i,-n} \leq P_{i,j} \leq e^\epsilon P_{i, -n} \quad  i \geq 0 \quad   \forall j\in \{-n, \ldots, n\}\\
        P_{i,n} \leq P_{i,j} \leq e^\epsilon P_{i, n} \quad  i<0 \quad   \forall j\in \{-n, \ldots, n\}
\end{gather}
Satisfying these constraints, the numerical solution is $\epsilon$-LDP.

\textbf{Probability validity.} To ensure a valid probability distribution, the output probabilities must be non-negative and sum to one. The following proposition formally states that it is sufficient to enforce these properties at the endpoints $x_j$.
\begin{prop} \label{prop:4}
    Let the probabilities at the endpoints satisfy the validity conditions for all $j \in \{-n, \ldots, n\}$:
    \begin{enumerate}
        \item \textbf{Non-negativity:} $P_{i,j} \geq 0$ for all $i$.
        \item \textbf{Sum-to-one:} $\sum_{i=-n}^n P_{i,j} = 1$.
    \end{enumerate}
    Then, for any input $x \in [-1, 1]$, the interpolated probabilities $p_i(x)$ also satisfy these conditions:
    \begin{enumerate}
        \item \textbf{Non-negativity:} $p_{i}(x) \geq 0$ \, $\forall i$.
        \item \textbf{Sum-to-one:} $\sum_{i=-n}^n p_{i}(x) = 1$.
    \end{enumerate}
    \begin{proof}
        The proof is provided in Appendix~\ref{pf:prop:4}.
    \end{proof}
\end{prop}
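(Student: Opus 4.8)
The plan is to prove both preservation properties directly from the fact that, on each sub-interval $[x_{j-1}, x_j]$, every $p_i(x)$ is the linear interpolant $p_{i,j}(x)$ between its endpoint values $P_{i,j-1}$ and $P_{i,j}$. The key observation is that linear interpolation is an affine combination: for $x \in [x_{j-1}, x_j]$ we can write $x = (1-\lambda) x_{j-1} + \lambda x_j$ with $\lambda = (x - x_{j-1})/(x_j - x_{j-1}) \in [0,1]$, and then $p_{i,j}(x) = (1-\lambda) P_{i,j-1} + \lambda P_{i,j}$. (One should first check that the formula \eqref{pij} indeed produces this convex combination — a one-line rearrangement.) Both desired properties are then properties that are \emph{preserved under convex combinations}, which is what makes the proof short.

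For non-negativity: if $x \in [-1,1]$, then $x$ lies in some sub-interval $[x_{j-1}, x_j]$, so $p_i(x) = (1-\lambda) P_{i,j-1} + \lambda P_{i,j}$ with $\lambda \in [0,1]$. Since by hypothesis $P_{i,j-1} \geq 0$ and $P_{i,j} \geq 0$, and the coefficients $1-\lambda$ and $\lambda$ are non-negative, the result is a non-negative combination of non-negative numbers, hence $p_i(x) \geq 0$. For sum-to-one: summing the same convex-combination identity over $i$ from $-n$ to $n$ and using linearity of the sum,
\begin{equation*}
\sum_{i=-n}^n p_i(x) = (1-\lambda)\sum_{i=-n}^n P_{i,j-1} + \lambda \sum_{i=-n}^n P_{i,j} = (1-\lambda)\cdot 1 + \lambda \cdot 1 = 1,
\end{equation*}
where both inner sums equal $1$ by the sum-to-one hypothesis at the endpoints. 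The endpoint cases $x = x_j$ themselves are covered directly, since there $p_i(x_j) = P_{i,j}$ by definition and the validity conditions are assumed to hold there.

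A couple of minor bookkeeping points need to be handled so the argument is airtight. First, one must note that $x_{-n} = -1$ and $x_n = 1$ (guaranteed by the domain-boundary setup in Proposition~\ref{prop:2}, or simply by construction), so that every $x \in [-1,1]$ really does fall in one of the sub-intervals $[x_{j-1}, x_j]$ for $j \in \{-n+1, \ldots, n\}$; and one should observe that the endpoints $x_j$ are ordered, which follows from the mechanism's construction (the $x_j$ are the images of a monotone process, or can be taken as given). Second, points lying on a shared boundary between two sub-intervals are assigned a consistent value because $p_i(x_j) = P_{i,j}$ regardless of which adjacent segment's linear formula is used — so the piecewise definition is well-defined.

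I do not anticipate a real obstacle here: the whole content is that affine interpolation maps the probability simplex into itself, and both claims are immediate once the convex-combination reformulation of \eqref{pij} is written down. The only thing to be careful about is not to skip the verification that \eqref{pij} genuinely is the convex combination (rather than, say, an extrapolation outside $[0,1]$ in the coefficients), and to state the $x = x_j$ boundary case explicitly rather than leaving it implicit.
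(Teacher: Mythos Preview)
Your proposal is correct and follows essentially the same approach as the paper: both rewrite $p_{i,j}(x)$ as a convex combination $(1-\lambda)P_{i,j-1} + \lambda P_{i,j}$ of the endpoint values (the paper uses $\alpha = 1-\lambda$), then observe that non-negativity and sum-to-one are preserved under convex combinations. Your version is slightly more careful about the bookkeeping (domain coverage by the sub-intervals and consistency at shared endpoints), which the paper leaves implicit.
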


Based on the constraints, the final optimization problem is formulated as follows:
\begin{equation}
\begin{split}
& \underset{\Omega, \mathcal{P}}{\text{Minimize}} \quad \max_{j \in \{-n+1, \ldots, n\}, x \in \{x_{j-1}, \bar{x}_j, x_j\}} \Var[Y|x] \\
& \text{subject to} \\
& \qquad \sum_{i=-n}^n a_i P_{i,-n} = -1 , \quad \sum_{i=-n}^n a_i P_{i,n} = 1 \\
& \qquad a_{i-1} < a_i, \quad \forall i \in \{-n+1, \ldots, n\} \\
& \qquad \sum_{i=-n}^{n} P_{i,j} = 1, \quad \forall j \in \{-n, \ldots, n\} \\
& \qquad P_{i,-n} \leq P_{i,j} \leq e^\epsilon P_{i, -n} \quad  i \geq 0 \quad   \forall j\in \{-n, \ldots, n\} \\
& \qquad P_{i,n} \leq P_{i,j} \leq e^\epsilon P_{i, n} \quad  i<0 \quad   \forall j\in \{-n, \ldots, n\} \\
& \qquad P_{i,j} \geq 0, \quad \forall i, j \in \{-n, \ldots, n\}
\end{split}
\end{equation}

\textbf{Average-case noise variance.}
When $r=1$, the objective becomes the minimizing the average-case noise variance. While the worst-case objective provides a robust guarantee against the single worst output (a minimax approach from game theory), the average-case objective optimizes for overall performance across all possible outcomes. This can be powerful when maximizing the mechanism's expected utility. 

The average-case noise variance is the expected variance over the distribution of the input data, $f(x)$:
\begin{equation} \label{eq:avg_var}
    \mathbb{E}_{X \sim f(x)}[\Var[Y|X]] = \int_{-1}^{1} \Var[Y|x] f(x) dx
\end{equation}
As the true data distribution $f(x)$ is typically unknown, we assume a uniform distribution over the domain $[-1, 1]$, where the probability density function is $f(x)=1/2$. Under this assumption, the final objective function to minimize is:
\begin{equation}
\frac{1}{2} \sum_{j=-n+1}^{n} \int_{x_{j-1}}^{x_j} \Var[Y|x] dx
\end{equation}
where the integral over each segment resolves to:
\begin{equation} \label{eq:avg_var_int}
\begin{aligned}
    &\int_{x_{j-1}}^{x_j} \Var[Y|x] dx \\
    &= \frac{1}{2}\left(\sum_{i=-n}^n a_i^2 (P_{i,j-1} + P_{i,j})\right)(x_j - x_{j-1}) - \frac{1}{3}(x_j^3 - x_{j-1}^3)
\end{aligned}
\end{equation}
By substituting the result of Eq.~\eqref{eq:avg_var_int} into the summation, the final objective function for the average-case variance is:
\begin{equation}
\underset{\Omega, \mathcal{P}}{\text{Minimize}} \quad \sum_{j=-n+1}^{n} \left(\sum_{i=-n}^n a_i^2 (P_{i,j-1} + P_{i,j})\right)(x_j - x_{j-1})
\end{equation}
\noindent This objective is minimized subject to the same set of constraints defined in the worst-case optimization problem.

\subsection{Analytical solution}
A general analytical solution for the N-output mechanism is often intractable due to the complex interdependence of the variables. We address this by imposing a simplifying structure on the probability variables $\mathcal{P}$, removing the interdependencies. The simplified N-output mechanism still generalizes Duchi's~\cite{duchi2018minimax} and Three-output~\cite{zhao2020local} mechanisms and have low worst-case noise variance.

\textbf{Simplification.} Our approach is to pre-define the structure of the probabilities $P_{i,j}$. This structure is designed to satisfy the $\epsilon$-LDP constraint while concentrating probability mass on outputs $a_i$ close to the input $x$, which is crucial for minimizing the noise variance. Specifically, we set the probabilities as follows:
First, for $i = 0$, we define:
\begin{equation}
    P_{i,j} = \begin{cases}
    e^\epsilon p_0 &\quad \text{ if } i = j \\
    p_0  &\quad \text{otherwise}
    \end{cases}
\end{equation}
where $p_0$ is a free variable to be determined.

Next, for $|i| > 1$, we define:
\begin{equation}
    P_{i,j} = \begin{cases}
    e^\epsilon p &\quad \text{ if } i = j \\
    p  &\quad \text{ otherwise }
    \end{cases}
\end{equation}
where $p$ is a dependent variable determined by $p_0$.

Finally, for $|i|=1$, we define:
\begin{equation}
    P_{i,j} = \begin{cases}
    e^\epsilon p &\quad \text{ if } i = j \\
    p^*  &\quad \text{ if } j = 0 \\
    p  &\quad \text{otherwise}
    \end{cases}
\end{equation}
where $p^*$ is a dependent variable determined by $p_0$. 

From this foundation, we derive all other necessary properties of the mechanism, including $p_i(x)$ and the relationships between the variables $p$, $p^*$, and $p_0$. By substituting these settings into Eq.~\eqref{pij}, $p_{i,j}(x)$ over the interval $x \in [x_{-1}, x_1]$ is defined as follows:
\begin{gather} \label{eq:prob1}
    p_{i, j}(x) =\begin{cases}
        \frac{j(p - e^\epsilon p)+e^\epsilon p - p^*}{x_1}|x| + p^* &\quad \text{if } i=-1 \\[0.4em]    
        \frac{(1-e^\epsilon)p_0}{x_1}|x| + e^\epsilon p_0 &\quad \text{if } i=0 \\[0.4em]
        \frac{j(e^\epsilon p - p) + p - p^*}{x_1}|x| + p^* &\quad \text{if } i=1 \\[0.4em]
        p  &\quad \text{otherwise}
    \end{cases}
\end{gather}
For the remaining intervals $x \in [-1, x_{-1}] \cup [x_1, 1]$, $p_{i,j}(x)$ is defined as follows:
\begin{gather} \label{eq:prob2}
    p_{i,j}(x) = \begin{cases}
        \frac{x - ta_{j-1}}{a_j - a_{j-1}} + p  &\quad \text{if } i = j \\[0.4em]
        \frac{ta_j - x}{a_{j} - a_{j-1}} + p &\quad \text{if } i = j-1\\[0.4em]
        p &\quad \text{otherwise}
    \end{cases}
\end{gather}
where $t = (e^\epsilon - 1)p$ and $x_j = ta_j$. Given the sum of the probability condition, the variable $p$ is:
\begin{equation*}
    p = \frac{1 - p_0}{e^\epsilon +2n - 1}
\end{equation*}
Similarly, the variable $p^*$ is:
\begin{equation*}
    p^* = \frac{1 - 2(n-1)p - e^\epsilon p_0}{2}
\end{equation*}
This simplification not only reduces the number of free variables from $\mathcal{O}(N^2)$  to $\mathcal{O}(N)$, but also mitigates the interdependence between $p_{i,j}(x)$ and $a_i$.

The N-output mechanism guarantees $\epsilon$-LDP by constraining $0 \leq p_0 \leq p$.
\begin{thm}
    The N-output mechanism is $\epsilon$-LDP.
\end{thm}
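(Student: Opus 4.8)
The plan is to verify Definition~\ref{def:LDP} output by output. Because the output space is discrete, $\epsilon$-LDP is equivalent to the statement that for every output $a_i$,
\[
\max_{x\in[-1,1]} p_i(x)\;\le\; e^{\epsilon}\,\min_{x\in[-1,1]} p_i(x),
\]
together with $p_i(x)\ge 0$ so that the $p_i(\cdot)$ are genuine probabilities. By the piecewise-linear construction \eqref{pij}, each $p_i$ is continuous on $[-1,1]$ and affine on every sub-interval $[x_{j-1},x_j]$ with $p_i(x_j)=P_{i,j}$; on such a sub-interval its values therefore lie between $P_{i,j-1}$ and $P_{i,j}$, so $\min_x p_i(x)=\min_j P_{i,j}$ and $\max_x p_i(x)=\max_j P_{i,j}$, and both extrema are attained at grid points (this is the phenomenon of Proposition~\ref{prop:3}). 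Hence it suffices to bound the ratio $\max_j P_{i,j}\big/\min_j P_{i,j}$, and the argument collapses to bookkeeping over the three families of $P_{i,j}$ fixed by the simplification.

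\textbf{Cases on $|i|$.} If $i=0$ the values $\{P_{0,j}\}_j$ are exactly $\{p_0,e^{\epsilon}p_0\}$, and if $|i|\ge 2$ they are exactly $\{p,e^{\epsilon}p\}$; in both cases the max-to-min ratio equals $e^{\epsilon}$, and non-negativity holds since the hypothesis $0\le p_0\le p$ forces $p\ge p_0\ge 0$ (for even $N$ the row $i=0$ is absent with $P_{0,j}\equiv 0$, and nothing changes). The only real case is $|i|=1$, where $\{P_{i,j}\}_j=\{p,p^{*},e^{\epsilon}p\}$, so I must prove the ordering $p\le p^{*}\le e^{\epsilon}p$; this makes the ratio $e^{\epsilon}$ and also yields $p^{*}\ge 0$. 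Eliminating the constant term via the normalization identity $p=(1-p_0)/(e^{\epsilon}+2n-1)$ turns $p^{*}=\bigl(1-2(n-1)p-e^{\epsilon}p_0\bigr)/2$ into the closed form $p^{*}=\tfrac{1}{2}\bigl((e^{\epsilon}+1)p-(e^{\epsilon}-1)p_0\bigr)$, whence $p^{*}-p=\tfrac{1}{2}(e^{\epsilon}-1)(p-p_0)\ge 0$ and $e^{\epsilon}p-p^{*}=\tfrac{1}{2}(e^{\epsilon}-1)(p+p_0)\ge 0$. The first inequality is exactly where the constraint $p_0\le p$ is needed; the second uses only $p_0\ge 0$ and $e^{\epsilon}\ge 1$.

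\textbf{Conclusion and main obstacle.} Combining the cases gives $\max_j P_{i,j}\le e^{\epsilon}\min_j P_{i,j}$ for every $i$, and all $P_{i,j}\ge 0$; by Proposition~\ref{prop:4} this propagates to $p_i(x)\ge 0$ for every $x$. Therefore $p_i(x)\le e^{\epsilon}p_i(x')$ for all $x,x'\in[-1,1]$ and all $i$, i.e., the mechanism is $\epsilon$-LDP. I expect the $|i|=1$ case to be the only real obstacle: the ordering $p\le p^{*}\le e^{\epsilon}p$ --- and with it the reason the constraint must be $0\le p_0\le p$ rather than merely $p_0\ge 0$ --- only becomes visible after $p^{*}$ is put in closed form using the sum-to-one constraint. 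The $i=0$ and $|i|\ge 2$ cases, and the reduction of the supremum/infimum over $x$ to the finite set $\{P_{i,j}\}$, are immediate from the definitions and from affineness of each linear piece.
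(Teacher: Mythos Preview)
Your proof is correct and follows the same case analysis on $|i|$ as the paper, with the only substantive step being the ordering $p \le p^{*} \le e^{\epsilon}p$ for $|i|=1$. Your verification of that ordering is in fact cleaner than the paper's: you derive the closed form $p^{*}=\tfrac{1}{2}\bigl((e^{\epsilon}+1)p-(e^{\epsilon}-1)p_0\bigr)$ and read off $p^{*}-p=\tfrac{1}{2}(e^{\epsilon}-1)(p-p_0)$ and $e^{\epsilon}p-p^{*}=\tfrac{1}{2}(e^{\epsilon}-1)(p+p_0)$ directly, whereas the paper argues by monotonicity of $p^{*}$ in $p_0$ and checks the two endpoints $p_0=0$ and $p_0=p$ separately.
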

\begin{proof}
    For $|i| > 1$, every probability $p_i(x)$ falls within the range $[p, e^\epsilon p]$. For any two input values $x, x'$, we have:
    \begin{equation*}
        max_{x, x'}\frac{p_i(x)}{p_i(x')} = \frac{e^\epsilon p}{p} = e^\epsilon
    \end{equation*}
    Similarly, for $i=0$, we have:
    \begin{equation*}
        max_{x, x'}\frac{p_0(x)}{p_0(x')} = \frac{e^\epsilon p_0}{p_0} = e^\epsilon   
    \end{equation*}
    For $|i| = 1$, we demonstrate that $p \leq p^* \leq e^\epsilon p$ to prove the theorem. Since $p^*$ decreases as $p_0$ increases, and given $0 \leq p_0 \leq p$, we need to show two cases: 
    \begin{enumerate}
        \item For $p_0 = 0$, we have $p = \frac{1}{e^\epsilon + 2n -1}$. Thus, the inequality $p^* \leq e^\epsilon p$ is equivalent to:
        \begin{align*}
            \quad & \frac{1 - 2(n-1)p}{2} \leq e^\epsilon p\\
            \iff & \frac{1}{2} - \frac{n-1}{e^\epsilon + 2n - 1} \leq \frac{e^\epsilon}{e^\epsilon + 2n - 1}\\
            \iff & 1 \leq e^\epsilon
        \end{align*}
        Since $\epsilon > 0$, the inequality $p^* \leq e^\epsilon p$ holds.
        \item For $p_0 = p$, we have $p^* = p$. Thus, the inequality $p \leq p^*$ holds trivially.
    \end{enumerate}
\end{proof}

\textbf{Optimization.}
In the numerical solution we optimize the output sets $\Omega$ and the entire probability set $\mathcal{P}$. In this simplified analytical solution, however, our goal is to find the optimal $\Omega$ and a single free variable $p_0$ that minimize the worst-case noise variance. Formally, we aim to solve the following objective function:
\begin{equation*}
    \underset{\Omega, p_0}{\text{Minimize}}  \max_{x \in [-1, 1]} \text{Var}[Y \mid x]
\end{equation*}

We first analyze the structure of the noise variance function $\text{Var}[Y|x]$. Due to the symmetry of the N-output mechanism around $x=0$, it is sufficient to consider the domain $0 \leq x \leq 1$. Let $\text{Var}_j[Y|x]$ be the noise variance over the interval $x_{j-1} \leq x \leq x_j$ for $j \in \{1, \ldots, n\}$. The expression for $\text{Var}_j[Y|x]$ is a piecewise quadratic function:

i) For $j=1$, $\text{Var}_j[Y|x]$ is
\begin{equation}
    - x^2 + \frac{a_1(e^\epsilon p + p - 2p^*)}{(e^\epsilon-1)p}x + 2a_1^2 p^* + 2\sum_{i=2}^{n} a_i^2 p
\end{equation}

ii) For $j \geq 2$, $\text{Var}_j[Y|x]$ is
\begin{equation}
    - x^2 + (a_{j-1} + a_j)x - (e^\epsilon-1)pa_{j-1}a_j + 2\sum_{i=1}^{n}a_i^2p
\end{equation}

The following lemmas provide conditions required to minimize the maximum noise variance.
\begin{lem} \label{lem:max_point}
    For $j \geq 2$, let $x^*_j = \frac{a_{j-1} + a_j}{2}$. Given $\epsilon$ and $N$, $\text{Var}_j[Y|x]$ must attain its maximum value at $x=x^*_j$ to minimize the worst-case noise variance.
    \begin{proof}
        The proof is provided in Appendix \ref{pf:lem:max_point}
    \end{proof}
\end{lem}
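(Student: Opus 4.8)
The plan is to recast the claim as a containment condition and then establish it as a necessary optimality condition via a local perturbation argument.

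First I would note that, for $j\ge 2$, the given expression for $\text{Var}_j[Y|x]$ is a concave quadratic in $x$ (the coefficient of $x^2$ is $-1$) whose unconstrained maximizer is exactly $x^*_j=(a_{j-1}+a_j)/2$. Hence on the sub-interval $[x_{j-1},x_j]$ the maximum of $\text{Var}_j[Y|x]$ is attained at $x^*_j$ precisely when $x^*_j\in[x_{j-1},x_j]$; if the vertex lies outside, $\text{Var}_j$ is monotone on the segment and its maximum is attained at the nearer endpoint, which is a breakpoint shared with a neighbouring segment. Using the symmetry of the mechanism about $x=0$ to restrict to $0\le x\le 1$, i.e.\ $j\in\{1,\dots,n\}$, the lemma is therefore equivalent to: at every minimizer of $\max_{x\in[-1,1]}\text{Var}[Y|x]$ one has $x^*_j\in[x_{j-1},x_j]$ for all $j\ge 2$.

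I would then argue by contradiction: suppose an optimal mechanism has $x^*_{j_0}\notin[x_{j_0-1},x_{j_0}]$ for some $j_0\ge 2$, so the segment-$j_0$ maximum sits at a breakpoint. Recall that in the analytical parameterization $x_j=t\,a_j$ with $t=(e^\epsilon-1)p$ fixed once $p_0$ is fixed, and $a_n=1/t$ is pinned by $x_n=1$, so the free shape parameters are $a_1,\dots,a_{n-1}$. I would perturb the relevant $a_i$ infinitesimally in the direction that drags $x^*_{j_0}$ into its interval and compute the first-order change of the worst-case variance, using an envelope (Danskin-type) argument so that the derivative of each inner maximum over $x$ is the partial derivative evaluated at the maximizing $x$. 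A convenient feature is that the term $2p\sum_i a_i^2$ is common to every $\text{Var}_k$, so perturbing one $a_i$ shifts all segment variances by the same amount; the relative shape changes only through the $(a_{k-1}+a_k)x$ and $-(e^\epsilon-1)p\,a_{k-1}a_k$ terms of the two segments adjacent to $a_i$. The goal is to show the sign and magnitude of the perturbation can be chosen so that the binding segment-$j_0$ peak strictly decreases while no other segment peak rises above the current worst-case value, contradicting optimality and forcing $x^*_j\in[x_{j-1},x_j]$.

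I expect the main obstacle to be the bookkeeping of which segments are \emph{active} (achieve the worst-case value $V$) at the candidate optimum, and verifying that a single perturbation direction is simultaneously non-increasing for all of them while strictly decreasing the one at $x_{j_0}$ — equivalently, producing an honest descent direction for the pointwise-maximum objective. The boundary cases need separate care: $j_0=2$, because the neighbouring segment $j=1$ has the different variance expression given earlier, and $j_0=n$, because $x_n=1$ is fixed, so there the admissible perturbation is of $a_{n-1}$ (or of $p_0$, which rescales $t$ and all breakpoints at once). As a cleaner alternative I would also try a purely structural version: show that $x^*_{j_0}$ lying outside its interval, together with $x^*_j$ being increasing in $j$ and the continuity of $\text{Var}[Y|x]$ across breakpoints, forces the variance to be monotone on a maximal run of consecutive segments, so that its maximum over that run sits at a single breakpoint and one perturbation of the corresponding $a_i$ suffices to lower it.
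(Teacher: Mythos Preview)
Your plan is sound and shares the paper's core idea---a local improvement argument showing that if the vertex $x^*_j$ falls outside $[x_{j-1},x_j]$ then one can perturb an output value to strictly reduce the worst-case variance---but the paper's execution is considerably more direct. Two differences are worth noting. First, the paper does not treat the two ``vertex outside'' cases symmetrically: using $x_{j-1}=a_{j-1}/a_n$ and $a_n>1$, it shows by a two-line algebraic check that the case $x^*_j<x_{j-1}$ is simply \emph{infeasible}, so no perturbation is needed there at all; you instead propose to handle both sides by the same descent argument, which forces you into the extra bookkeeping you anticipate. Second, for the remaining case $x^*_j>x_j$ (equivalently $a_{j-1}>(2t-1)a_j$), the paper just computes $\partial \text{Var}_k[Y|x_k]/\partial a_{k-1}$ and $\partial \text{Var}_l[Y|x]/\partial a_{k-1}$ directly and observes that all of these partial derivatives are strictly positive (the non-adjacent segments only see the common $2p\sum_i a_i^2$ term, contributing $4pa_{k-1}>0$), so decreasing $a_{k-1}$ lowers \emph{every} segment maximum simultaneously. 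This sidesteps the active-set and envelope-theorem machinery you set up: there is no need to identify binding segments or to find a direction that is merely non-increasing on them, because the same perturbation is strictly decreasing for all. Your route would work, but the paper's case split and explicit derivative computation buy a much shorter argument.
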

\begin{lem} \label{lem:worst-case}
    To minimize the worst-case noise variance, $\text{Var}_n[Y|x^*_n]$ must be the worst-case noise variance.
    \begin{proof}
        The proof is provided in Appendix \ref{pf:lem:worst-case}
    \end{proof}
\end{lem}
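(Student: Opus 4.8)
The plan is to argue by contradiction, using a one-parameter perturbation of the single output value $a_{n-1}$. By the symmetry of the mechanism it suffices to work on $[0,1]$, where $\text{Var}[Y|x]$ is continuous and, by the formulas in cases i)--ii), is $\text{Var}_j[Y|x]=-x^2+(\text{affine in }x)+C$ on each piece $[x_{j-1},x_j]$ (with $x_0=0$, $x_j=ta_j$, $t:=(e^\epsilon-1)p$), where $C:=2p\sum_{i=1}^n a_i^2$ is the only term that does not depend on the piece index $j$. Each piece is a downward parabola, so its maximum over the piece is at most its vertex value; by Lemma~\ref{lem:max_point}, at an optimum the maximum of $\text{Var}_j$ over piece $j\ge 2$ is attained at the interior vertex and equals $\text{Var}_j[Y|x^*_j]$. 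Hence the worst-case value $V:=\max_{x\in[0,1]}\text{Var}[Y|x]$ satisfies $V\ge\text{Var}_n[Y|x^*_n]$, and the lemma amounts to excluding strict inequality. So I would assume, toward a contradiction, that some minimizer of the worst-case variance has $\text{Var}_n[Y|x^*_n]<V$, which forces $V$ to be attained only on the pieces $1,\dots,n-1$.

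Next I would replace $a_{n-1}$ by $a_{n-1}-\delta$ for a small $\delta>0$, keeping $p_0$ fixed (hence $p,t,p^*$, the boundary value $a_n=1/t$, and $a_1,\dots,a_{n-2}$ all fixed). Feasibility is preserved for $\delta$ small: the probability values $P_{i,j}$ depend only on $p_0$, so $\epsilon$-LDP, non-negativity, and sum-to-one are untouched, the boundary constraint reduces to $a_n=1/t$ (which is unchanged), and the strict ordering still holds. For the objective, $\partial C/\partial a_{n-1}=4p\,a_{n-1}>0$, so on every piece $j\notin\{n-1,n\}$ — and, for $n\ge 3$, on piece $1$ — the only dependence on $a_{n-1}$ is through $C$, so $\text{Var}_j$ drops uniformly in $x$ and its maximum strictly decreases. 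On piece $n-1$ (for $n=2$ this is piece $1$, handled the same way from the explicit $\text{Var}_1[Y|x]=-x^2+\tfrac{a_1p_0}{p}x+C+2a_1^2(p^*-p)$ using $p^*\ge p$) the vertex value $\tfrac{(a_{n-2}+a_{n-1})^2}{4}-ta_{n-2}a_{n-1}+C$ has $a_{n-1}$-derivative $\tfrac{a_{n-1}}{2}+a_{n-2}(\tfrac12-t)+4p\,a_{n-1}>0$, because $t<1$ and $a_{n-2}<a_{n-1}$ give $\tfrac{a_{n-1}}{2}-a_{n-2}(t-\tfrac12)>0$; so the maximum of $\text{Var}_{n-1}$ over its piece also strictly decreases. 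Finally $\text{Var}_n[Y|x^*_n]$ varies continuously with $\delta$, so it remains below $V$ for $\delta$ small, and the maximum of $\text{Var}_n$ over its piece is at most that vertex value. Combining, the perturbed mechanism has $\max_{x\in[0,1]}\text{Var}[Y|x]<V$, contradicting optimality; therefore $V=\text{Var}_n[Y|x^*_n]$, which by symmetry is the worst-case noise variance.

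The step I expect to be the main obstacle is the sign of the $a_{n-1}$-derivative on piece $n-1$: the vertex term $g(u,v):=\tfrac{(u+v)^2}{4}-tuv$ need not be increasing in $v$ when $t>\tfrac12$ — which is precisely the regime forced by Lemma~\ref{lem:max_point} (otherwise the vertex $x^*_j$ cannot lie in its piece) — so one has to pair $\partial_v g$ with the favorable term $\partial_v C=4pv$ and use the strict ordering $a_{n-2}<a_{n-1}$ and $t<1$ to pin down the sign. The small cases need separate, routine checks: for $n=2$ the perturbed $a_1$ also moves the breakpoint $x_1$ and the slope of $\text{Var}_1$, and for $n=1$ the statement is immediate from the reduction to the Duchi~\cite{duchi2018minimax} and Three-output~\cite{zhao2020local} mechanisms. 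A seemingly cleaner alternative — showing directly that $j\mapsto\text{Var}_j[Y|x^*_j]$ is non-decreasing for $2\le j\le n$ together with $\max_{x\in[0,x_1]}\text{Var}_1[Y|x]\le\text{Var}_2[Y|x^*_2]$ — hits the same non-monotonicity of $g$ and seems no simpler than the perturbation argument.
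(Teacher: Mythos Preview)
Your proposal is correct and follows essentially the same strategy as the paper: assume for contradiction that $\text{Var}_n[Y|x^*_n]$ is not the worst case, decrease $a_{n-1}$, and show that every piecewise maximum for $j<n$ strictly drops (since $\partial C/\partial a_{n-1}=4pa_{n-1}>0$ and the piece-$(n-1)$ vertex derivative is positive) while piece $n$ stays below $V$. Your bound on the piece-$(n-1)$ derivative via $t<1$ and $a_{n-2}<a_{n-1}$ is a cleaner version of the paper's case analysis on the sign of $2t-1$, and your continuity argument for piece $n$ replaces the paper's split on the sign of $\partial \text{Var}_n[Y|x^*_n]/\partial a_{n-1}$, but the core perturbation argument is the same.
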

These lemmas imply that the optimization problem simplifies to finding a mechanism where the variance $\Var_n[Y|x_n^*]$ is the global maximum. The following theorem identifies the optimal $\Omega$ under this condition.
\begin{thm} \label{thm:type0}
    Given $\epsilon$, $N$, and $p_0$, the sequence $a_j$ that minimizes the worst-case noise variance is defined as follows. 
    Define the sequences $P_i$ and $Q_i$ as follows:
    \begin{gather*}
        T_n = 0, \quad T_{n-1} = 1 \\
        T_i = (4t - 2)T_{i+1} - T_{i+2} \quad \text{for } i \in \{1, 2, \ldots, n-2\} \\[0.5em]
        Q_n = 1, \quad Q_{n-1} = 0 \\
        Q_i = (4t - 2)Q_{i+1} - Q_{i+2}  \quad \text{for } i \in \{1, 2, \ldots, n-2\}
    \end{gather*}
    Then, the sequence $a_j$ is given by:
    \begin{gather*}
        a_n = \frac{1}{t}, \quad a_{n-1} = \frac{(2t - 1) - 8p \sum_{i=1}^n T_i Q_i}{1 + 8p \sum_{i=1}^n T_i^2} a_n \\
        a_j = (4t - 2)a_{j+1} - a_{j+2} \quad \text{for } j \in \{1, 2, \ldots, n-2\}
    \end{gather*}
    This holds only if the following conditions are satisfied:
    \begin{gather}
        a_j < a_{j+1} \quad \text{for } j \in \{0, 1, \ldots, n-1\} \label{type0_c1} \\
        \text{Var}_n[Y \mid x^*_n; p_0] \geq \text{Var}_1[Y \mid x^*_1; p_0] \label{type0_c2}
    \end{gather}
    where $x^*_1 = \frac{a_1(e^\epsilon p + p - 2p^*)}{2(e^\epsilon-1)p}$
    \begin{proof}
        The proof is provided in Appendix \ref{pf:thm:type0}
    \end{proof}
\end{thm}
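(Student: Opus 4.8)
The plan is to treat this as an equioscillation (minimax) problem. By Lemma~\ref{lem:worst-case} the global maximum of $\Var[Y\mid x]$ over $[-1,1]$ is attained in the last segment at $x_n^*$, so finding the optimal $\Omega$ reduces to minimizing $\Var_n[Y\mid x_n^*]$ over $\{a_j\}_{j=1}^n$ subject to the segmentwise domination constraints $\Var_j[Y\mid x_j^*]\le \Var_n[Y\mid x_n^*]$ (for $j\ge 2$ the segment maximum is at the vertex $x_j^*$ by Lemma~\ref{lem:max_point}, and for $j=1$ at $x_1^*$) together with the domain-boundary requirement $x_n=t\,a_n=1$, i.e.\ $a_n=1/t$. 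First I would substitute $x_j^*=(a_{j-1}+a_j)/2$ into the downward parabola $\Var_j[Y\mid x]$ of case (ii) to obtain, for $j\ge 2$, the closed form $\Var_j[Y\mid x_j^*]=\tfrac14(a_{j-1}+a_j)^2 - t\,a_{j-1}a_j + 2p\sum_{i=1}^n a_i^2$, and note that the term $2p\sum_i a_i^2$ is identical for every $j\ge 2$.

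The crux is the equioscillation step: at an optimal $\Omega$ the interior segment maxima must all equal the global maximum, $\Var_j[Y\mid x_j^*]=\Var_n[Y\mid x_n^*]$ for $j\in\{2,\dots,n-1\}$, while the structurally distinct first segment is only required to stay below it, $\Var_1[Y\mid x_1^*]\le \Var_n[Y\mid x_n^*]$ --- which is exactly the admissibility condition~\eqref{type0_c2}. I would establish this by a perturbation / degree-of-freedom count: there are $n$ unknowns $a_1,\dots,a_n$, the boundary $a_n=1/t$ removes one, and if fewer than $n-2$ of the domination constraints were active at the optimum there would remain a feasible direction along which the $a_i$ can be moved so as to strictly decrease $\Var_n[Y\mid x_n^*]$ while keeping every other segment dominated, contradicting optimality; since segment~$1$ is the only one whose variance involves $p_0$ and $p^*$, it is the one that the regime of this theorem (``Type~0'') leaves slack, as certified by~\eqref{type0_c2}. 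This counting/perturbation argument --- ruling out an interior segment going slack and justifying the special role of $j=1$ --- is the step I expect to be the main obstacle.

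Granting the $n-2$ equalities, I would subtract the relation $\Var_j[Y\mid x_j^*]=\Var_{j+1}[Y\mid x_{j+1}^*]$ for each $j\in\{2,\dots,n-1\}$: the common term $2p\sum_i a_i^2$ cancels and, after clearing denominators, one is left with $(a_{j-1}-a_{j+1})\bigl(a_{j-1}+a_{j+1}+(2-4t)a_j\bigr)=0$. The strict ordering~\eqref{type0_c1} permits division by $a_{j-1}-a_{j+1}\neq 0$, giving $a_{j-1}+a_{j+1}=(4t-2)a_j$, i.e.\ after reindexing the three-term recurrence $a_j=(4t-2)a_{j+1}-a_{j+2}$ for $j\in\{1,\dots,n-2\}$ claimed in the statement. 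Because $\{T_i\}$ and $\{Q_i\}$ (the sequences defined in the statement) solve this same linear recurrence with the complementary terminal data $(T_{n-1},T_n)=(1,0)$ and $(Q_{n-1},Q_n)=(0,1)$, they span its two-dimensional solution space, so every solution has the form $a_i=a_{n-1}T_i+a_nQ_i$; in particular $\sum_{i=1}^n a_i^2=a_{n-1}^2\sum_i T_i^2 + 2a_{n-1}a_n\sum_i T_iQ_i + a_n^2\sum_i Q_i^2$.

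Finally, with $a_n=1/t$ fixed and the recurrence imposed, $a_{n-1}$ is the last free variable, and $\Var_n[Y\mid x_n^*]=\tfrac14(a_{n-1}+a_n)^2 - t\,a_{n-1}a_n + 2p\sum_i a_i^2$ is a quadratic in $a_{n-1}$ with positive leading coefficient $\tfrac14+2p\sum_i T_i^2>0$ (here $p>0$ and $T_{n-1}=1$), hence strictly convex. Setting its derivative to zero and using $\partial a_i/\partial a_{n-1}=T_i$, so that $\partial\bigl(\sum_i a_i^2\bigr)/\partial a_{n-1}=2\bigl(a_{n-1}\sum_i T_i^2+a_n\sum_i T_iQ_i\bigr)$, I obtain $a_{n-1}\bigl(1+8p\sum_i T_i^2\bigr)=a_n\bigl((2t-1)-8p\sum_i T_iQ_i\bigr)$, which is precisely the stated value of $a_{n-1}$; strict convexity makes this the global minimizer over the remaining degree of freedom. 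It then remains only to observe that the two listed conditions are exactly what the argument needs: \eqref{type0_c1} legitimizes the division by $a_{j-1}-a_{j+1}$ in the recurrence derivation and, via Proposition~\ref{prop:2}, the unbiasedness of the resulting mechanism, while \eqref{type0_c2} ensures the first segment never dominates so that the equioscillation reduction to segments $2,\dots,n$ is valid. When both hold, the constructed $\{a_j\}$ minimizes the worst-case noise variance.
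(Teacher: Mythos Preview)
Your proposal is correct and follows essentially the same route as the paper's proof: establish that the interior segment maxima $\Var_j[Y\mid x_j^*]$ for $j\ge 2$ must all equal $\Var_n[Y\mid x_n^*]$, derive the three-term recurrence $a_j=(4t-2)a_{j+1}-a_{j+2}$ from these equalities (the paper obtains the same factorization by solving a quadratic in $a_j$, yielding the two roots $a_j=a_{j+2}$ and $a_j=(4t-2)a_{j+1}-a_{j+2}$), express $a_j=T_ja_{n-1}+Q_ja_n$, and then minimize the resulting convex quadratic in $a_{n-1}$. The only stylistic difference is at the step you flag as the obstacle: the paper justifies equioscillation by a concrete monotonicity argument---decreasing $a_{j-1}$ lowers $\Var_n[Y\mid x_n^*]$ while raising $\Var_j[Y\mid x_j^*]$, so the crossing point is optimal---rather than your abstract degree-of-freedom count, but the two arguments carry the same content.
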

\noindent The validity of this solution depends on the parameters $N$ and $\epsilon$, and requires a different strategy for even and odd $N$. 

For even $N$, the structure requires $p_0 = 0$. The solution using Theorem~\ref{thm:type0} is optimal if the condition \eqref{type0_c1} and \eqref{type0_c2} are met. If condition \eqref{type0_c1} fails, $N$ it too large. If condition \eqref{type0_c2} fails, the solution is not optimal, where the optimal solution is given by Theorem~\ref{thm:type1}.

For odd $N$, $p_0 \in [0,p]$ is a free variable. To check the validity of the solution, we have three scenarios:
\begin{itemize}
    \item If $\text{Var}_n[Y | x^*_n; p_0=0] > \text{Var}_1[Y | x^*_1; p_0=0]$, this indicates that $N$ is too large relative to the current $\epsilon$.

    \item If $\text{Var}_n[Y | x^*_n; p_0=p] \geq \text{Var}_1[Y | x^*_1; p_0=p]$, then the solution using Theorem~\ref{thm:type0} is optimal.

    \item If $\text{Var}_n[Y | x^*_n; p_0=p] < \text{Var}_1[Y | x^*_1; p_0=p]$, then Theorem \ref{thm:type1} offers the appropriate setting to minimize the worst-case noise variance.
\end{itemize}
The value of $p_0$ that minimizes the worst-case noise variance is given by:
\begin{equation} \label{eq:opt_p0}
    p_0 = \argmin_{p_0 \in [0, p]} \left|\text{Var}_n[Y \mid x^*_n; p_0] - \text{Var}_1[Y \mid x^*_1; p_0]\right|
\end{equation}

When the validity conditions of Theorem~\ref{thm:type0} cannot be met, Theorem~\ref{thm:type1} provides the optimal solution, by setting $p_0=0$ when $N=2n$ and $p_0=p$ when $N=2n+1$.
\begin{thm} \label{thm:type1}
    Given $p=\frac{1}{e^\epsilon +N -1}$, if $\Omega$ derived by Theorem \ref{thm:type0} satisfies $a_j < a_{j+1}$ and $\text{Var}_1[Y|x^*_1; p] > \text{Var}_n[Y|x^*_n; p]$, then the worst-case noise variance is minimized by the following sequences. Define the sequence $C_j$ as:
    \begin{gather*}
         C_1 = \begin{cases}
             \frac{1}{4t-1} & \text{if } N=2n \\
             \frac{1}{4t-2} & \text{if } N=2n+1 \\
        \end{cases}, \\
        C_{j+1} = \frac{1-2t + \sqrt{C_{j}^2 + 2C_{j} - 4tC_{j} + (2t-1)^2}}{C_{j}^2 + 2C_{j} - 4tC_{j}}
    \end{gather*}
    Then, the sequence $a_j$ is defined as:
    \begin{gather*}
        a_n = \frac{1}{t}, \quad
        a_j = C_ja_{j+1} \quad \text{for } j \in \{1, 2, \ldots, n-1\}
    \end{gather*}
    \begin{proof}
        The proof is provided in Appendix \ref{pf:thm:type1}
    \end{proof}
\end{thm}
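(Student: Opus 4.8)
The plan is to turn the minimax problem into a system of ``equioscillation'' equations among the per-segment variances and then solve it. First I would use symmetry to restrict to $[0,1]$, split it into the $n$ segments $[x_{j-1},x_j]$ with $x_0=0$, $x_n=1$, and invoke Proposition~\ref{prop:1} together with Lemmas~\ref{lem:max_point}--\ref{lem:worst-case} to rewrite the objective as: minimize $\text{Var}_n[Y\mid x^*_n]$ over the free variables $a_1,\dots,a_{n-1}$ and $p_0$ (with $a_n=1/t$ forced by $x_n=ta_n=1$), subject to $\text{Var}_n[Y\mid x^*_n]\ge \text{Var}_j[Y\mid x^*_j]$ for $j=2,\dots,n-1$ and $\text{Var}_n[Y\mid x^*_n]\ge\max_{x\in[0,x_1]}\text{Var}_1[Y\mid x]$, where $\text{Var}_1[Y\mid x]$ is the $j=1$ quadratic whose vertex is exactly $x^*_1$. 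Since the term $2p\sum_i a_i^2$ is common to every $\text{Var}_j[Y\mid x^*_j]$, each equality $\text{Var}_j[Y\mid x^*_j]=\text{Var}_k[Y\mid x^*_k]$ becomes a relation purely among the $a$'s and the scalars $p,p^*,p_0$.

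Next I would establish equioscillation. The theorem's hypothesis says precisely that the Theorem~\ref{thm:type0} construction (which equalizes $\text{Var}_2[Y\mid x^*_2]=\dots=\text{Var}_n[Y\mid x^*_n]$ and minimizes their common value) gives increasing $a_j$ but violates $\text{Var}_1[Y\mid x^*_1]\le\text{Var}_n[Y\mid x^*_n]$; hence at the genuine optimum the first-segment constraint is active, and a perturbation argument should force \emph{all} segment maxima to coincide, $\text{Var}_1[Y\mid x^*_1]=\dots=\text{Var}_n[Y\mid x^*_n]$ --- if some were strictly below the maximum, nudging the endpoint(s) adjacent to a maximizing segment would strictly decrease $\text{Var}_n[Y\mid x^*_n]$ without raising any currently-maximal segment. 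The monotone dependence of $\text{Var}_n[Y\mid x^*_n]$ on $p_0$ (from the case analysis preceding the theorem) then pins $p_0$ to the stated boundary: $p_0=0$ (forced) when $N=2n$, and $p_0=p$ when $N=2n+1$. Substituting $p=\tfrac{1}{e^\epsilon+N-1}$, I would compute $p^*=p$ in the odd case and $p^*=\tfrac{(e^\epsilon+1)p}{2}$ in the even case.

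Then I would solve the $n-1$ equalities. For $j=2,\dots,n-1$, plugging the $j\ge2$ form of $\text{Var}_j$ and $x^*_j=\tfrac{a_{j-1}+a_j}{2}$ into $\text{Var}_j[Y\mid x^*_j]=\text{Var}_{j+1}[Y\mid x^*_{j+1}]$ and cancelling the common term yields $a_{j+1}^2+(2-4t)a_ja_{j+1}=a_{j-1}^2+(2-4t)a_{j-1}a_j$; dividing by $a_{j+1}^2$ and writing $C_j=a_j/a_{j+1}$ produces the stated recursion for $C_{j+1}$ (its admissible positive root, equivalently $a_j=(4t-2)a_{j+1}-a_{j+2}$). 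For the boundary equality I set $\beta\triangleq\tfrac{e^\epsilon p+p-2p^*}{(e^\epsilon-1)p}$, so that $x^*_1=\tfrac{\beta a_1}{2}$ and $\text{Var}_1[Y\mid x^*_1]=\tfrac{\beta^2a_1^2}{4}+2a_1^2p^*+2\sum_{i\ge2}a_i^2p$. In the odd case $\beta=1$, so $\text{Var}_1[Y\mid x^*_1]=\text{Var}_2[Y\mid x^*_2]$ collapses to $a_1(4t-2)=a_2$, i.e. $C_1=\tfrac{1}{4t-2}$; in the even case $\beta=0$ and $p^*-p=\tfrac{t}{2}$, so the equality becomes $(4t-1)C_1^2+(4t-2)C_1-1=0$, whose discriminant is $16t^2$, giving $C_1=\tfrac{1}{4t-1}$. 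Combined with $a_n=1/t$ and $a_j=C_ja_{j+1}$, this is exactly the claimed sequence.

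The hard part will be making the equioscillation step rigorous --- proving that the minimax optimum equalizes all $n$ maxima rather than merely activating the $\text{Var}_1$ constraint --- which requires checking that whenever a strict gap persists the feasible directions permit a simultaneous strict decrease of $\text{Var}_n[Y\mid x^*_n]$, together with the monotone dependence on $p_0$ that fixes its value. The remainder is parity-specific bookkeeping: tracking how $p_0$, $p^*$, and the vertex $x^*_1$ enter the structurally different first-segment variance so that the two clean values of $C_1$ drop out, and then verifying feasibility of the output --- strict monotonicity $a_j<a_{j+1}$ (which the hypothesis already grants for the Theorem~\ref{thm:type0} sequence), that $x^*_1\in[0,x_1]$ so the vertex really is segment $1$'s maximum, and that the common equalized value dominates the endpoint variances $\text{Var}_j[Y\mid x_{j-1}]$ and $\text{Var}_j[Y\mid x_j]$ of every segment.
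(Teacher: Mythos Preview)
Your proposal is correct and follows essentially the same approach as the paper: impose equioscillation $\text{Var}_1[Y\mid x^*_1]=\cdots=\text{Var}_n[Y\mid x^*_n]$, derive the $C_j$ recursion from $\text{Var}_j=\text{Var}_{j+1}$ for $j\ge2$, and obtain $C_1$ from the parity-specific $j=1$ equality after fixing $p_0$ to its boundary value. You actually supply more justification than the paper does---the paper simply asserts the equioscillation condition and states $a_1=a_2/(4t-1)$ or $a_1=a_2/(4t-2)$ without the intermediate $\beta$, $p^*$ bookkeeping you carry out---so your plan is, if anything, a more complete version of the paper's own argument.
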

\textbf{Optimization algorithm.} Based on Theorems \ref{thm:type0} and \ref{thm:type1}, we provide the detail algorithm to find the optimal $N$ and $\Omega$ in Appendix \ref{alg:optimize}.

\textbf{Lower bound of variance.} The structured N-output mechanism also achieves a favorable lower bound on the worst-case noise variance, which improves as $N$ increases. 
\begin{lem} \label{lem:var_infty}
    The infimum of the worst-case noise variance of the N-output mechanis is:
    \begin{equation*}
        \inf\max_{x \in [-1, 1]} \text{Var}[Y|x;N, \Omega] = \frac{1}{(N-1)^2}
    \end{equation*}
\begin{proof}
    The proof is provided in Appendix \ref{pf:lem:var_infty}
\end{proof}
\end{lem}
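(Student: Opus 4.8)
The plan is to prove the equality by matching a lower and an upper bound. For the lower bound I will show that \emph{every} unbiased $N$-output mechanism on $[-1,1]$ satisfies $\max_{x\in[-1,1]}\Var[Y|x]\ge\frac{1}{(N-1)^2}$, with strict inequality whenever $\epsilon<\infty$; for the upper bound I will exhibit, for each $\epsilon$, a valid $\epsilon$-LDP instance of the (structured) N-output mechanism whose worst-case variance converges to $\frac{1}{(N-1)^2}$ as $\epsilon\to\infty$. Together these give that the infimum of the worst-case variance over all admissible instances equals $\frac{1}{(N-1)^2}$, and is not attained.

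\emph{Lower bound.} Let the distinct outputs be $a_{(1)}<\dots<a_{(N)}$. Unbiasedness at $x=\pm1$ forces $a_{(1)}\le-1$ and $a_{(N)}\ge1$, since $1=\E[Y|1]=\sum_k a_{(k)}p_{(k)}(1)\le a_{(N)}$ and symmetrically; hence the $N-1$ consecutive gaps $[a_{(k)},a_{(k+1)}]$ cover $[-1,1]$, so some gap meets $[-1,1]$ in an interval $[\alpha,\beta]$ of length $\beta-\alpha\ge\frac{2}{N-1}$. Put $c=\frac{\alpha+\beta}{2}\in[-1,1]$, which lies strictly between the consecutive outputs $a_{(k)},a_{(k+1)}$. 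Using the standard two-point variance bound --- for any distribution $\mu$ on $\{a_{(j)}\}$ with mean $c$ one has $\Var_\mu\ge(a_{(k+1)}-c)(c-a_{(k)})$, proved by observing that $(y-c)^2-(a_{(k+1)}-c)(c-a_{(k)})-\lambda(y-c)=(y-a_{(k)})(y-a_{(k+1)})\ge0$ on the support with $\lambda=a_{(k)}+a_{(k+1)}-2c$ and taking $\mu$-expectations --- together with $a_{(k+1)}-c\ge\beta-c=\frac{\beta-\alpha}{2}$ and $c-a_{(k)}\ge c-\alpha=\frac{\beta-\alpha}{2}$, we get $\Var[Y|c]\ge\bigl(\frac{\beta-\alpha}{2}\bigr)^2\ge\frac{1}{(N-1)^2}$. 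Under $\epsilon$-LDP with $\epsilon<\infty$ every output has $p_i(c)\ge e^{-\epsilon}\max_{x'}p_i(x')>0$, so $Y\mid c$ is not a two-point law and the bound is strict; hence $\frac{1}{(N-1)^2}$ is a lower bound that is never attained.

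\emph{Upper bound.} Instantiate the structured N-output mechanism with $p_0=p$ when $N=2n+1$ (so the zero output carries probability $e^\epsilon p_0=\frac{e^\epsilon}{e^\epsilon+2n}\to1$ at $x=0$, killing the central segment's variance) and $p_0=0$ when $N=2n$, and take $\Omega$ to be the provably optimal set from Theorem~\ref{thm:type1} --- equivalently the scaled equally-spaced grid $a_j=\frac{j}{nt}$ (odd) or $a_j=\frac{2j-1}{(2n-1)t}$ (even) with $t=(e^\epsilon-1)p$; this instance is $\epsilon$-LDP by the mechanism's LDP guarantee and unbiased by Proposition~\ref{prop:2}, since $x_{\pm n}=ta_{\pm n}=\pm1$. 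By Lemma~\ref{lem:worst-case} the worst-case variance equals $\Var_n[Y|x_n^*]$, and evaluating the expression for $\Var_j[Y|x]$ at $j=n$, $x_n^*=\frac{a_{n-1}+a_n}{2}$, using $ta_n=1$ so that $(e^\epsilon-1)p\,a_{n-1}a_n=ta_{n-1}a_n=a_{n-1}$, gives $\Var_n[Y|x_n^*]=\frac{(a_{n-1}+a_n)^2}{4}-a_{n-1}+2p\sum_i a_i^2$. As $\epsilon\to\infty$ one has $p\to0$, $t\to1$, $a_n\to1$, and the recursion for $\Omega$ (the $C_j$ of Theorem~\ref{thm:type1} satisfy $C_1\to\frac13$ or $\frac12$ and $C_{j+1}\to\frac{1}{2-C_j}$) yields $a_{n-1}\to\frac{n-1}{n}$ (odd) and $a_{n-1}\to\frac{2n-3}{2n-1}$ (even); substituting shows $\Var_n[Y|x_n^*]\to\frac{1}{4n^2}=\frac{1}{(N-1)^2}$ in the odd case and $\to\frac{1}{(2n-1)^2}=\frac{1}{(N-1)^2}$ in the even case. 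Hence the infimum is $\le\frac{1}{(N-1)^2}$, and with the lower bound, equals it.

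\emph{Main obstacle.} The lower bound is a short pigeonhole-plus-two-point-bound argument once the endpoint constraints $a_{(1)}\le-1\le1\le a_{(N)}$ are noted; the only subtlety is to evaluate at the midpoint of the \emph{overlap} $[\alpha,\beta]$ rather than of the full gap, so the evaluation point stays in $[-1,1]$. The real work is the upper bound: one must verify that for large $\epsilon$ the relevant branch of the optimizer (Theorem~\ref{thm:type0} versus~\ref{thm:type1}) produces an output set converging to the equally-spaced grid, control the simultaneous limits $p\to0$, $t\to1$, $a_n=1/t\to1$ inside $\Var_n[Y|x_n^*]$ (in particular that the cross term collapses to $a_{n-1}$ exactly and $2p\sum_i a_i^2\to0$), and confirm the correct $p_0$ --- $p_0=p$ for odd $N$ and $p_0=0$ for even $N$ --- so that no segment is asymptotically worse than $\frac{1}{(N-1)^2}$.
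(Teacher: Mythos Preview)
Your argument is correct, and the upper-bound half is essentially the paper's own computation: instantiate the structured mechanism with the Theorem~\ref{thm:type1} outputs, send $\epsilon\to\infty$ so that $p\to0$, $t\to1$, $a_n\to1$, reduce the $C_j$ recursion to $C_{j+1}\to\frac{1}{2-C_j}$ with $C_1\to\frac13$ (even) or $\frac12$ (odd), read off $a_{n-1}\to\frac{N-3}{N-1}$, and substitute into $\Var_n[Y\mid x_n^*]=\frac{(a_{n-1}+a_n)^2}{4}-a_{n-1}+2p\sum_i a_i^2$ to obtain $\frac{1}{(N-1)^2}$. That is exactly the route the paper takes.

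Where you genuinely differ is the lower bound. The paper does not give one: it simply asserts, without proof, that the worst-case variance of the mechanism decreases in $\epsilon$, and hence that the infimum coincides with the $\epsilon\to\infty$ limit. Your pigeonhole-plus-two-point-variance argument is both more rigorous and more general --- it shows that \emph{any} unbiased mechanism with $N$ distinct outputs on $[-1,1]$, structured or not, has worst-case variance at least $\frac{1}{(N-1)^2}$. This closes a gap the paper leaves open and makes the infimum statement independent of how the optimizer moves with $\epsilon$.

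Two minor quibbles. First, citing Lemma~\ref{lem:worst-case} to locate the worst case at $x_n^*$ is slightly off: that lemma is a necessary condition on the optimizer, not a property of a given instance. What you actually need is that the Theorem~\ref{thm:type1} construction equalizes all $\Var_j[Y\mid x_j^*]$ by design (see its proof), which together with Lemma~\ref{lem:max_point} pins the maximum at $\Var_n[Y\mid x_n^*]$. Second, the phrase ``equivalently the scaled equally-spaced grid'' is only true in the limit; for finite $\epsilon$ the Theorem~\ref{thm:type1} outputs are not equally spaced. Neither point affects correctness. (A tiny edge case: for $N=2$ your ``not a two-point law'' reasoning for strictness does not apply, but strictness still holds there because finite-$\epsilon$ LDP forces $|a_{(1)}|,a_{(2)}>1$ strictly.)
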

Following this lemma, prior mechanisms like Duchi's~\cite{duchi2018minimax} and three-output~\cite{zhao2020local} have fixed lower bounds of 1 and $\frac{1}{4}$, respectively, even as $\epsilon \to \infty$. In contrast, the worst-case noise variance of the N-output mechanism can be made small by increasing $N$.

\subsection{Analysis}
\begin{figure}[t!]
    \centering
    \includegraphics[width=0.49\textwidth, trim={0.3cm 0.3cm 0.3cm 0.3cm}, clip]{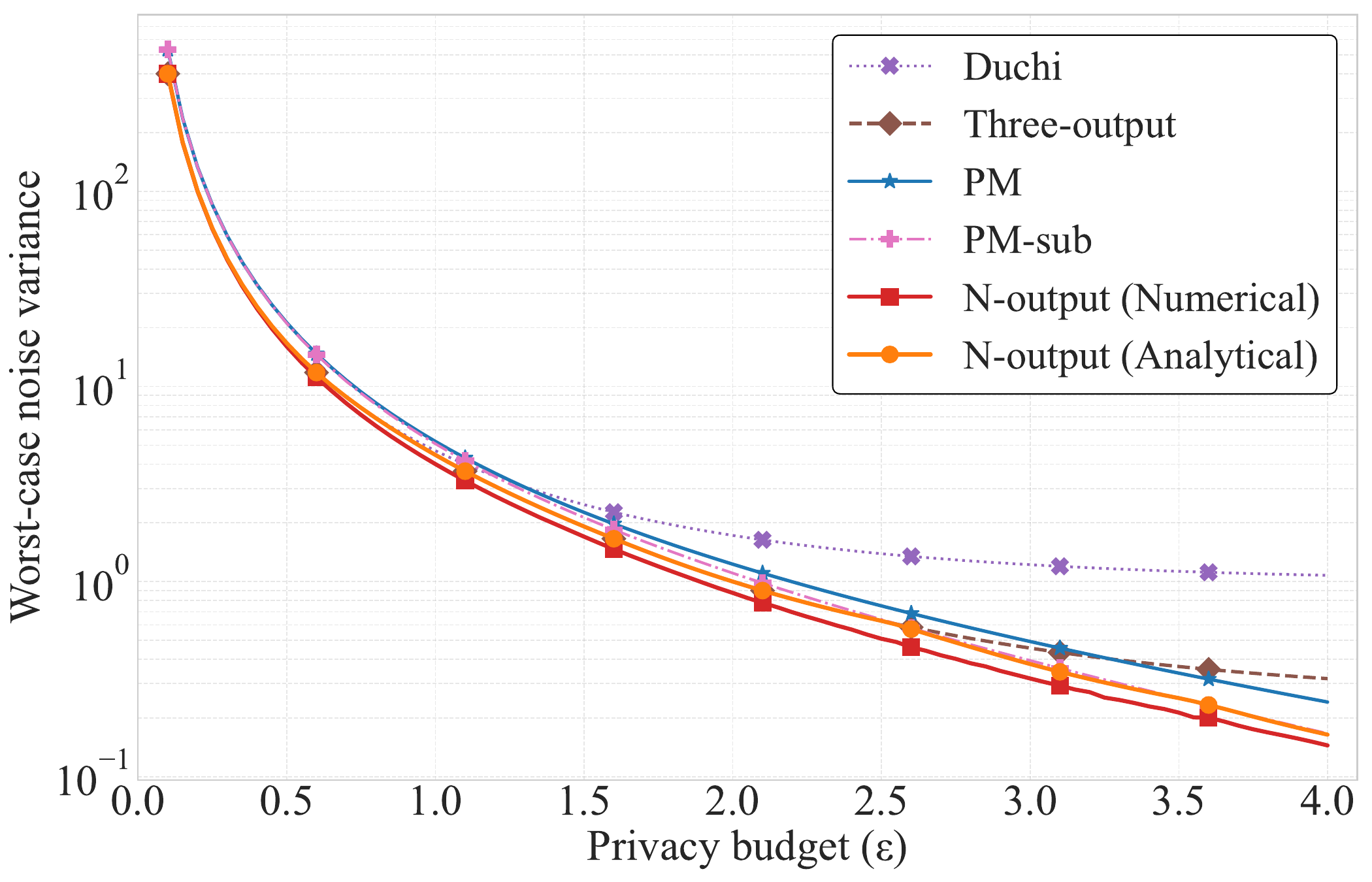}
    \caption{Theoretical worst-case noise variance (Lower is better).}
    \label{fig:wcnv} 
\end{figure}

In this section, we analyze the key properties of our proposed mechanisms, including the output size $N$ and the theoretical worst-case noise variance. 

\textbf{Selecting N.} For the numerical approach, the optimal $N$ for a given $\epsilon$ can be pre-computed by optimizing the mechanism across a range of $N$ values and selecting the one that yields the lowest objective function value. This process is agnostic to datasets and only needs to be performed once per $\epsilon$. Our findings indicate that the N-output mechanism achieves strong performance with $N \leq 16$. While a larger $N$ can offer marginal accuracy improvements, we consider $N \leq 16$ a practical upper bound for communication efficiency; for instance, it requires at most 4 bits to encode each report. Thus, we focus on $N \leq 16$ in this paper.

For the analytical approach, $N$ is similarly chosen to minimize the objective function. However, due to the simplifications made in the analytical model, the optimal $N$ is typically smaller than in the numerical approach. While this us sufficient for mean estimation, it can be a significant limitation for tasks like distribution estimation that benefit from a larger output space. As summarized in Table~\ref{table:bits}, the required number of bits $(\log_2 N)$ for the analytical solution is often limited to $N \in \{2, 3, 4, 5\}$ in practical $\epsilon$ ranges.

\begin{table}[h!]
\centering
\caption{Required bits according to $\epsilon$}
\begin{tabular}{ccccc}
\hline \hline
    & $\epsilon$ < 0.69 & $\epsilon$ < 2.54 & $\epsilon$ < 5.41 & $\epsilon$ < 7.8 \\ \hline
\textbf{bits} & 1     & 2  & 3     & 4  \\ \hline \hline
\end{tabular}
\label{table:bits}
\end{table}

\textbf{Worst-case noise variance.} We compare the theoretical worst-case noise variance of our mechanism with existing ones, as shown in Figure~\ref{fig:wcnv}. Our N-output mechanisms, particularly the numerical version, consistently outperform other mechanisms across all $\epsilon$ values.

We observe distinct behaviors among different types of mechanisms. Continuous output mechanisms (PM~\cite{wang2019collecting} and PM-sub~\cite{zhao2020local}) perform poorly at low $\epsilon$ but improves as $\epsilon$ increases. Conversely, other discrete mechanisms (Duchi's~\cite{duchi2018minimax} and three-output~\cite{zhao2020local}) are effective at low $\epsilon$ but show limited improvement at high $\epsilon$, where their variance fails to decrease significantly. In contrast, the variance of our N-output mechanism decreases steadily as $\epsilon$ grows, a result of its ability to adaptively increase $N$. This empirical result aligns with our theoretical findings in Lemma~\ref{lem:var_infty}. Consequently, the N-output mechanism not only achieves the lowest theoretical worst-case noise variance in the practical $\epsilon$ range but also does so while being more communication-efficient than continuous-output mechanisms.

\section{Advancing Distribution Estimation} \label{sec:em}
In this section, we extend the scope of our analysis from mean estimation to the distribution estimation. mechanisms such as Duchi's~\cite{duchi2018minimax}, three-output~\cite{zhao2020local}, and PMs~\cite{wang2017locally, zhao2020local} were primarily designed for mean estimation, their collected data can be post-processed by the server to estimate the underlying data distribution. In contrast, the Square Wave (SW) mechanism \cite{li2020estimating} was specifically designed for distribution estimation, employing the Expectation-Maximization (EM) algorithm \cite{dempster1977maximum} to derive the distribution from perturbed data. The goal of SW mechanism is to minimize the Wasserstein distance.

However, a generalized application of EM to other existing mechanisms has not been fully addressed. Here, we build upon the process from \cite{li2020estimating} to propose a unified EM-based framework for distribution estimation. This framework is generalized to address two primary scenarios, applying to both continuous-output mechanisms and discrete-output mechanisms.

\textbf{Problem definition.} Consider an aggregator receiving $m$ perturbed values from users. The users' true values are drawn from a domain $\mathcal{X}= [-1, 1]$.  To model the data distribution, we partition this domain into $d$ disjoint bins. The true distribution is represented by a vector $\boldsymbol{\pi} = \{\pi_1 , \ldots, \pi_d\}$, where each $\pi_i$ is the proportion of true values that fall into the $i$-th bin. The objective of distribution estimation is to accurately compute an estimate of this vector, denoted as  $\boldsymbol{\pi} = \{\hat{\pi}_1, \ldots, \hat{\pi}_d\}$, which maximizes the log-likelihood $\mathcal{L(\hat{\pi})} = \ln Pr[\mathbf{y} | \hat{\pi}]$.

\textbf{EM framework.} The input domain and output space of continuous-output mechanism are divided into $d$ bins and $\Tilde{d}$ bins, respectively. Let $R_i= [r_{i-1}, r_i]$ denote the $i$-th bin of the divided input domain and $\Tilde{R}_i= [\Tilde{r}_{i-1}, \Tilde{r}_i]$ denote the $i$-th bin of the divided output space. Let $c_j$ represent the number of outputs falling into $\Tilde{R}_i$. 
For discrete-output mechanisms, the $N$ possible outputs naturally form $N$ bins, where $c_j$ is the count of the discrete output $a_i$.
The algorithm for distribution estimation is shown in Algorithm~\ref{alg:em}

\begin{algorithm}
\caption{Distribution estimation employing EM} \label{alg:em}
    \begin{algorithmic}[1]
    \Require $y, \Theta^\mathcal{P}, \Theta^\mathcal{N}, \tau,$
    \Ensure $\hat{\pi}$

    \While{$|L_{v+1}(\hat{\pi}) - L_{v}(\hat{\pi})| > \tau$}
        \For{$i \in \{1, \ldots, d\}$}
            \If{Continuous mechanism}
                \State $G_i = \hat{\pi}_i \sum_{j=1}^{\Tilde{d}} c'_j \frac{\Theta_{j, i}^{\mathcal{P}}}{\Sigma_{k=1}^d \Theta_{j, k}^{\mathcal{P}} \mu_k}$\
            \ElsIf{Discrete mechanism}
                \State $G_i = \hat{\pi}_i \Sigma_{j=-n}^{n} c_j \frac{\Theta_{j, i}^\mathcal{N}}{\Sigma_{k=1}^d \Theta_{j, k}^\mathcal{N} \hat{\pi}_k}$\
            \EndIf
            
        \EndFor
        \For{$i \in \{1, \ldots, d\}$}
            \State $\hat{\pi}_i = \frac{G_i}{\Sigma_{k=1}^d G_k}$
        \EndFor
    \EndWhile
    \State \Return $\hat{\pi}$
    \end{algorithmic}
\end{algorithm}

Following the EM process outlined \cite{li2020estimating}, we define $Q_i$ for continuous-output mechanisms as:
\begin{equation*}
    G_i = \mu_i \sum_{j=1}^{\Tilde{d}} c'_j \frac{\text{Pr}[y \in \Tilde{R}_j | x \in R_i, \mu]}{\text{Pr}[y \in \Tilde{R}_j |\mu]}
\end{equation*}
where $\Theta^\mathcal{P}_{j, i} = \text{Pr}[y \in \Tilde{R}_j | x \in R_i, \mu]$. 
For discrete-output mechanism, we define $Q_i$ as:
\begin{equation*}
    G_i = \hat{\pi}_i \sum_{j=-n}^{n} c_j \frac{\text{Pr}[y = a_j| x \in R_i, \mu]}{\text{Pr}[y = a_j |\mu]}
\end{equation*}
where $\Theta^\mathcal{N}_{j, i} = \text{Pr}[y a_j | x \in R_i, \mu]$ and $c_j$ is the number of outputs $a_j$. Here, $\Theta^{\mathcal{P}}$, $\Theta^{\mathcal{N}}$ are the probability transition matrices for the continuous and discrete cases, respectively.

The distribution $\hat{\pi}$ is then updated iteratively as:
\begin{equation*}
    \hat{\pi}_i = \frac{G_i}{\Sigma_{k=1}^d G_k}
\end{equation*}
The iteration stops when the relative improvement in the log-likelihood becomes sufficiently small, as suggested by \cite{fanti2016building, li2020estimating}

\textbf{Derivation of $\Theta^{\mathcal{P}}$.} Let $l = r_i - r_{i-1}$ and $\Tilde{l} = \Tilde{r}_i - \Tilde{r}_{i-1}$. Here, we derive a general probability transition matrix for the continuous case $\Theta_{i, j}^{\mathcal{P}}$:
\begin{equation*}
\begin{split}
    \Theta_{i, j}^{\mathcal{P}} &= \text{Pr}[y \in \Tilde{R}_j | x \in R_i, \boldsymbol{\pi}] \\
    &= \frac{\text{Pr}[y \in \Tilde{R}_j, x \in R_i  | \boldsymbol{\pi}]}{\text{Pr}[x \in R_i| \boldsymbol{\pi}]}\\
    &= \frac{\int_{R_i} \text{Pr}[y \in \Tilde{R}_j | x, \boldsymbol{\pi}] f(x|\boldsymbol{\pi}) \ dx}{\pi_i}
\end{split}
\end{equation*}
where $f(x|\boldsymbol{\pi})$ is the probability density function. We assume that true values in the range $R_i$ are uniformly distributed, so $f(x|\boldsymbol{\pi}) = \frac{\pi_i}{l}$. Then,
\begin{equation*}
    \Theta_{i, j}^{\mathcal{P}} = \frac{\int_{R_i} \Pr[y \in \Tilde{R}_j | x, \boldsymbol{\pi}] \frac{\pi_i}{l} \ dx}{\pi_i} = \frac{1}{l} \int_{R_i} \Pr[y \in \Tilde{R}_j | x, \boldsymbol{\pi}] \ dx
\end{equation*}
Since $\text{Pr}[y \in \Tilde{R}_j | x, \boldsymbol{\pi}]$ is independent of $\boldsymbol{\pi}$,
\begin{equation} \label{app:eq:theta_int}
    \begin{split}
        \Theta_{i, j}^{\mathcal{P}}  &= \frac{1}{l} \int_{R_i} \text{Pr}[y \in \Tilde{R}_j | x] \ dx \\
        & = \frac{1}{l} \int_{R_i} \int_{\Tilde{R}_j} \text{Pr}[y | x] \ dy \ dx \\
        & = \frac{1}{l} \int_{R_i} \int_{\Tilde{R}_j} q \ dy \ dx + \frac{1}{l} \int_{R_i} \int_{\Tilde{R}_j} \text{Pr}[y|x] - q \ dy \ dx
    \end{split}
\end{equation}
where $q = \min_x \Pr[y|x]$
The first term of \eqref{app:eq:theta_int} is:
\begin{equation} \label{app:eq:theta_first_term}
\begin{split}
    \frac{1}{l} \int_{r_{i-1}}^{r_i} \int_{\Tilde{r}_{j-1}}^{\Tilde{r}_j} q \ dy \ dx
     &= \frac{1}{l} \int_{r_{i-1}}^{r_i} q(\Tilde{r}_{j} - \Tilde{r}_{j-1}) \ dx \\
     &= \frac{1}{l} q(\Tilde{r}_{j} - \Tilde{r}_{j-1})(r_{i} - r_{i-1}) \\
     &= q \Tilde{l}
\end{split}
\end{equation}
The second term of \eqref{app:eq:theta_int} depends on $R_i$ and $\Tilde{R}_j$. To represent this term, we denote the inverse function of $L(\cdot)$ and $R(\cdot)$, which are defined in Appendix~\ref{sec:add_prelim}, as $\Bar{L}(\cdot)$ and $\Bar{R}(\cdot)$, respectively, and use $\mathbb{I}_{(\cdot)}$ as an indicator function. There are two cases according to the value of $\epsilon$.  Let $s=(e^\epsilon-1)$

i) If $R(x) - L(x) \geq \Tilde{l}$,
\begin{equation*}
\begin{split}
    & \quad \int_{R_i} \int_{\Tilde{R}_j} \text{Pr}[y|x] - q \ dy \ dx \\
    & = \mathbb{I}_{ R(r_{i-1}) \leq \Tilde{r}_{j} \bigwedge R(r_i) \geq \Tilde{r}_{j-1} } \int_{ \max \{\Bar{R}(\Tilde{r}_{j-1}) , r_{i-1}\}  }^{ \min\{ \Bar{R}(\Tilde{r}_{j}) , r_i \} } (R(x) - \Tilde{r}_{j-1}) sq  \ dx \\[0.5em]
    & + \mathbb{I}_{ L(r_{i-1}) \leq \Tilde{r}_{j-1} \bigwedge R(r_i) \geq \Tilde{r}_j } \int_{ \max \{ \Bar{R}(\Tilde{r}_j), r_{i-1} \} } ^ { \min \{  \Bar{L}(\Tilde{r}_{j-1}), r_i \} } \Tilde{l}sq \ dx \\[0.5em]
    & + \mathbb{I}_{ L(r_{i-1}) \leq \Tilde{r}_j \bigwedge L(r_i) \geq \Tilde{r}_{j-1}} \int_{ \max \{ \Bar{L}(\Tilde{r}_{j-1}) , r_{i-1} \} } ^ { \min \{ \Bar{L}(\Tilde{r_j}) , r_i \}  }  ( \Tilde{r}_j - L(x) )sq  \ dx
\end{split}
\end{equation*}
ii) If $R(x) - L(x) < \Tilde{l}$,
\begin{equation*}
\begin{split}
    & \quad \int_{R_i} \int_{\Tilde{R}_j} \text{Pr}[y|x] - q \ dy \ dx \\
    & = \mathbb{I}_{ L(r_{i-1}) \leq \Tilde{r}_{j-1} \bigwedge R(r_i) \geq \Tilde{r}_{j-1} } \int_{ \max \{\Bar{R}(\Tilde{r}_{j-1}) , r_{i-1}\}  }^{ \min\{ \Bar{L}(\Tilde{r}_{j-1}) , r_i \} } (R(x) - \Tilde{r}_{j-1})sq  \ dx \\[0.5em]
    & + \mathbb{I}_{ R(r_{i-1}) \leq \Tilde{r}_{j} \bigwedge L(r_i) \geq \Tilde{r}_{j-1} } \int_{ \max \{ \Bar{L}(\Tilde{r}_{j-1}), r_{i-1} \} } ^ { \min \{  \Bar{R}(\Tilde{r}_{j}), r_i \} } (R(x)-L(x))sq \ dx \\[0.5em]
    & + \mathbb{I}_{ L(r_{i-1}) \leq \Tilde{r}_j \bigwedge R(r_i) \geq \Tilde{r}_{j}} \int_{ \max \{ \Bar{R}(\Tilde{r}_{j}) , r_{i-1} \} } ^ { \min \{ \Bar{L}(\Tilde{r_j}) , r_i \}  }  ( \Tilde{r}_j - L(x) )sq  \ dx
\end{split}
\end{equation*}

\begin{figure}[ht!]
    \centering
    \includegraphics[width=0.49\textwidth, trim={0.3cm 0.0cm 3.0cm 2.0cm}, clip]{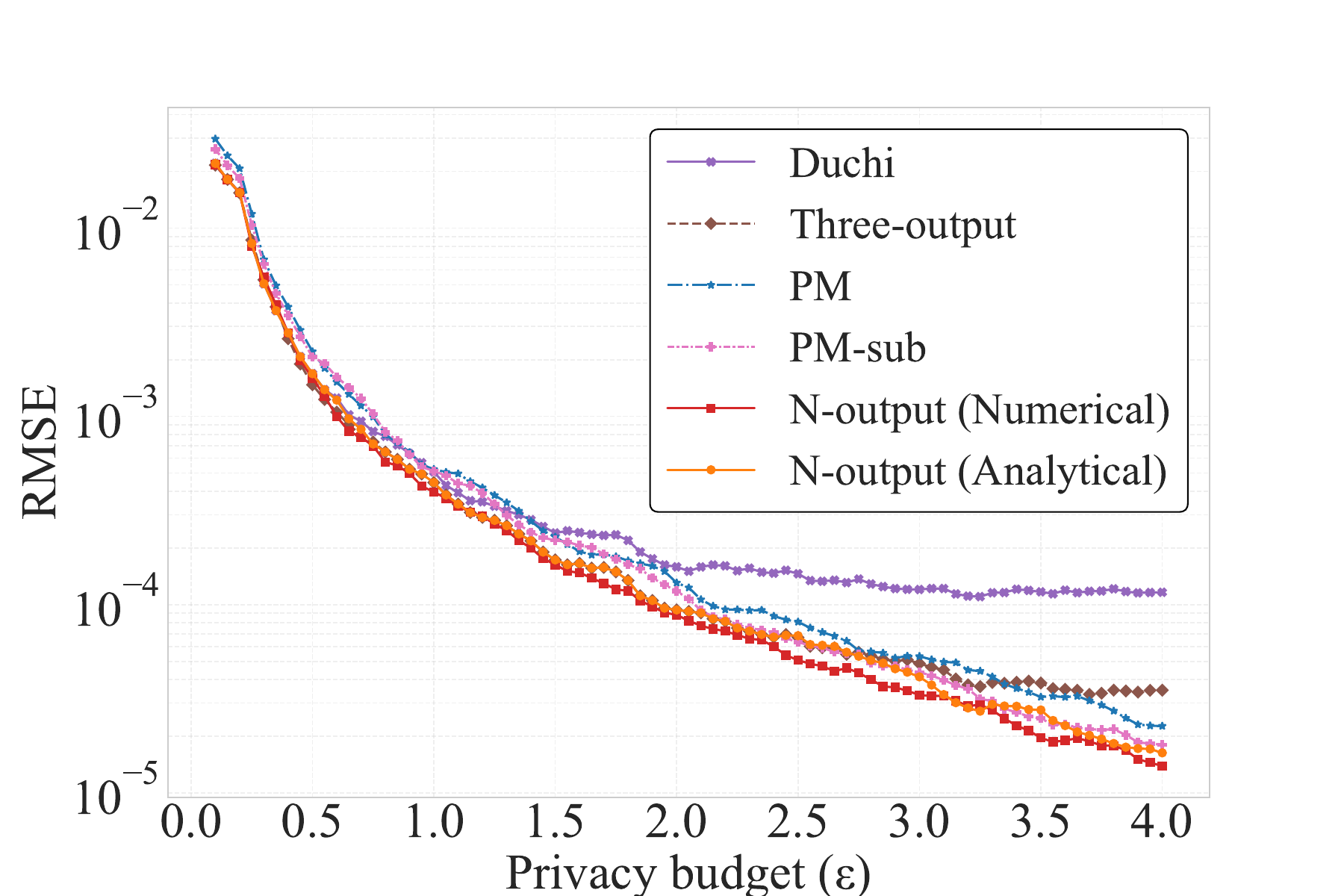}
    \caption{Worst-case mean estimation RMSE. Lower is better.}
    \label{fig:wcrmse} 
\end{figure}

\textbf{Derivation of $\Theta^{\mathcal{N}}$.} For the numerical approach, the key is to define a set of integration points, $Z_i$, for each bin $R_i$. This set is the union of the bin's endpoints $\{r_{i-1}, r_i\}$ and any of the mechanism's endpoints $x_j$ that fall strictly within that bin. Let these sorted points be denoted as $Z_i = \{z_0, z_1, \ldots, z_u\}$.

With this definition, the transition matrix element $\Theta_{ij}^{\mathcal{N}}$ can be expressed:
\begin{equation*}
    \Theta_{ij}^{\mathcal{N}} = \frac{1}{l_i} \sum_{k=0}^{u-1} \left( \frac{p_j(z_k) + p_j(z_{k+1})}{2} \right) (z_{k+1} - z_k)
\end{equation*}
This unified expression seamlessly adapts to all cases. If bin $R_i$ contains no grid points, the set $Z_i$ simplifies to just the two endpoints $\{r_{i-1}, r_i\}$, and the formula automatically reduces to the average probability across the bin:
\begin{align*}
    \Theta_{ij}^{\mathcal{N}} &= \frac{1}{r_i - r_{i-1}} \left( \frac{p_j(r_{i-1}) + p_j(z_i)}{2} \right) (r_{i} - r_{i-1}) \\
    &= \frac{ p_j(r_{i-1}) + p_j(z_i) }{2}
\end{align*}

The probability transition matrix for the analytical approach is in Appendix~\ref{app:em_anal} since analytical approach is not suitable for distribution estimation.

\begin{figure*}[ht!]
    \centering
    \includegraphics[width=\textwidth]{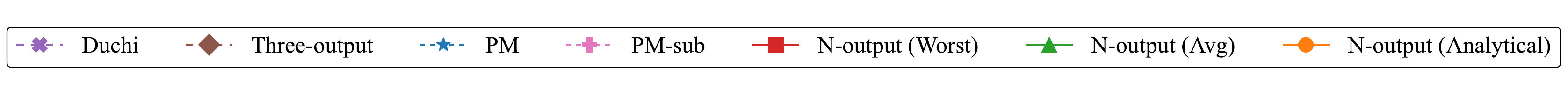}
    \begin{subfigure}[b]{0.497\textwidth}
        \centering
        \includegraphics[width=\textwidth, trim={1.0cm 1.0cm 1.0cm 1.0cm}, clip]{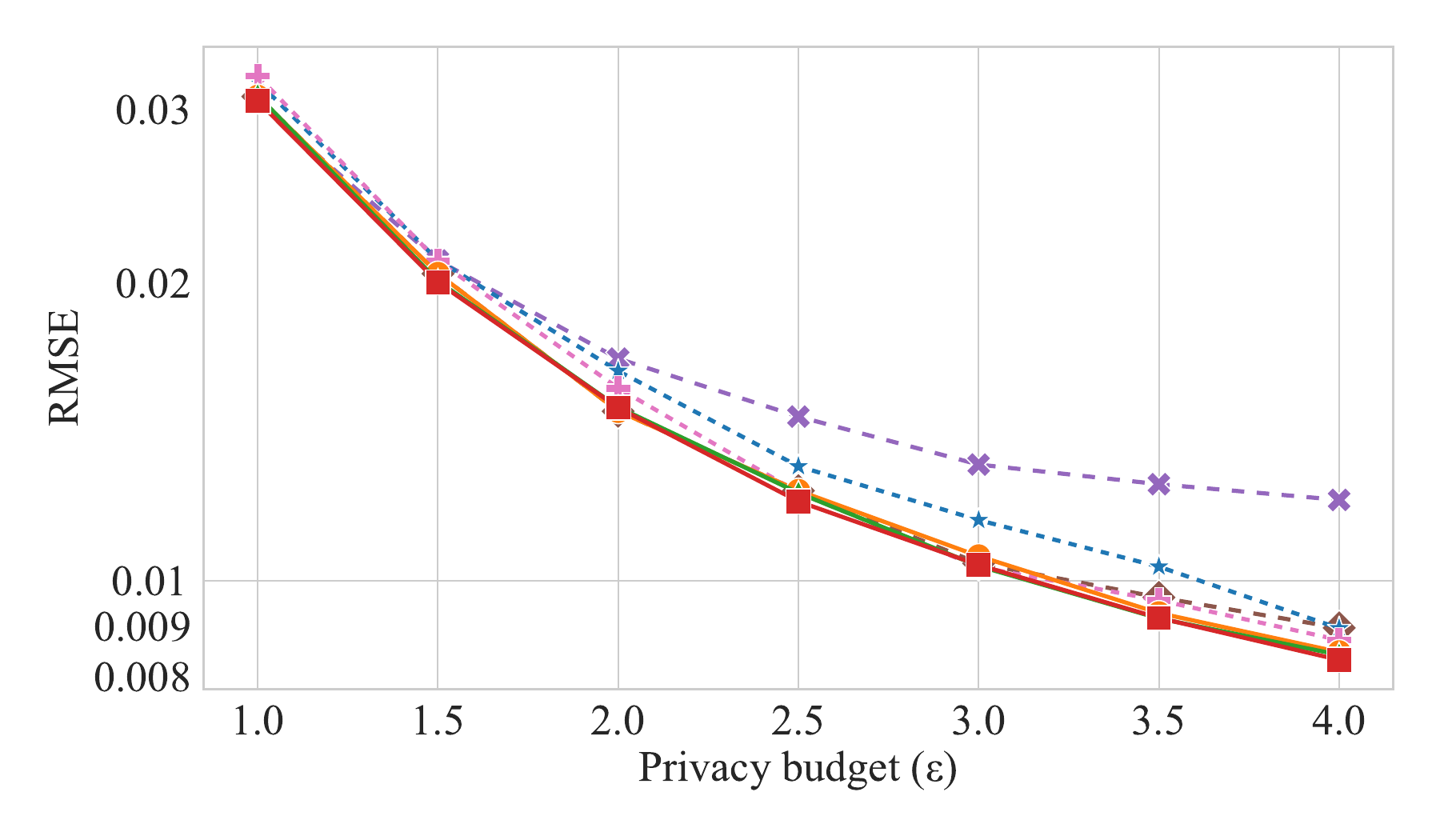} 
        \caption{Adult}
        \label{fig:adult_mean}
    \end{subfigure}
    \begin{subfigure}[b]{0.497\textwidth}
        \centering
        \includegraphics[width=\textwidth, trim={1.0cm 1.0cm 1.0cm 1.0cm}, clip]{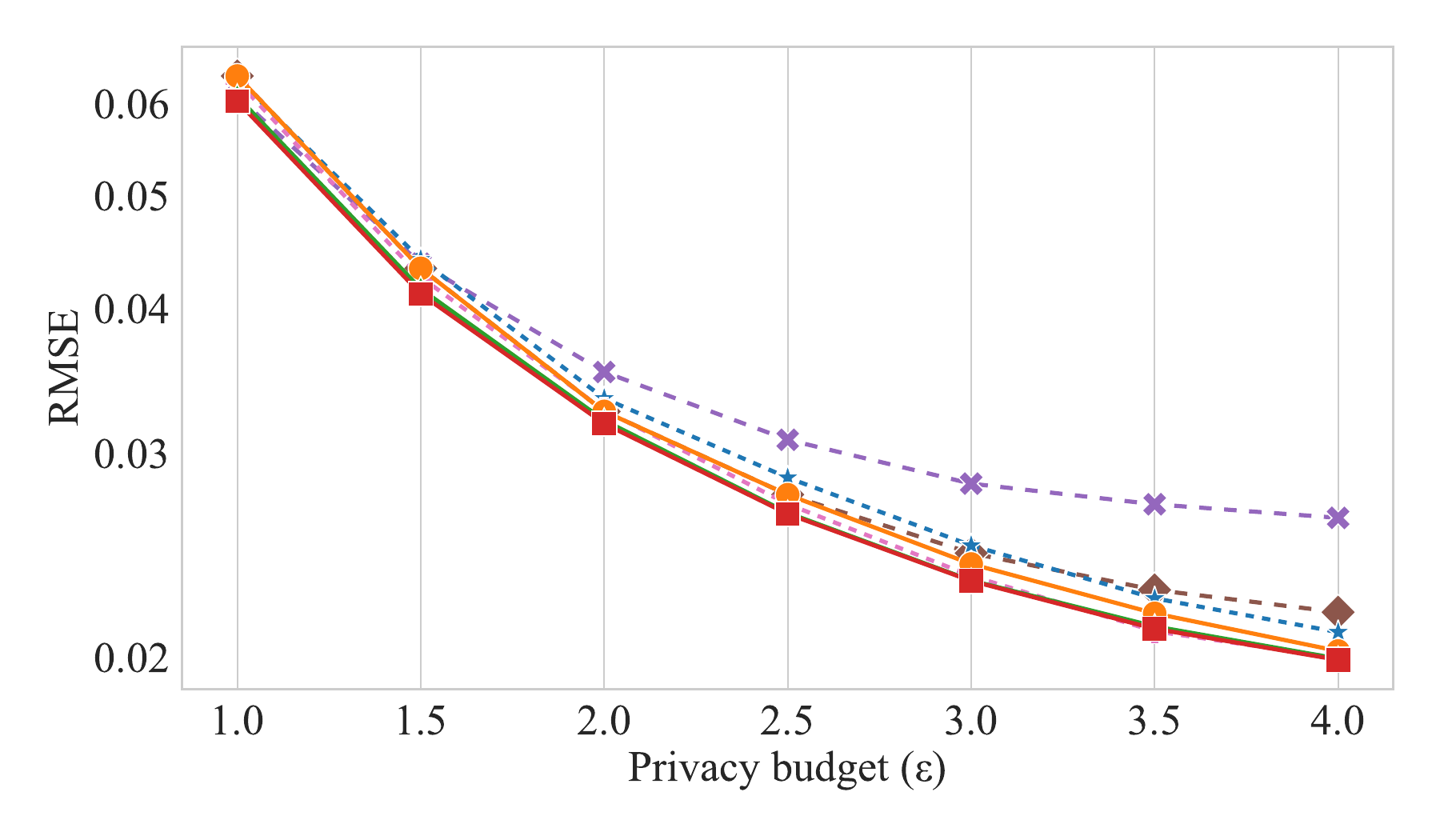} 
        \caption{Credit}
        \label{fig:credit_mean}
    \end{subfigure}
    \begin{subfigure}[b]{0.497\textwidth}
        \centering
        \includegraphics[width=\textwidth, trim={1.0cm 1.0cm 1.0cm 1.0cm}, clip]{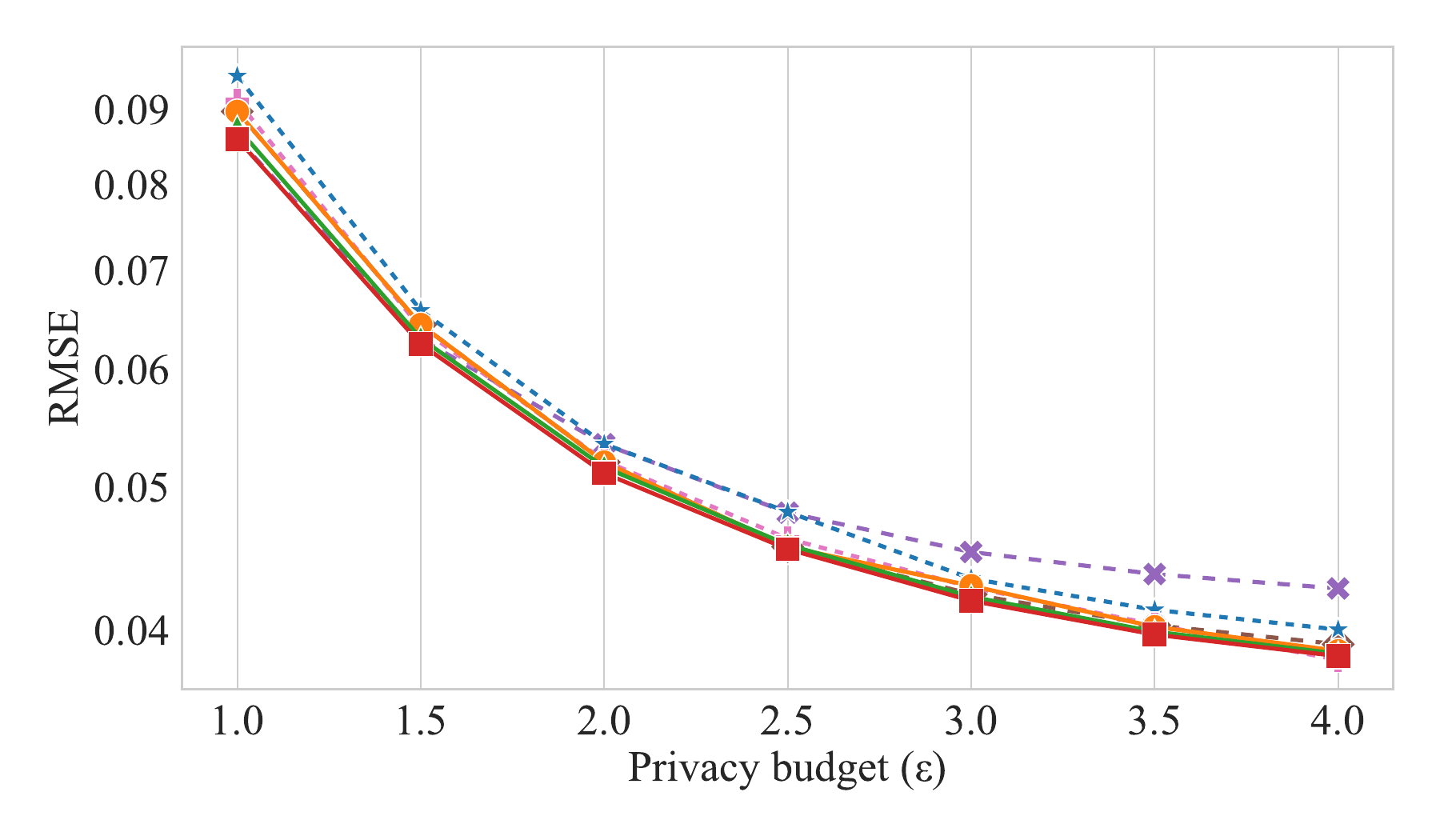} 
        \caption{Online News Popularity}
        \label{fig:ONP_mean}
    \end{subfigure}
    \begin{subfigure}[b]{0.497\textwidth}
        \centering
        \includegraphics[width=\textwidth, trim={1.0cm 1.0cm 1.0cm 1.0cm}, clip]{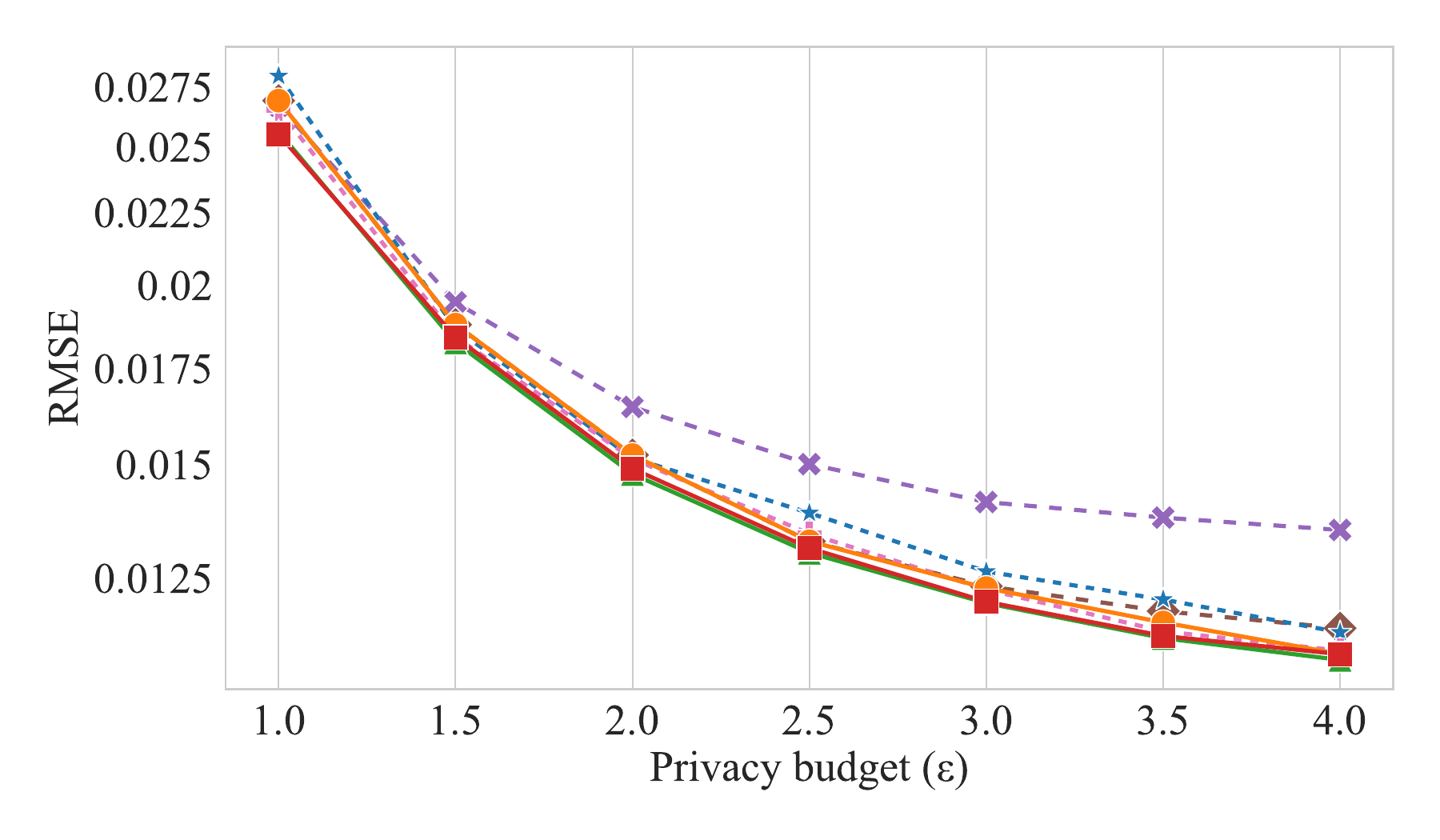} 
        \caption{Beijing Housing}
        \label{fig:BHP_mean}
    \end{subfigure}
    \caption{Mean estimation (RMSE, $\log$ scale) on real-world datasets. Lower is better.}
    \label{fig:exp_mean}
\end{figure*}

\section{Empirical Evaluation}
In this section, we empirically validate the performance of the N-output mechanism across various statistical estimation tasks. We compare our proposed mechanisms with SOTA LDP mechanisms for numerical data. 

\textbf{Experimental setup.} We use four real-world datasets: Adult~\cite{adult_2}, Credit~\cite{default_of_credit_card_clients_350}, Online News Popularity~\cite{online_news_popularity_332}, and Beijing Housing~\cite{ruiqurm_2018}. Since these datasets are multi-dimensional, we adopt a typical sampling method used in prior work~\cite{wang2017locally, zhao2020local}: each client randomly selects a single dimension from their data vector to report. All input values are normalized to the range $[-1,1]$. The results presented are averaged over 100 independent runs. For our mechanisms, the number of outputs $N$ for each $\epsilon$ is selected as the one yielding the lowest objective function value for $N \leq 16$. For reproducibility, our implementation is available in the Open Science section.

\subsection{Mean estimation}
We first validate the theoretical worst-case noise variance (Figure~\ref{fig:wcnv}) through empirical simulation. As shown in Figure~\ref{fig:wcrmse}, this experiment directly measures the RMSE of worst-case mean estimation. The results align perfectly with our analysis: the N-output (Numerical) mechanism consistently achieves the lowest error. In contrast, Duchi's and the Three-output mechanisms show limited accuracy improvements as $\epsilon$ increases, empirically confirming their theoretical limitations. The N-output (Analytical) mechanism also demonstrates strong performance, outperforming PM-sub in the high privacy regime and achieving similar error in the low privacy regime.

Next, we evaluate the performance on four real-world datasets, as shown in Figure~\ref{fig:exp_mean} (plotted on a $\log$ scale). In this evaluation, N-output (Worst) refers to the numerical solution optimized for worst-case variance, while N-output (Avg) is the numerical approach optimized for the average case. The N-output (Worst) mechanism consistently outperforms all other mechanisms across the entire range of $\epsilon$, with the N-output (Avg) mechanism demonstrating nearly identical performance. Additionally, a key advantage of our mechanism is its communication efficiency. While PMs require 64 bits per report in our experiments, the N-output mechanism achieves superior accuracy using at most 4 bits, making it 16 times more bit-efficient. Notably, whereas prior mechanisms like Duchi's, Three-output, and PMs each exhibit strengths in specific privacy regimes, the N-output mechanism demonstrates robust and superior performance across all tested settings. It is worth noting that while the RMSE differences may appear small on the scale, they correspond to significant errors when scaled back to the original data domains.

\begin{figure*}[ht!]
    \centering
    \includegraphics[width=\textwidth]{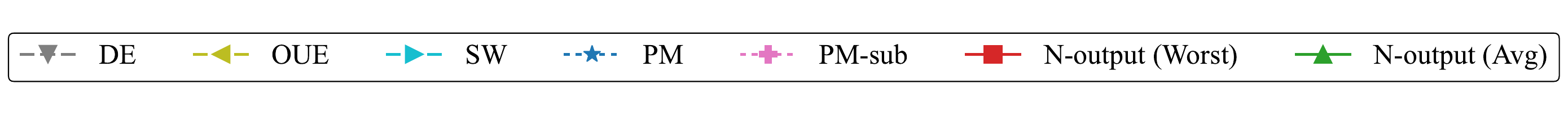}
    \begin{subfigure}[b]{0.497\textwidth}
        \centering
        \includegraphics[width=\textwidth, trim={1.0cm 1.0cm 1.0cm 1.0cm}, clip]{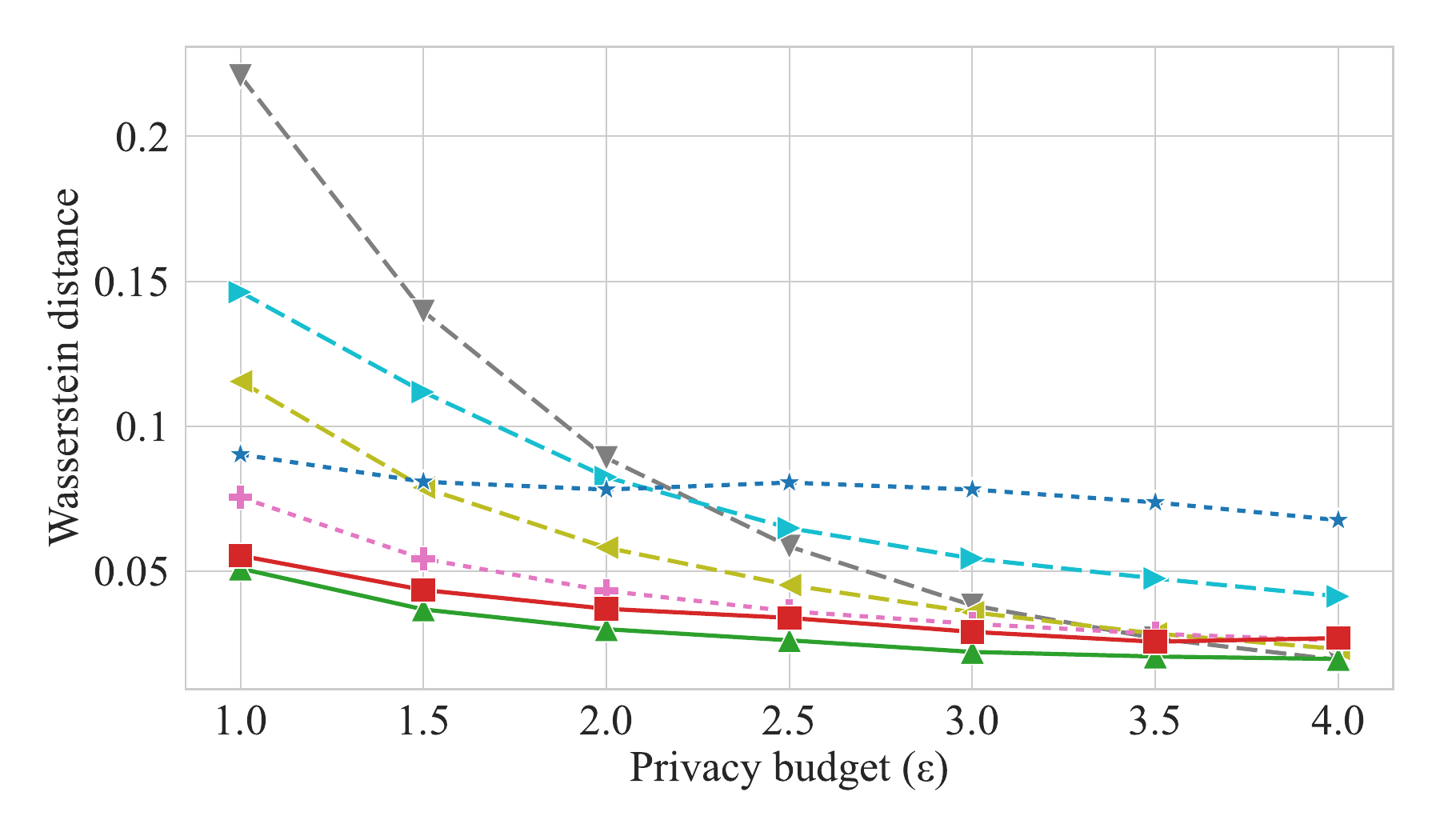} 
        \caption{Adult}
        \label{fig:adult_wass}
    \end{subfigure}
    \begin{subfigure}[b]{0.497\textwidth}
        \centering
        \includegraphics[width=\textwidth, trim={1.0cm 1.0cm 1.0cm 1.0cm}, clip]{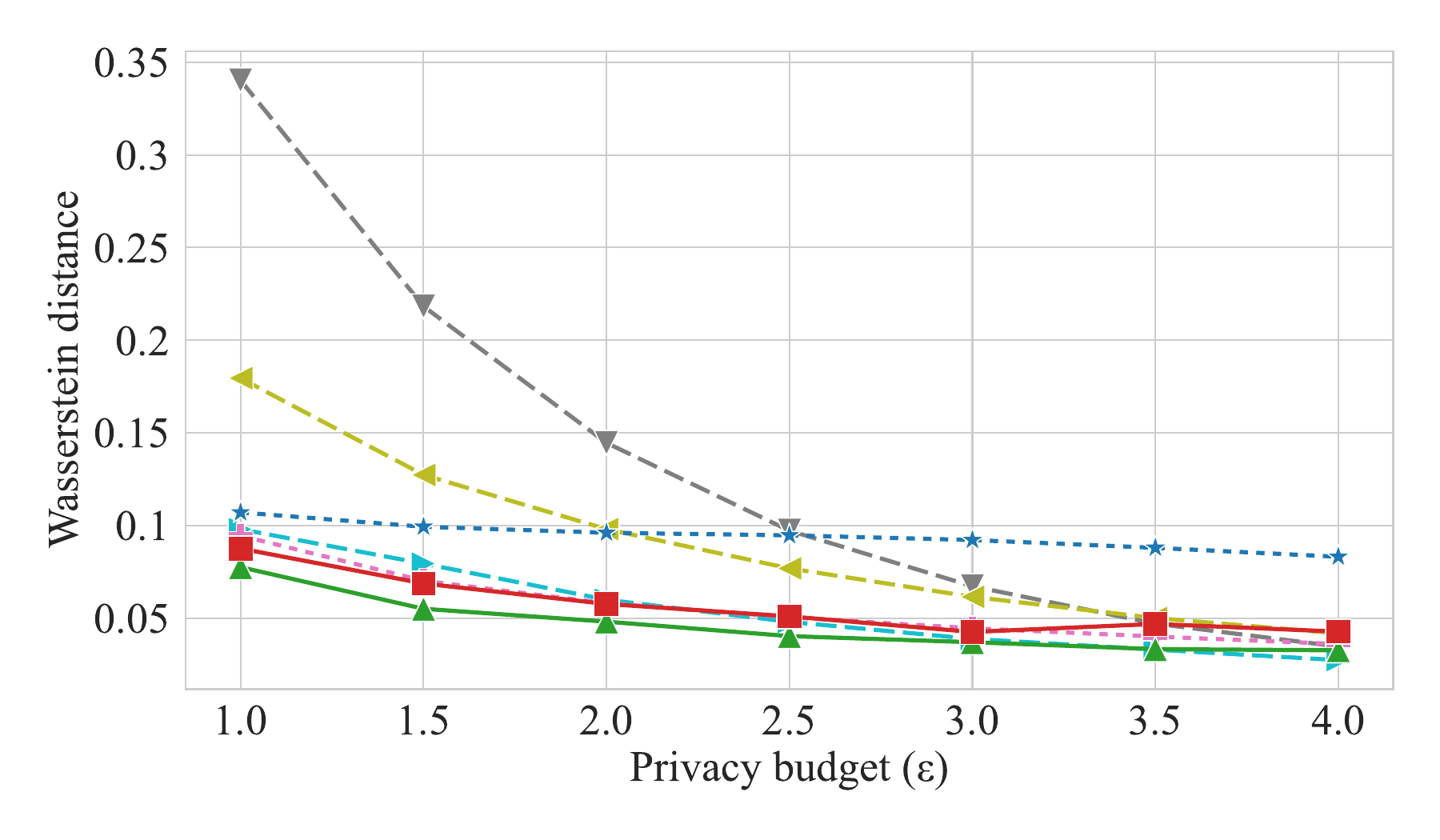} 
        \caption{Credit}
        \label{fig:credit_wass}
    \end{subfigure}
    \begin{subfigure}[b]{0.497\textwidth}
        \centering
        \includegraphics[width=\textwidth, trim={1.0cm 1.0cm 1.0cm 1.0cm}, clip]{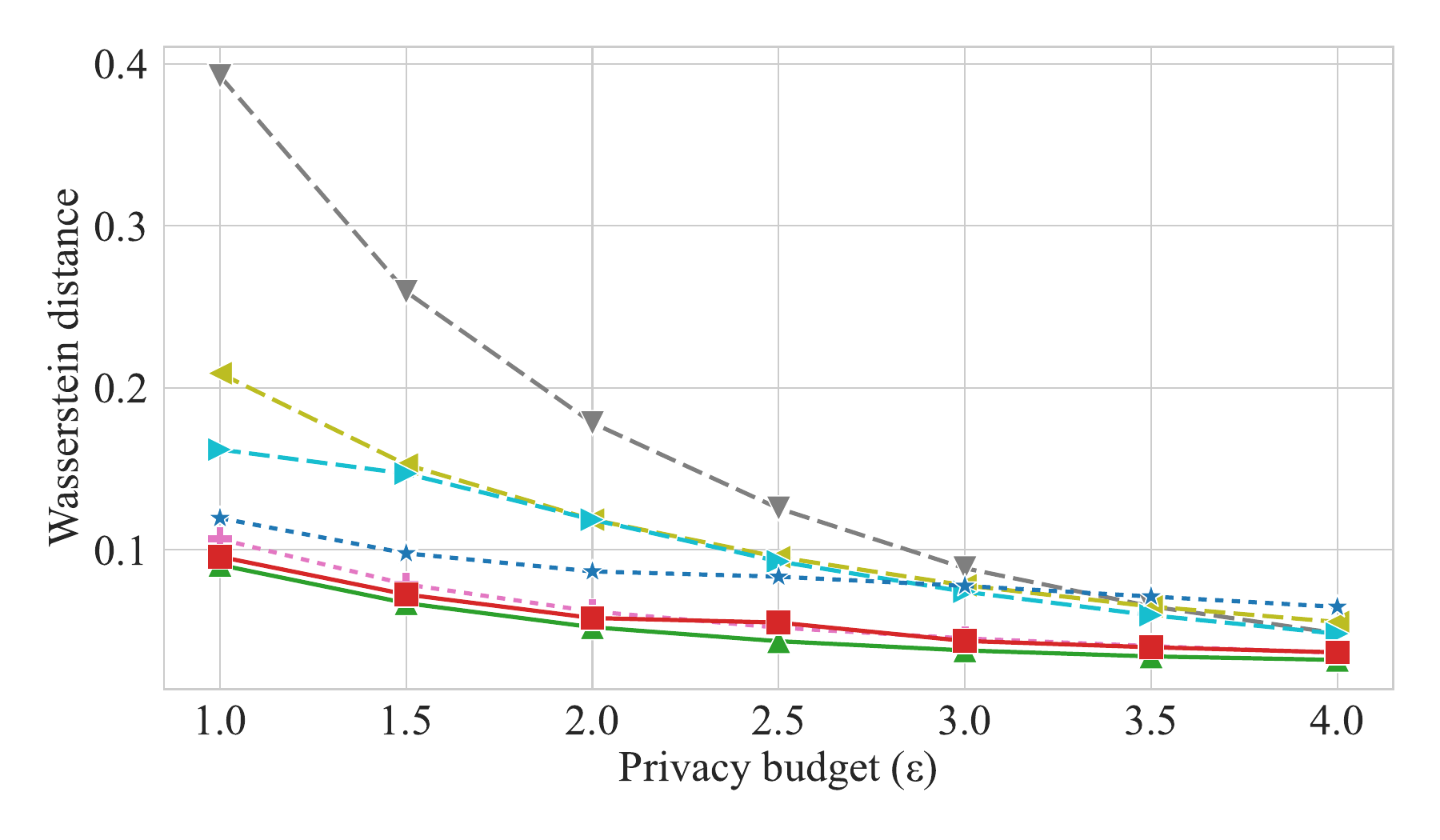} 
        \caption{Online News Popularity}
        \label{fig:ONP_wass}
    \end{subfigure}
    \begin{subfigure}[b]{0.497\textwidth}
        \centering
        \includegraphics[width=\textwidth, trim={1.0cm 1.0cm 1.0cm 1.0cm}, clip]{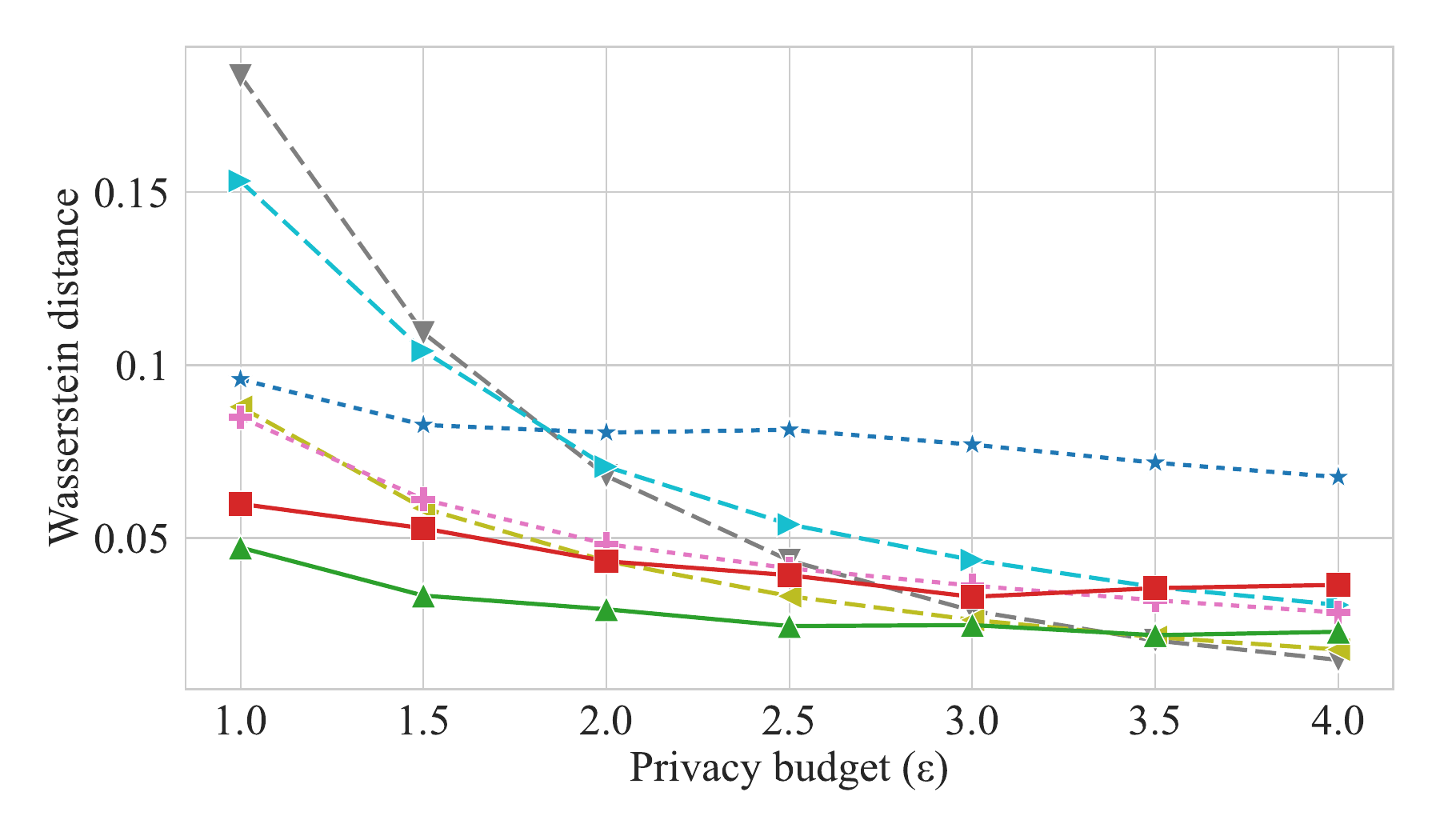} 
        \caption{Beijing Housing}
        \label{fig:BHP_wass}
    \end{subfigure}
    \caption{Distribution estimation (Wasserstein distance) on real-world datasets. Lower is better.}
    \label{fig:exp_wass}
\end{figure*}

\begin{figure*}[ht!]
    \centering
    \includegraphics[width=\textwidth]{figures/legend_dist.pdf}
    \begin{subfigure}[b]{0.497\textwidth}
        \centering
        \includegraphics[width=\textwidth, trim={1.0cm 1.0cm 1.0cm 1.0cm}, clip]{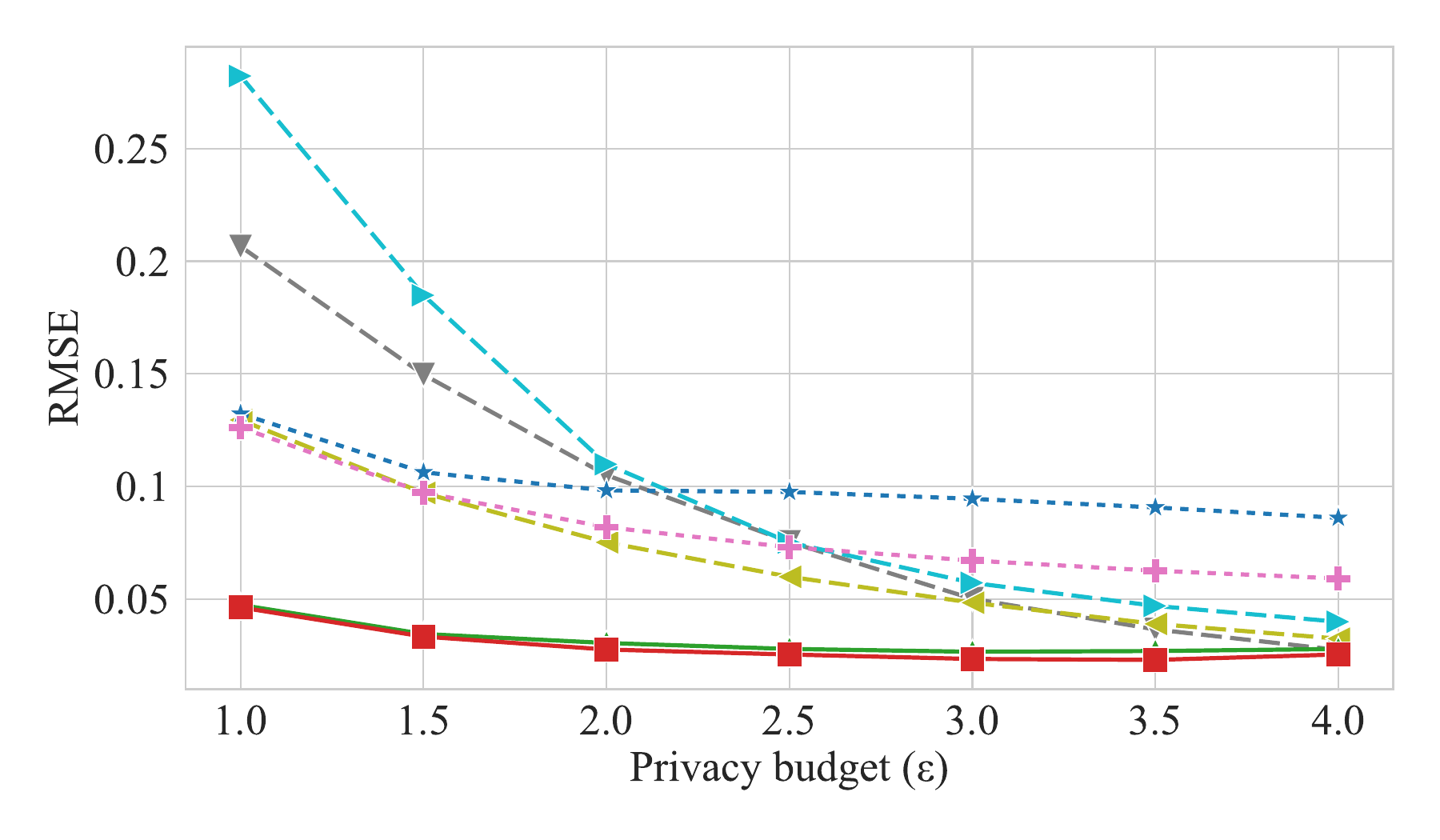} 
        \caption{Adult}
        \label{fig:adult_var}
    \end{subfigure}
    \begin{subfigure}[b]{0.497\textwidth}
        \centering
        \includegraphics[width=\textwidth, trim={1.0cm 1.0cm 1.0cm 1.0cm}, clip]{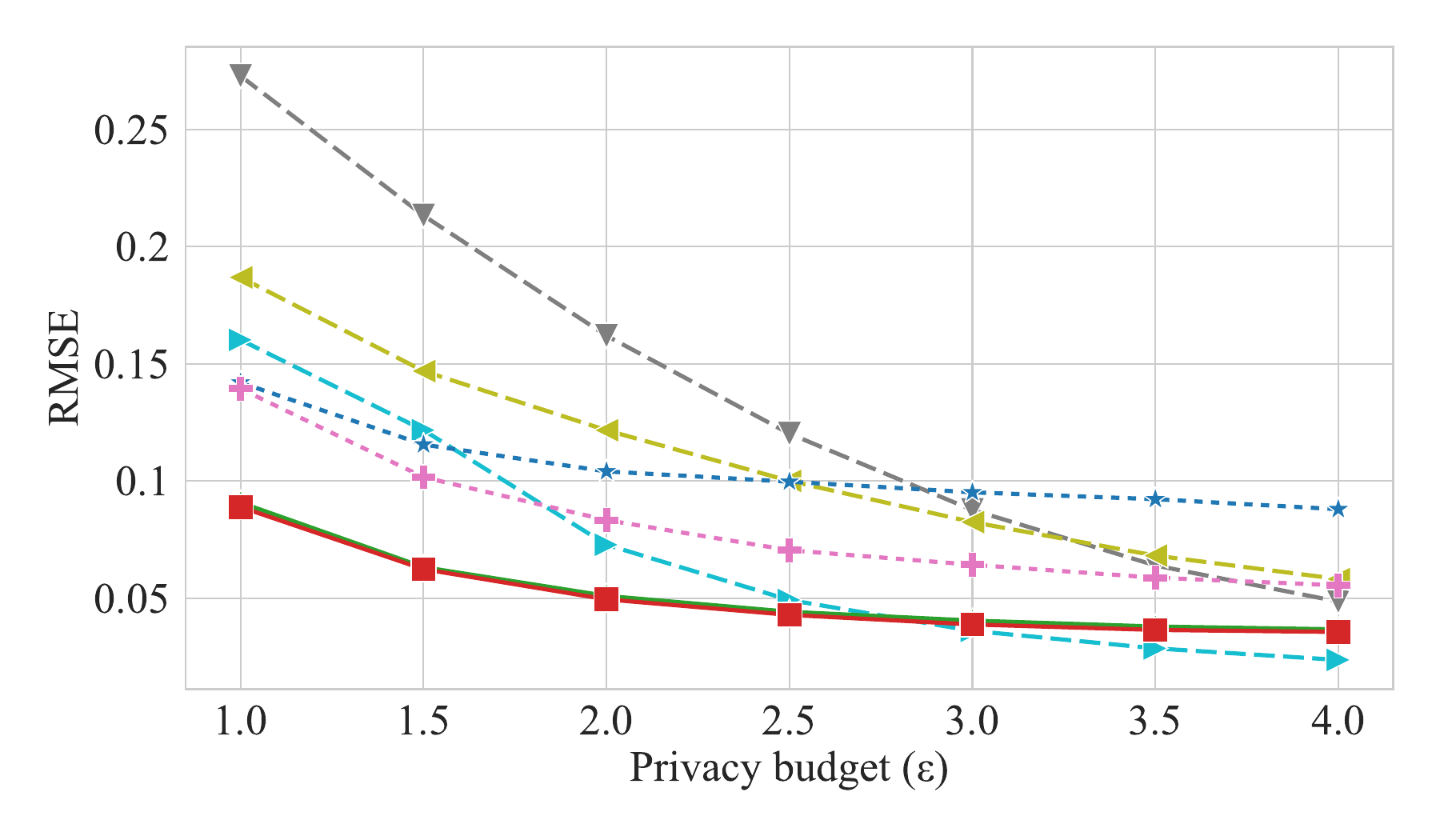} 
        \caption{Credit}
        \label{fig:credit_var}
    \end{subfigure}
    \begin{subfigure}[b]{0.497\textwidth}
        \centering
        \includegraphics[width=\textwidth, trim={1.0cm 1.0cm 1.0cm 1.0cm}, clip]{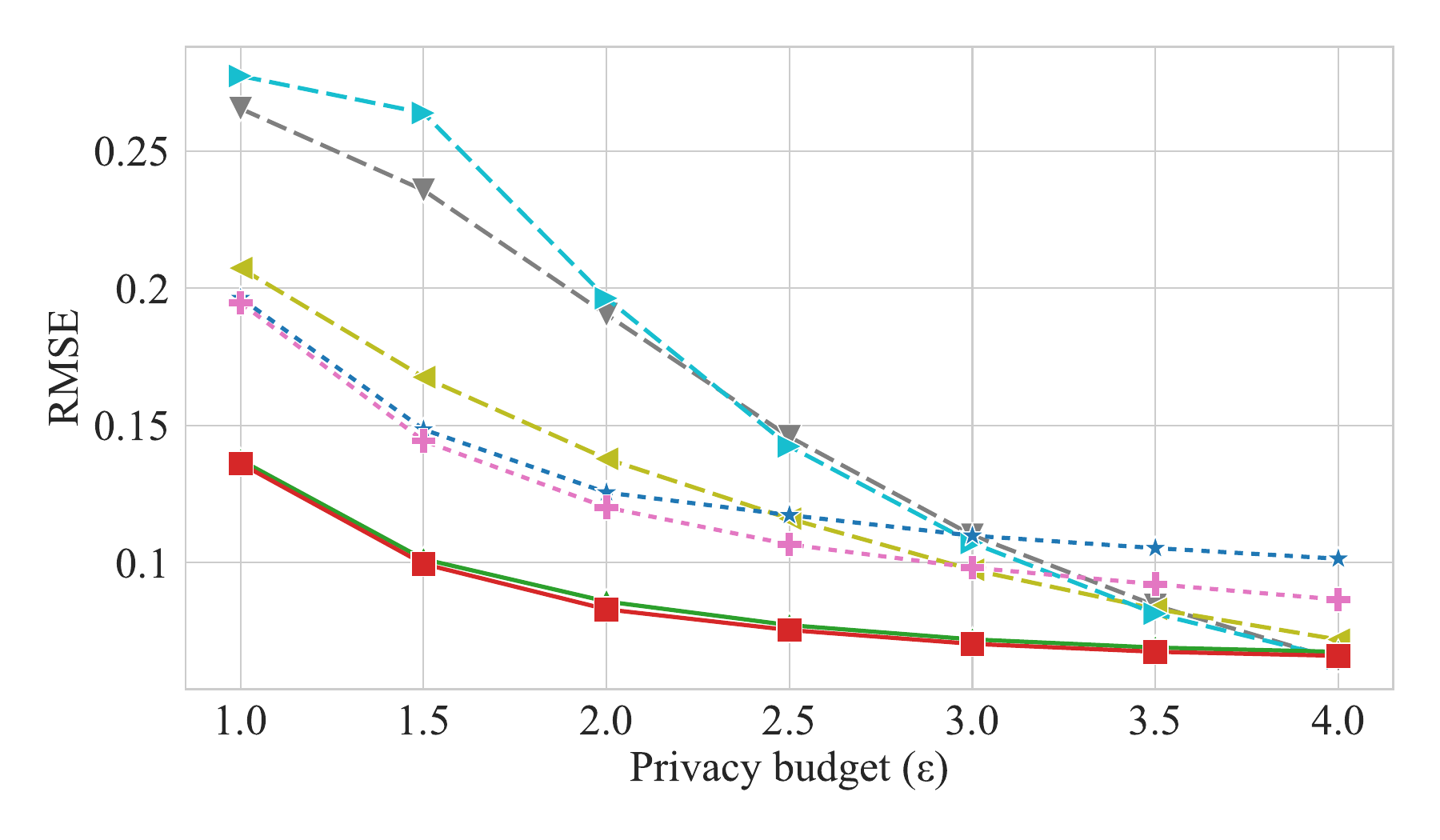} 
        \caption{Online News Popularity}
        \label{fig:ONP_var}
    \end{subfigure}
    \begin{subfigure}[b]{0.497\textwidth}
        \centering
        \includegraphics[width=\textwidth, trim={1.0cm 1.0cm 1.0cm 1.0cm}, clip]{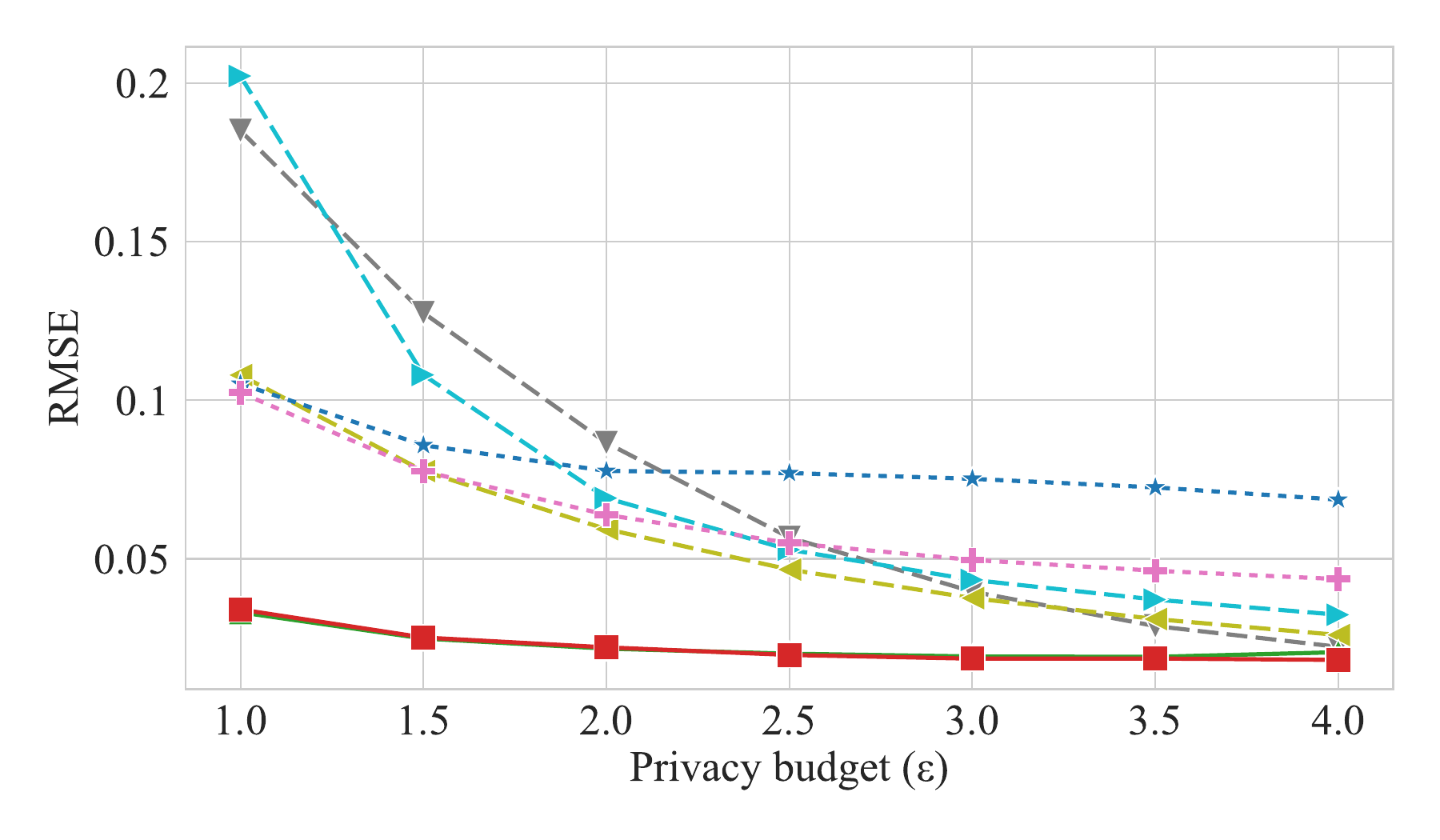} 
        \caption{Beijing Housing}
        \label{fig:BHP_var}
    \end{subfigure}
    \caption{Variance estimation (RMSE) on real-world datasets. Lower is better.}
    \label{fig:exp_var}
\end{figure*}

\subsection{Distribution estimation}~\label{sec:dist_est}
We now evaluate the accuracy of distribution estimation using the Wasserstein distance. The Wasserstein distance, also known as the Earth Mover's Distance, measures the minimum cost of transforming one distribution into another. It is formally defined as:
\begin{equation}
    W(\hat{\pi}, \pi) = \int_{-\infty}^{\infty} |F_{\hat{\pi}}(x) - F_{\pi}(x)| dx
\end{equation}
where $F_{\hat{\pi}}$ and $F_{\pi}$ are the cumulative distribution functions of the estimated and true distributions, respectively. To estimate the distribution, we partitioned the normalized domain $[-1,1]$ into $d=64$ uniform bins and applied the EM framework detailed in Section~\ref{sec:em}, using a convergence threshold of $\tau=1 \times 10^{-5}$. For this task, we include the SW~\cite{li2020estimating} mechanism, which is designed for distribution estimation, and two strong categorical baselines, DE and OUE~\cite{wang2017locally}. For graph clarity, we omit Duchi's, Three-output, and our N-output (Analytical) mechanism, as their small $N$ makes them unsuitable for this task.

The results, shown in Figure~\ref{fig:exp_wass}, demonstrate that both the N-output (Worst) and (Avg) mechanisms outperform the baseline methods in most cases. The performance gap can be substantial, with our mechanisms being up to five times more accurate in certain settings. Similar to the mean estimation task, our mechanisms hold a significant advantage at lower $\epsilon$ values. In the large $\epsilon$, the performance of PMs, SW, DE, and OUE improves, gradually closing the gap, but the N-output mechanism maintains its overall superiority. An interesting finding is that the N-output (Avg) mechanism generally yields slightly better accuracy than N-output (Worst) while N-output (Worst) is slightly better than N-output (Avg).

\subsection{Variance estimation}
Finally, we assess the performance of variance estimation. Once the underlying data distribution is reconstructed, as evaluated in Section~\ref{sec:dist_est}, it becomes possible to derive various other statistical moments. The variance is calculated as $\mathbb{E}[X^2] - \mathbb{E}[X]^2$. The term $\mathbb{E}[X^2]$ is computed as the weighted average of the squared value of each bin's center, where the weights are the estimated frequencies for each bin. For PMs and N-output mechanisms, we use the EM-based framework developed in Section \ref{sec:em}.

Figure~\ref{fig:exp_var} shows that both N-output (Worst) and (Avg) consistently outperform other mechanisms across nearly all settings. The performance gap is particularly pronounced in the high privacy regime, where our mechanisms are at least 1.5 times, and up to 4 times, more accurate than the baselines. As $\epsilon$ increases, the accuracy of continuous-output (PMs, SW) and categorical (DE, OUE) mechanisms gradually approaches that of the N-output mechanism. However, our proposed mechanisms maintain their overall superiority, with the minor exception of SW on the Credit dataset for $\epsilon \geq 3.0$.

\section{Conclusion}
In this work, we addressed a significant gap in the literature for numerical data collection under LDP: the absence of a generalized method for constructing an optimal mechanism for an arbitrary, finite output size $N$. We proposed the \textbf{N-output mechanism}, a novel and adaptive framework that fills this gap by formulating the mechanism's design as a solvable optimization problem. By developing both highly accurate numerical and theoretically grounded analytical solutions, our framework generalizes prior work into a single, flexible, and powerful mechanism. Furthermore, we advanced the field by developing a unified EM-based framework that enables distribution estimation for existing numerical LDP mechanisms, a previously unaddressed capability.

Our empirical evaluation demonstrates that the N-output mechanism achieves SOTA accuracy for mean, variance, and distribution estimation, outperforming existing mechanisms across a wide range of privacy settings. This superior performance is achieved with minimal communication overhead, making it up to 16 times more bit-efficient than continuous-output baselines. The N-output mechanism stands as a robust and practical solution, offering high utility and communication efficiency for privacy-preserving numerical data collection.
  

\cleardoublepage
\raggedbottom
\appendix
\bibliographystyle{plain}
\bibliography{Noutput}

\cleardoublepage
\section{Additional Preliminaries} \label{sec:add_prelim}
The Duchi's mechanism randomly produces one of the two possible outputs ($|\Omega|=2$). The output values are constants, which are determined by the privacy budget $\epsilon$. The probabilities of reporting these output values depend on the true value $x$ and are defined as follows:
\begin{equation*}
    Pr[y|x]=\begin{cases} 
      \frac{e^\epsilon-1}{2e^\epsilon+2}x + \frac{1}{2}, & \text{if } y=\frac{e^\epsilon +1}{e^\epsilon-1}, \\[0.5em]
      -\frac{e^\epsilon-1}{2e^\epsilon+2}x + \frac{1}{2}, & \text{if } y=-\frac{e^\epsilon +1}{e^\epsilon-1}.
   \end{cases}
\end{equation*}
Similarly, the three-output mechanism \cite{zhao2020local} randomly produces one of three possible outputs ($|\Omega|=3$).

On the other hand, PMs randomly produce a real number with $\Omega = [L(-1), R(1)]$ which is defined as \eqref{PM_def}. The initial version of PM is proposed by~\cite{wang2019collecting}, and later, Zhao et al.~\cite{zhao2020local} proposed both optimal and sub-optimal variants. The probability density function is given by:
\begin{equation} \label{PM_def}
    f[y|x]=\begin{cases} 
      e^\epsilon q & \text{if } y \in [L(x), R(x)],\ \\[0.5em]
      q & \text{if } y \in [L(-1), L(x)) \cup (R(x), R(1)]
   \end{cases}
\end{equation}
where $q, L(x)$, and $R(x)$ are defined as follows:
\begin{equation}
\begin{aligned}
    q &= \frac{\zeta(e^\epsilon-1)}{2(e^\epsilon+\zeta)^2}\\
    L(x) &= \frac{(e^\epsilon + \zeta)(x\zeta-1)}{\zeta(e^\epsilon-1)}\\
   R(x) &= \frac{(e^\epsilon + \zeta)(x\zeta+1)}{\zeta(e^\epsilon-1)}
\end{aligned}
\end{equation}
The parameter $\zeta$ shapes the characteristics of PMs. Specifically, when $\zeta=e^{\epsilon/2}$, it becomes the original PM. For $\zeta=e^{\epsilon/3}$, it becomes the sub-optimal variant, PM-sub. For PM-opt, since it involves complex formulation for $\zeta$, refer to~\cite{zhao2020local}.

\section{Deferred Proofs}
\subsection{Proof for Proposition \ref{prop:1}} \label{pf:prop:1}
\begin{proof}
For an input $x$ within an interval $[x_{j-1}, x_j]$, $\Var[Y|x]$ is downward-opening parabola. The maximum value of such a parabola on a closed interval $[x_{j-1}, x_j]$ must occur at one of the endpoints ($x_{j-1}$ or $x_j$) or at its vertex, if the vertex lies within the interval.

Within the interval $[x_{j-1}, x_j]$, $\Var[Y|x]$ can be written as:
\begin{equation}
    \Var[Y|x] = -x^2 + \frac{\sum_{i=-n}^n(P_{i, j} - P_{i,j})a_i^2 }{x_j - x_{j-1}} x + C
\end{equation}
where $C$ is constant. The vertex of this parabola is:
\begin{equation}
    \bar{x}_j = \frac{\sum_{i=-n}^n (P_{i,j} - P_{i,j-1}) a_i^2}{2(x_j - x_{j-1})}
\end{equation}
where $x_j = \sum_{i=-n}^n a_iP_{i,j}$. 
\end{proof}

\subsection{Proof for Proposition \ref{prop:2}} \label{pf:prop:2}
\begin{proof}
For any input $x$ within an interval $[x_{j-1}, x_j]$, the expected output values is $\mathbb{E}[Y|x] = \sum_{i=-n}^n a_ip_{i,j}(x)$. The equation can be expressed as:
\begin{equation} \label{E}
\mathbb{E}[Y|x] = \frac{\sum_{i=-n}^{n}(P_{i, j} - P_{i, j-1})a_i }{x_j - x_{j-1}} (x - x_{j}) + \sum_{i=-n}^{n}a_iP_{i,j}
\end{equation}

\noindent By the definition $x_j = \sum_{i=-n}^n a_i P_{i,j}$, Eq.~\eqref{E} becomes:
\begin{equation}
\begin{split}
    \mathbb{E}[Y|x] &= \frac{ x_j - x_{j-1} }{x_j - x_{j-1}} (x - x_{j}) + x_j \\
    &=x
\end{split}
\end{equation}
\end{proof}

\subsection{Proof for Proposition \ref{prop:3}} \label{pf:prop:3}
\begin{proof}
The noise variance is the expected squared error: $\Var[Y|x] = \mathbb{E}[(Y-x)^2|x]$. To minimize the variance, the mechanism must assign probability mass efficiently. It should assign higher probability $p_i(x)$ where the squared error $(a_i - x)^2$ is small, and lower probability where the error is large.

Consider a fixed output $a_i >0$. The squared error is largest when the input $x$ is farthest away, $x=-1$. To construct an optimal mechanism that keeps variance low across the domain, it must assign the lowest possible probability to $p_i(x)$ at this point. Due to the symmetric $\Omega$, the error is largest at $x=1$ for $a_i < 0$.
\end{proof}

\subsection{Proof for Proposition \ref{prop:4}} \label{pf:prop:4}
\begin{proof}
    This proof has two parts: non-negativity and sum-to-one. For any $x \in [x_{j-1}, x_j]$, $p_{i,j}(x)$ can be written as:
    \begin{equation}
    \begin{split}
        &p_{i,j}(x) = \frac{P_{i, j} - P_{i, j-1} }{x_j - x_{j-1}} (x - x_{j}) + P_{i,j}\\
         &= \left( \frac{x-x_j}{x_j - x_{j-1}} \right)P_{i,j} - \left( \frac{x-x_j}{x_j - x_{j-1}} \right)P_{i,j-1} + P_{i,j}\\
         & = \left( \frac{x_j - x}{x_j - x_{j-1}} \right)P_{i,j-1} + \left( \frac{x-x_{j-1}}{x_j - x_{j-1}} \right)P_{i,j}
    \end{split}
    \end{equation}
    Then, $p_{i,j}(x)$ is a linear interpolation of the endpoint probabilities:
    \begin{equation}
        p_{i,j}(x) = \alpha P_{i,j-1} + (1-\alpha)P_{i,j}
    \end{equation}
    where $\alpha = \frac{x_j - x}{x_j - x_{j-1}}$.
    
    \textbf{Non-negativity.} By premise, the endpoint probabilities are non-negative, $P_{i,j-1} \geq 0$ and $P_{i,j} \geq 0$. As $\alpha$ and $(1-\alpha)$ are also non-negative, $p_i(x)$ is a non-negative weighted sum of non-negative values. Thus, $p_i(x) \geq 0$ for all $x$.

    \textbf{Sum-to-one.} The sum of probabilities is:
    \begin{equation}
    \begin{split}
        \sum_{i=-n}^n p_i(x) &= \sum_{i=-n}^n (\alpha P_{i, j-1} + (1-\alpha)P_{i,j}) \\
        & = \alpha \left( \sum_{i=-n}^n P_{i,j-1} \right) + (1-\alpha)\left( \sum_{i=-n}^n P_{i,j} \right)
    \end{split}
    \end{equation}
    By premise, the endpoint probabilities sum to one: $\sum_i P_{i, j-1} = 1$ and $\sum_i P_{i, j} = 1$. Substituting these values gives
    \begin{equation}
        \sum_{i=-n}^n p_i(x) = (1-\alpha) \cdot 1 + \alpha \cdot 1 = 1   
    \end{equation}
\end{proof}

\subsection{Proof for Lemma \ref{lem:max_point}} \label{pf:lem:max_point}
    \begin{proof}
     For $j \geq 2$, we consider the following three cases regarding the maximum values:
        \begin{enumerate}
            \item If $a_{j-1} < 2x_{j-1} - a_j$, $\text{Var}_j[Y|x]$ attains its maximum value at $x=x_{j-1}$ since $x_{j-1} > x^*_j$.
                    
            \item If $2x_{j-1} - a_j \leq a_{j-1} \leq 2x_j - a_j$, $\text{Var}_j[Y|x]$ attains its maximum value at $x=x^*_j$ since $x_{j-1} \leq x^*_j \leq x_j$.
                
            \item If $2x_j - a_j < a_{j-1}$, $\text{Var}_j[Y|x]$ attains its maximum value at $x=x_i$ since $x_j < x^*_j$.
        \end{enumerate}
        To prove this lemma, we show that case (1) is not feasible and case (3) does not minimize the worst-case noise variance.
        
        Case (1): Rewriting the condition $a_{j-1} < 2x_{j-1} - a_j$, given that $x_{j-1} = \frac{a_{j-1}}{a_n}$, we obtain:
        \begin{equation*}
            a_na_{j-1} + a_na_j \leq 2a_{j-1}
        \end{equation*}
        This inequality does not hold since $a_na_{j-1} > a_{j-1}$ and $a_n a_j > a_{j-1}$. Hence, case (1) is not feasible.
        
        Case (3): Let $k \in \{2,\ldots,n\}$ and $l \in \{1, \ldots, n \}$ where $l \neq k$. We have $\frac{\partial \text{Var}_k[Y|x_k]}{\partial a_{k-1}} > 0$ and $\frac{\partial \text{Var}_l[Y|x]}{\partial a_{k-1}} > 0$. These indicate that decreasing the value of $a_{k-1}$ decreases every $\max_x \text{Var}_j[Y|x]$ for any $j \geq 2$. To ensure that the worst-case noise variance is minimized, $a_{j-1} \leq (2t-1)a_j$ must hold. Thus, $x^*_j$ is the point that maximizes $Var_j[Y|x]$ to minimize the worst-case noise variance.
    \end{proof}

\subsection{Proof for Lemma \ref{lem:worst-case}} \label{pf:lem:worst-case}
\begin{proof}
        To prove this, we assume for contradiction that $\text{Var}_n[Y|x^*_n]$ is not the worst-case noise variance and show that, under this assumption, the worst-case noise variance cannot be minimized. Suppose there exists some $j < n$ such that $\text{Var}_j[Y|x^*_j]$ is the worst-case noise variance, while $\text{Var}_n[Y|x^*_n]$ is not. This implies $\text{Var}_j[Y|x^*_j]$ > $\text{Var}_n[Y|x^*_n]$. 

        First, consider the derivative of $\text{Var}_j[Y|x^*_j]$ with respect to $a_{n-1}$ for $j \in \{2,\ldots,n-2\}$:
        \begin{equation*}
            \frac{\partial \text{Var}_i[Y|x^*_i]}{\partial a_{n-1}} = 4pa_{n-1} > 0    
        \end{equation*}
        This positive derivative indicates that decreasing $a_{n-1}$ decreases the worst-case noise variance.
        
        Next, for $j=n-1$, we investigate $\frac{\partial \text{Var}_{n-1}[Y|x^*_{n-1}]}{\partial a_{n-1}} > 0$ which is equivalent to:
        \begin{equation} \label{app:pf:eq:wc1}
            (8p + 1)a_{n-1} > (2(e^\epsilon-1)p - 1) a_{n-2}
        \end{equation}
        We consider two cases:
        \begin{enumerate}
            \item For $(2(e^\epsilon-1)p - 1) \leq 0$, inequality \eqref{app:pf:eq:wc1} is trivially satisfied since the right-hand side is non-positive.
            \item For $(2(e^\epsilon-1)p - 1) > 0$, rearranging inequality \eqref{app:pf:eq:wc1}, we obtain:
            \begin{equation*}
                \frac{8p + 1}{2(e^\epsilon-1)p - 1}a_{n-1} > a_{n-2}
            \end{equation*}
            As the coefficient of $a_{n-1}$ decrease with $p$ increases, the inequality holds if it is satisfied for $p = \frac{1}{e^\epsilon + 2n -1}$. Substituting this value of $p$ gives:
            \begin{equation*}
                \frac{e^\epsilon + 2n +7}{e^\epsilon - 2n - 1}a_{n-1} > a_{n-2} 
            \end{equation*}
            which confirms that the inequality is satisfied.
        \end{enumerate}
        These results imply that decreasing $a_{n-1}$ reduces the worst-case noise variance $\text{Var}_{n-1}[Y|x^*_{n-1}]$ for any $j \geq 2$. 
                
        Next, we examine the derivative of $\text{Var}_{n}[Y|x^*_{n}]$ with respect to $a_{n-1}$:
        \begin{align*}
            \frac{\partial \text{Var}_{n}[Y \mid x^*_{n}]}{\partial a_{n-1}} = (4p + \frac{1}{2})a_{n-1} - (t - \frac{1}{2})a_{n}
        \end{align*}
        leading to two scenarios:
        \begin{equation*}
            \begin{cases}
                \frac{\partial \text{Var}_{n}[Y \mid x^*_{n}]}{\partial a_{n-1}} > 0 & \text{if } a_{n-1} > \frac{e^\epsilon -N -1}{e^\epsilon + N + 7}a_n \\[0.5em]
                \frac{\partial \text{Var}_{n}[Y \mid x^*_{n}]}{\partial a_{n-1}} < 0 & \text{if } a_{n-1} < \frac{e^\epsilon -N -1}{e^\epsilon + N + 7}a_n
            \end{cases}
        \end{equation*}
        Let the point $z = \frac{e^\epsilon - 2n -1}{e^\epsilon + 2n + 7}a_n$.
        For $a_{n-1} > z$, $a_{n-1}$ must be reduced at least to $z$ to minimize the worst-case noise variance.
        For $a_{n-1} \leq z$, $a_{n-1}$ can be reduced until $\text{Var}_n[Y|x^*_n]$ becomes the worst-case noise variance, thereby minimizing the worst-case noise variance. Therefore, $\text{Var}_n[Y|x^*_n]$ must be the worst-case noise variance and should be minimized.
    \end{proof}

\subsection{Proof for Theorem \ref{thm:type0}} \label{pf:thm:type0}
\begin{proof}
        Suppose $a_{n-1}$ is given. Reducing $a_{n-2}$ decreases $\text{Var}_n[Y|x^*_n]$ and generally increases $\text{Var}_{n-1}[Y|x^*_{n-1}]$. Hence, the value of $a_{n-2}$ such that $\text{Var}_{n-1}[Y|x^*_{n-1}] = \text{Var}_n[Y|x^*_n]$ maintains $\text{Var}_n[Y|x^*_n]$ as the worst-case noise variance and can minimize it.
        Applying the same logic to $a_{n-3}, \ldots, a_1$, the values of $a_j$ such that $\text{Var}_n[Y|x^*_n] = \text{Var}_j[Y|x^*_j]$ for $j \in \{2, 3, \ldots, n-1\}$ can minimize the worst-case noise variance.
        
        Given $a_{j+1}, a_{j+2}$ where $j=n-2,\ldots,1$, the equation that we solve to get $a_j$ is:
        \begin{equation*}
            \text{Var}_{j+1}[Y \mid x^*_{j+1}] = \text{Var}_{j+2}[Y \mid x^*_{j+2}]
        \end{equation*}
        which is equivalent to
        \begin{equation*}
        \begin{split}
            \frac{(a_j + a_{j+1})^2}{4} - ta_ja_{j+1}= \frac{(a_{j+1} + a_{j+2})^2}{4} - ta_{j+1}a_{j+2}
        \end{split}
        \end{equation*}
        Rewriting for $a_j$, we have:
        \begin{equation*}
        a_j^2 + 2a_{j+1}(1 - 2t)a_j + a^2_{j+1} -(a_{j+1} + a_{j+2})^2 + 4ta_{j+1}a_{j+2} = 0
        \end{equation*}
        Solving this with the quadratic formula, we have:
        \begin{equation} \label{eq:ak}
            a_j = a_{j+2} \quad \text{or} \quad a_j = (4t - 2)a_{j+1} - a_{j+2}
        \end{equation}
        Obviously, $a_j = (4t - 2)a_{j+1} - a_{j+2}$. By this recurrence relation, $a_j$ can be represented by $a_{n-1}$ and $a_n$. We define:
        \begin{equation} \label{eq:ak2}
            a_j = T_j a_{n-1} + Q_j a_n
        \end{equation}
        where $T_j$ and $Q_j$ are coefficient of $a_{n-1}$ and $a_n$, respectively. When $j=n$, $T_n = 0$, $Q_n =1$. When $j=n-1$, $T_{n-1}=1$, $Q_{n-1} = 0$. Plugging \eqref{eq:ak2} to \eqref{eq:ak}, the sequence $T_j$ and $Q_j$ are defined as:
        \begin{gather*}
            T_j = (4t - 2)T_{j+1} - T_{j+2} \quad \text{for } j \in \{1, 2, \ldots, n-2\}\\[0.5em]
            Q_j = (4t - 2)Q_{j+1} - Q_{j+2} \quad \text{for } j \in \{1, 2, \ldots, n-2\}
        \end{gather*}
        Then, $\text{Var}_n[Y|x^*_n]$ can be represented as:
        \begin{equation*}
            \text{Var}_n[Y \mid x^*_n] = \frac{(a_{n-1} + a_n)^2}{4} - a_{n-1} + 2p\sum_{i=1}^n(T_i a_{n-1} + Q_i a_n)^2
        \end{equation*}
        since it is convex quadratic $a_{n-1}$,
        \begin{equation*}
            a_{n-1} = \frac{(2t - 1) - 8p \sum_{i=1}^n T_i Q_i}{1 + 8p \sum_{i=1}^n T_i^2} a_n
        \end{equation*}
        minimizes $\text{Var}_n[Y|x^*_n]$ maintaining it as the worst-case noise variance only if
        \begin{gather*}
        a_j < a_{j+1} \quad \text{for } j \in \{0, 1, \ldots, n-1\}\\
        \text{Var}_n[Y \mid x^*_n; p_0] \geq \text{Var}_1[Y \mid x^*_1; p_0]
    \end{gather*}
    \end{proof}

\subsection{Proof for Theorem \ref{thm:type1}} \label{pf:thm:type1}
\begin{proof}
        Extending the proof of Theorem \ref{thm:type0}, we maintain that $\text{Var}_j[Y|x^*_j] = \text{Var}_n[Y|x^*_n]$ for $j \in \{1, 2, \ldots, n-1\}$. This ensures that $\text{Var}_n[Y|x^*_n]$ remains the worst-case noise variance and minimizes it. 
    Therefore, we need to solve the following equation:
    \begin{equation*}
        \text{Var}_j[Y|x^*_j] = \text{Var}_{j+1}[Y|x^*_{j+1}] \quad \text{for } j \in \{1, 2, \ldots, n-1\}
    \end{equation*}
    Solving $\text{Var}_1[Y|x^*_1] = \text{Var}_2[Y|x^*_2]$:
    \begin{equation*}
    a_1 = 
        \begin{cases}
            \frac{a_2}{4t-1} \quad \text{for } N=2n\\
            \frac{a_2}{4t-2} \quad \text{for } N=2n+1
        \end{cases}
    \end{equation*}
    Let $a_j = C_ja_{j+1}$, then we can define $C_1$ as:
    \begin{equation} \label{eq:c1}
        C_1 = \begin{cases}
             \frac{1}{4t-1} & \text{if } N=2n \\
             \frac{1}{4t-2} & \text{if } N=2n+1 \\
        \end{cases}
    \end{equation}
    Next, we solve $\text{Var}_j[Y|x^*_j] = \text{Var}_{j+1}[Y|x^*_{j+1}]$:
    \begin{equation*}
        (a_{j-1} + a_j)^2 -4ta_{j-1}a_j = (a_j + a_{j+1})^2 - 4ta_ja_{j+1}
    \end{equation*}
    Rewriting this using $a_{j-1} = C_{j-1}a_j$, we get:
    \begin{align*}
        (C^2_{j-1} + 2C_{j-1} -4tC_{j-1})a^2_j + (4ta_{j+1} - 2a_{j+1})a_j - a^2_{j+1} = 0 
    \end{align*}
    Solving this for $a_j$, we get:
    \begin{equation*}
        a_j = \frac{1-2t \pm \sqrt{C^2_{j-1} + 2C_{j-1} -4tC_{j-1} + (2t-1)^2}}{C^2_{j-1} + 2C_{j-1} -4tC_{j-1}}a_{j+1}
    \end{equation*}
    Since we require $0 < C_j < 1$, we have:
    \begin{equation*}
        C_{j} = \frac{1-2t + \sqrt{C^2_{j-1} + 2C_{j-1} -4tC_{j-1} + (2t-1)^2}}{C^2_{j-1} + 2C_{j-1} -4tC_{j-1}}
    \end{equation*}
    \end{proof}

\subsection{Proof for Lemma \ref{lem:var_infty}} \label{pf:lem:var_infty}
\begin{proof}
    Given $N$, the worst-case noise variance of the N-output mechanisms decreases when $\epsilon$ increases. 
    Thus, we need to show that:
    \begin{equation*}
        \lim_{\epsilon\to\infty} 2\sum_{i=1}^{n}a_i^2p + \frac{(a_{n-1} + a_n)^2}{4} - a_{n-1} = \frac{1}{(N-1)^2}
    \end{equation*}
    First, we examine the limit of the term $\lim_{\epsilon\to\infty} 2\sum_{i=1}^{n}a_i^2p$.
    \begin{equation*}
        \lim_{\epsilon\to\infty} 2\sum_{i=1}^{n}a_n^2p = \frac{2n(e^\epsilon+N-1)}{(e^\epsilon -1)^2} = 0
    \end{equation*}
    Since $ \sum_{i=1}^{n}a_n^2p > \sum_{i=1}^{n}a_i^2p > 0$, by the squeeze theorem, we have:
    \begin{equation*}
        \lim_{\epsilon\to\infty} 2\sum_{i=1}^{n}a_i^2p = 0
    \end{equation*}
    Next, we need to find $\lim_{\epsilon \to \infty} a_{n-1}$.
    Since $a_{n-1} = C_{n-1}a_n$, we need to determine $\lim_{\epsilon \to \infty} C_{n-1}$.
    \begin{equation} \label{infty_Ck}
        \begin{split}
        \quad  & \lim_{\epsilon\to\infty} C_{i+1}\\
        = &\lim_{\epsilon\to\infty} \frac{-1 + \sqrt{(C_i - 1)^2}}{C^2_i - 2C_i} \quad (\because \lim_{\epsilon\to\infty} t)\\
        = &\lim_{\epsilon\to\infty} \frac{1}{2 - C_i} \quad (\because C_i < 1)
        \end{split}
    \end{equation}
    By \eqref{eq:c1} and \eqref{infty_Ck}, we have:
    \begin{equation*}
        \begin{cases}
            \lim_{\epsilon\to\infty} C_i = \frac{2i - 1}{2i + 1} & \text{ if } N = 2n \\
            \lim_{\epsilon\to\infty} C_i = \frac{i}{i + 1} & \text{ if } N = 2n + 1
        \end{cases}
    \end{equation*}
    
    For $N=2n$,
    \begin{equation*}
        \lim_{\epsilon\to\infty} a_{n-1} = \lim_{\epsilon\to\infty} C_{n-1}a_n = \frac{2n - 3}{2n - 1} = \frac{N-3}{N-1}
    \end{equation*}
    
    For $N=2n+1$,
    \begin{equation*}
        \lim_{\epsilon\to\infty} a_{n-1} = \lim_{\epsilon\to\infty} C_{n-1}a_n = \frac{n-1}{n} = \frac{N-3}{N-1}
    \end{equation*}
    Therefore,
    \begin{equation*}
        \lim_{\epsilon\to\infty} 2\sum_{i=1}^{n}a_i^2p + \frac{(a_{n-1} + a_n)^2}{4} - a_{n-1} = \frac{1}{(N-1)^2}
    \end{equation*}
\end{proof}

\section{Optimization algorithm} \label{sec:optimization_algorithm}
\begin{itemize}
    \item For $N=2$, the optimal solution is easily derived with $p=\frac{1}{e^\epsilon + 1}$.
    \item For $N=3$, the process involves optimizing $p_0$ by solving the following equation:
    \begin{equation*}
        p_0 = \argmin_{p_0 \in [0, p]} \text{Var}[Y|x^*_1]
    \end{equation*}
\end{itemize}
In Algorithm \ref{alg:optimize}, we provide an optimization process designed to find an optimal N-output mechanism. The algorithm specifically addresses cases where $N\geq4$.

(Lines 1-21) The algorithm iteratively adjusts $N$, starting from $N=4$, to find the optimal set of parameters $\Omega$ and $p$ that minimize the worst-case noise variance. For convenience, the algorithm is designed to determine $p$ instead of $p_0$. (Line 2) An empty list $\mathcal{S}$ is initialized to store tuples of the form ($\Omega, p, max\text{Var}$). (Lines 4-5) For each iteration, use Theorem $\ref{thm:type0}$ with $p=\frac{1}{e^\epsilon +N-1}$ to obtain initial configuration $\Omega_0$. (Lines 6-7) If the condition $a_j < a_{j+1}$ does not hold for every $j$, it indicates that $N$ is too large, so the iteration stops. (Line 20) The optimal configuration is then found by selecting the tuple with the smallest $max\text{Var}$ from $\mathcal{S}$. (Lines 8-14) If the condition of Line 8 holds, Theorem \ref{thm:type0} provides the optimal configuration for the given $N$. (Lines 15-18) If the condition does not hold, Theorem \ref{thm:type1} is used to provide the optimal configuration for the given $N$.

\begin{algorithm} [t]
\caption{Optimization algorithm for minimizing the worst-case noise variance} \label{alg:optimize}
\begin{algorithmic}[1] 
\Require $\epsilon$
\Ensure $\Omega, p$
\State $N \gets 4$
\State $\mathcal{S} \gets []$
\While{True}
    \State $p = \frac{1}{e^\epsilon + N - 1}$
    \State Let $\Omega_0$ be the result from Theorem \ref{thm:type0} with $p$
    \If{there exists an $i$ such that $a_j >= a_{j+1}$}
        \State \textbf{break}
    \ElsIf{$Var_n[Y|x^*_n; \Omega_0, p] < Var_1[Y|x^*_1; \Omega_0, p]$}
        \If{$N = 2n$}
            \State $\mathcal{S}$.append(($\Omega_0$, $p$, $Var_n[Y|x^*_n; \Omega_0, p]$))
        \Else
            \State Let $\Omega_1$, $p_0$ be the result from \eqref{eq:opt_p0} and Theorem \ref{thm:type0}.
            \State $p \gets \frac{1 - p_0}{e^\epsilon + 2n -1}$
            \State $\mathcal{S}$.append(($\Omega_1$, $p$, $Var_n[Y|x^*_n; \Omega_1, p]$))
        \EndIf
    \Else
        \State $p \gets \frac{1}{e^\epsilon + N - 1}$
        \State Let $\Omega_2$ be the result from Theorem $\ref{thm:type1}$ with $p$
        \State $\mathcal{S}$.append(($\Omega_2$, $p$, $Var_n[Y|x^*_n; \Omega_1, p]$))
    \EndIf
    \State $N \gets N + 1$
\EndWhile
\State Find $\Omega$, $p$ that minimize $max$Var from $\mathcal{S}$

\State \Return $\Omega$, $p$
\end{algorithmic}
\end{algorithm}

\subsection{Derivation of probability transition matrix} \label{app:em_anal}
 Similar to $\Theta_{i, j}^{\mathcal{P}}$, we derive a general probability transition matrix for discrete case $\Theta_{i, j}^{\mathcal{N}}$ can be expressed as:
\begin{equation} \label{app:eq:theta_n}
    \Theta_{i, j}^{\mathcal{N}} = \frac{1}{l} \int_{R_i} \text{Pr}[y = a_j | x]  \ dx
\end{equation}
i) For $|j| \geq 1$, equation \eqref{app:eq:theta_n} can be represented as:
\begin{equation} \label{app:eq:theta_n_2}
\begin{split}
    \Theta_{i, j}^{\mathcal{N}} &= \frac{1}{l} \int_{R_i} p \ dx + \frac{1}{l} \int_{R_i} \text{Pr}[y=a_j | x] - p \ dx \\
    &= p + \frac{1}{l} \int_{R_i} \text{Pr}[y=a_j | x] - p \ dx
\end{split}
\end{equation}
\quad For $|j| \geq 2$, the second term of \eqref{app:eq:theta_n_2} is
\begin{equation*}
\begin{split}
    &\int_{R_i} \text{Pr}[y=a_j | x] - p \ dx \\
    &= \sum_{k=j}^{j+1} \mathbb{I}_{r_{i-1} \ \leq \  x_k \bigwedge r_i \ \geq \ x_{k-1}} \int_{\max \{ x_{k-1}, r_{i-1} \}} ^ { \min \{ x_k, r_i \} } p_{j,k}(x) - p \ dx
\end{split}
\end{equation*}
\quad For $j=-1$,  the second term of \eqref{app:eq:theta_n_2} is
\begin{equation*}
\begin{split}
    &\int_{R_i} \text{Pr}[y=a_{-1} | x] - p \ dx \\
    &= \sum_{k=-1}^1 \mathbb{I}_{r_{i-1} \ \leq \  x_{k} \bigwedge r_i \ \geq \ x_{k-1}} \int_{\max \{ x_{k-1}, r_{i-1} \}} ^ { \min \{ x_{k}, r_i \} } p_{j,k}(x) - p \ dx
\end{split}
\end{equation*}
\quad For $j=1$,  the second term of \eqref{app:eq:theta_n_2} is
\begin{equation*}
\begin{split}
    &\int_{R_i} \text{Pr}[y=a_{1} | x] - p \ dx \\
    &= \sum_{k=0}^2 \mathbb{I}_{r_{i-1} \ \leq \  x_{k} \bigwedge r_i \ \geq \ x_{k-1}} \int_{\max \{ x_{k-1}, r_{i-1} \}} ^ { \min \{ x_{k}, r_i \} } p_{j, k}(x) - p \ dx
\end{split}
\end{equation*}

ii) For $j=0$, equation \eqref{app:eq:theta_n} can be represented as:
\begin{equation*}
    \Theta_{i, j}^{\mathcal{N}} = p_0 + \frac{1}{l} \int_{R_i} \text{Pr}[y=a_0 | x] - p_0 \ dx
\end{equation*}
\quad where the second term is defined as:
\begin{equation*}
\begin{split}
    &\int_{R_i} \text{Pr}[y=a_{0} | x] - p_0 \ dx \\
    &= \sum_{k=0}^1 \mathbb{I}_{r_{i-1} \ \leq \  x_{k} \bigwedge r_i \ \geq \ x_{k-1}} \int_{\max \{ x_{k-1}, r_{i-1} \}} ^ { \min \{ x_{k}, r_i \} } p_{j,k}(x)-p_0 \ dx
\end{split}
\end{equation*}

\end{document}